\DeclareMathOperator{\body}{\mathit{Body}}
\DeclareMathOperator{\head}{\mathit{Head}}
\DeclareMathOperator{\mynot}{\mathit{not}}
\DeclareMathOperator{\valop}{\mathit{val}}
\newcommand{\val}[2]{\valop_{#1}(#2)}
\DeclareMathOperator{\comp}{{\rm COMP}}
\newcommand{\I}{\mathcal{I}}
\newcommand{\J}{\mathcal{J}}
\newcommand{\PP}{\mathcal{P}}
\newcommand{\SSS}{\mathcal{S}}
\newcommand{\HH}{\mathcal{H}}
\newcommand{\num}[1]{\overline{#1}}
\newcommand{\anthem}{\textsc{anthem}}
\newcommand{\anthemp}{\textsc{anthem-p2p}}
\newcommand{\cmodels}{\textsc{cmodels}}
\newcommand{\gringo}{\textsc{gringo}}
\newcommand{\vampire}{\textsc{vampire}}
\newcommand{\boldd}{\mathbf{d}}
\newcommand{\bolde}{\mathbf{e}}
\newcommand{\boldr}{\mathbf{r}}
\newcommand{\boldt}{\mathbf{t}}
\newcommand{\boldU}{\mathbf{U}}
\newcommand{\uc}{\widetilde{\forall}}
\newcommand{\Pos}[2]{\mathit{Pos}({#2},#1)}
\def\mg{mini-{\sc gringo}}
\def\modelsht
\def\dar{\downarrow}
\def\ar{\leftarrow}
\def\rar{\rightarrow}
\def\lrar{\leftrightarrow}
\def\no{\emph{not}}
\def\beq{\begin{equation}}
\def\eeq#1{\label{#1}\end{equation}}
\def\ba{\begin{array}}
\def\ea{\end{array}}
\newtheorem{theorem}{Theorem}
\newtheorem{proposition}{Proposition}
\newtheorem{lemma}{Lemma}
\begin{document}
\title{Locally Tight Programs}

 \author[Jorge Fandinno, Vladimir Lifschitz and Nathan Temple]{
   Jorge Fandinno\\
   University of Nebraska Omaha, USA
   \and
   Vladimir Lifschitz and Nathan Temple\\
   University of Texas at Austin, USA
 }
\maketitle

\begin{abstract}
  Program completion is a translation from the language of logic programs
  into the language of first\nobreakdash-order theories.  Its original
  definition has been extended to programs that include integer arithmetic,
  accept input, and distinguish between output predicates and auxiliary
  predicates.  For tight programs, that generalization of completion is
  known to match the stable model semantics, which is the basis of answer set
  programming.  We show that the tightness condition in
  this theorem can be replaced by a less restrictive ``local tightness''
  requirement.  From this fact we conclude that the proof assistant
  \anthemp\ can be used to verify equivalence between locally tight programs.
\end{abstract}

\section{Introduction}

Program completion \cite{cla78} is a translation from the language of
logic programs into the language of first\nobreakdash-order theories.  If, for
example, a program defines the predicate $q/1$ by the rules
\beq\ba l
q(a),\\
q(X) \ar p(X),
\ea\eeq{pr1}
then the completed definition of~$q/1$ is
\beq
\forall X(q(X)\lrar X=a \lor p(X)).
\eeq{th1}
Program~(\ref{pr1}) and formula~(\ref{th1}) express, in different
languages, the same idea: the set~$q$ consists of~$a$ and all elements of~$p$.

Fran\c{c}ois Fages \citeyear{fag94} identified a class of programs for
which the completion semantics is equivalent to the stable model
semantics.  Programs in this class are now called
\emph{tight}.  This relationship between completion and stable models plays
an important role in the theory of answer set programming (ASP).  It is
used in the design of the answer set solvers \cmodels~\cite{lie04},
{\sc assat}~\cite{lin04} and {\sc clasp}~\cite{geb07a}; and in the design of the proof
assistants \anthem\ \cite{fan20} and \anthemp\ \cite{fan23a}.

Some constructs that are common in ASP programs are not covered
by Clark's definition of program completion, and that definition had to be
generalized.  First, the original definition does not cover programs that
involve integer arithmetic.  To extend it to such programs, we
distinguish, in the corresponding first\nobreakdash-order formula, between
``general'' variables on the one hand, and variables for integers on the other
\cite[Section~5]{lif19}.  This is useful because function symbols
in a first\nobreakdash-order language are supposed to represent total functions, and
arithmetical operations are not defined on symbolic constants.

Second, in ASP programs we often find auxiliary predicate symbols, which
are not considered part of its output.
Consider, for instance, the program
\begin{align}
  \emph{in\/}(P,R,0) &\ar \emph{in\/}_0(P,R),
\label{rooms1}
\\
\emph{in\/}(P,R,T+ 1) &\ar \emph{goto\/}(P,R,T),
\label{rooms2}
\\
\{\,\emph{in\/}(P,R,T+ 1)\,\} &\ar \emph{in\/}(P,R,T)
                               \land T =  0\,..\,h- 1,
\label{rooms3}
\\
&\ar \emph{in\/}(P,R_1,T)\land \emph{in\/}(P,R_2,T)\land R_1 \neq R_2,
\label{rooms4}
\\
\emph{in\_building\/}(P,T) &\ar \emph{in\/}(P,R,T),
\label{rooms5}
\\
&\ar \no\ \emph{in\_building\/}(P,T)\land \emph{person\/}(P) \land T =  0\,..\,h,
\label{rooms6}
\end{align}%
%
which will be used as a running example.
It describes the effect of an action---walking from one room to another---on
the location of a person.  We can
  read $\emph{in\/}(P,R,T)$ as ``person~$P$ is in room~$R$ at time~$T$,'' and
  $\emph{goto\/}(P,R,T)$ as ``person~$P$ goes to room~$R$ between times~$T$
  and~$T+1$.'' The placeholder~$h$ represents the horizon---the length of
  scenarios under consideration.  Choice rule~(\ref{rooms3})
  encodes the commonsense law of inertia for this dynamic system:
  in the absence of
  information to the contrary, the location of a person at time~$T+1$ is
  presumed to be the same as at time~$T$.  To run this program, we
  need to specify the
value of~$h$ and the input predicates \emph{person\/}/1, $\emph{in\/}_0/2$
and \emph{goto\/}/3.  The output is represented by the atoms in the stable
model that contain \emph{in\/}/3.  The predicate symbol
\emph{in\_bulding\/}/2 is auxiliary; a file containing
program~(\ref{rooms1})--(\ref{rooms6})
may contain a directive that causes the solver not to display the atoms
containing that symbol.

For this program and other programs containing auxiliary symbols, we
would like the
completion to describe the ``essential parts'' of its stable models,
with all auxiliary atoms removed.  This can be accomplished by replacing
the auxiliary symbols in Clark's completion by existentially quantified
predicate variables \cite[Section~6.2]{fan20}.  The version of
completion defined in that paper covers both integer arithmetic and
the distinction between the entire stable model and its essential part.
Standard models of the completion, in the sense of that paper,
correspond to the essential parts of the program's stable models if
the program is tight
\cite[Theorem~2, reproduced in Section~\ref{ssec:tight} below]{fan20}.

That theorem is not applicable, however, to
program~(\ref{rooms1})--(\ref{rooms6}),
because this program is not tight.  Tightness is defined as the absence
of certain cyclical dependencies between predicate symbols (see
Section~\ref{ssec:tight} for details).  The two
occurrences
of the predicate \emph{in\/}/3 in rule~(\ref{rooms3})  create a dependency
that is not allowed in a tight program.

In this article we propose the definition of a ``locally tight'' program and
show that the above-mentioned theorem by Fandinno et al.~can be extended to
programs satisfying this less restrictive condition.  Local tightness is
defined in terms of dependencies between ground atoms.  The last argument of
\emph{in\/}/3 in the head of rule~(\ref{rooms1})--(\ref{rooms6}) is $T+1$,
and the last
argument of \emph{in\/}/3 in the body is~$T$; for this reason, the
dependencies between ground atoms corresponding to this rule are not cyclic.
ASP encodings of dynamic systems \cite[Chapter~8]{lif19a}
provide many examples of this kind.
The tightness condition prohibits pretty much all uses of recursion in a
program; local tightness, on the other hand, expresses the absence of
``nonterminating'' recursion.

The result of this article shows, for example,
that the completion of program~(\ref{rooms1})--(\ref{rooms6})
correctly describes the essential parts of its stable models.

Section~\ref{sec:background} is a review of definitions and notation
related to ASP programs with arithmetic.  After defining
program completion in Section~\ref{sec:completion}, we introduce
locally tight programs and state a theorem describing the
relationship between stable models and completion under a local tightness
assumption (Section~\ref{sec:main}). In Section~\ref{sec:equiv}, that theorem
is used to justify the usability of the proof assistant \anthemp\ for
verifying equivalence between programs in some cases that are not allowed
in the original publication~\cite{fan23a}.  Proofs of the theorems stated in
these sections are based on a lemma of more general nature, which relates
stable models to completion for a class of many\nobreak-sorted
first\nobreak-order theories.  This Main Lemma is stated in
Section~\ref{sec:mainlemma} and proved in
Section~\ref{sec:prooflemma}.  Proofs of the theorems are given in
Sections~\ref{ssec:taustar.prop}--\ref{sec:proofthm3}.
In two appendices, we review terminology and
notation related to many-sorted formulas and their stable models.

\section{Background} \label{sec:background}

This review follows previous pubications on ASP programs with arithmetic
\cite{lif19,fan20,fan23a}.

\subsection{Programs} \label{ssec:programs}

The programming language mini\nobreakdash-\gringo, defined in this section, is a subset of the input language of the grounder
\gringo~\cite{geb07b}.
Most constructs included in this language are available also in the input language of the solver {\sc dlv}~\cite{leo06a}.
The description of mini\nobreakdash-\gringo\ programs below uses ``abstract   syntax,'' which disregards some details related to representing programs by strings of ASCII characters~\cite{geb15}.

\subsubsection{Syntax} \label{sssec:syntax}

We assume that three countably infinite sets of symbols are selected:
\emph{numerals}, \emph{symbolic constants}, and \emph{variables}.
We assume that a 1-1 correspondence between numerals
and integers is chosen; the numeral corresponding to an integer~$n$ is
denoted by $\num n$. 
 In examples, we take the liberty to drop the
 overline in numerals.  This convention is used, for instance, in
 rule~(\ref{rooms3}), which should be written, strictly speaking, as
$$
\{\, \emph{in\/}(P,R,T+\num 1) \,\} \ar \emph{in\/}(P,R,T) \land
T = \num 0\,..\,h-\num 1.
$$

\emph{Precomputed terms} are numerals and symbolic constants.
We assume that a total order on precomputed terms is
chosen such that for all integers~$m$ and~$n$, $\num m < \num n$ iff $m<n$.

Terms allowed in a mini-\gringo\ program are formed from
precomputed terms and variables using the absolute value
symbol $|\,|$ and six  binary operation names
$$+\quad-\quad\times\quad/\quad\backslash \quad..$$
%
%
An \emph{atom} is a symbolic constant optionally followed by a tuple
of terms in parentheses.  A \emph{literal} is an atom
possibly preceded by one or two occurrences of \emph{not}. A \emph{comparison}
is an expression of the form $t_1\;\mathit{rel}\; t_2$,
where $t_1$, $t_2$ are terms and $\mathit{rel}$ is $=$ or one
of the comparison symbols
\beq
\neq\quad<\quad>\quad\leq\quad\geq
\eeq{comp}

A \emph{rule} is an expression of the form $\head\ar\body$,
where
 \begin{itemize}
\item
\strut$\body$ is a conjunction (possibly empty) of literals and comparisons,
and
 \item
   $\head$ is either an atom (then this is a \emph{basic rule\/}),
   or an atom in braces (then this is a
   \emph{choice rule\/}), or empty (then this is a \emph{constraint\/}).
 \end{itemize}
A rule is \emph{ground} if it does not contain variables.
A ground rule of the form $\head\ar$, where $\head$ is an
atom, will be identified with the atom $\head$ and called a \emph{fact}.

A \emph{program} is a finite set of rules.

A \emph{predicate symbol} is a pair $p/n$, where~$p$ is
a symbolic constant, and~$n$ is a nonnegative integer.  We say that~$p/n$
\emph{occurs} in a literal~$l$ if~$l$ contains an atom of the form
$p(t_1,\dots,t_n)$, and similarly for occurrences in rules and in other
syntactic expressions.

An atom $p(t_1,\dots,t_n)$ is \emph{precomputed} if the terms
$t_1,\dots,t_n$ are precomputed.

\subsubsection{Semantics} \label{sssec:semantics}

The semantics of ground terms is defined by assigning to every ground term~$t$
a finite set~$[t]$ of precomputed terms called its \emph{values}.
It is recursively defined as follows:
\begin{itemize}
\item if $t$ is a numeral or a symbolic constant, then $[t]$ is the
  singleton set~$\{t\}$;
\item if $t$ is $|t_1|$, then $[t]$ is the set of numerals $\overline{|n|}$ for all integers~$n$ such that~$\overline{n} \in [t_1]$;
 
\item if $t$ is $(t_1 + t_2)$, then $[t]$ is the set of numerals $\overline{n_1 + n_2}$ for all integers~$n_1$, $n_2$ such that~$\overline{n_1} \in [t_1]$ and $\overline{n_2} \in [t_2]$; similarly when $t$ is $(t_1 - t_2)$ or $(t_1 \times t_2)$;
\item if $t$ is $(t_1/t_2)$, then $[t]$ is the set of numerals $\overline{\mathit{round}(n_1/n_2)}$ for all integers $n_1$, $n_2$ such that $n_1 \in [t_1]$, $n_2 \in [t_2]$, and $n_2 \neq 0$;

\item if $t$ is $(t_1 \backslash t_2)$, then $[t]$ is the set of numerals $\overline{n_1 - n_2 \cdot \mathit{round}(n_1/n_2)}$ for all integers $n_1$, $n_2$ such that $n_1 \in [t_1]$, $n_2 \in [t_2]$, and $n_2 \neq 0$;
\item if $t$ is $(t_1..t_2)$, then $[t]$ is the set of numerals~$\overline{m}$ for all integers~$m$ such that,
for some integers $n_1$, $n_2$,
$n_1 \in [t_1]$, $n_2 \in [t_2]$, $n_1 \leq m \leq n_2$;
\end{itemize}
where the function $\mathit{round}$ is defined as follows:
\begin{gather*}
  \mathit{round}(n) = \begin{cases}
    \lfloor n \rfloor & \text{if $n \geq 0$}, \\
    \lceil n \rceil & \text{if $n < 0$}.
  \end{cases}
\end{gather*}
The use of this function reflects the fact that the grounder \gringo\
\cite{gringomanual} truncates non-integer quotients toward zero.
This feature of \gringo\ was not taken into account in earlier
publications~(\citeNP[Section~4.2]{geb15}; \citeNP[Section~6]{lif19}; \citeNP[Section~3]{fan20}), where the floor function was used.

For instance,
$$
[\num 7/\num 2]=\{\num 3\},\;
[\num 0\,..\,\num 2]=\{\num 0, \num 1, \num 2\},\;
[\num 2/\num 0] = [\num 2\,..\,\num 0] = \emptyset;
$$
$[\num 2+c] = [\num 2\,..\,c] = \emptyset$
if~$c$ is a symbolic constant.

For any ground terms $t_1, \dots, t_n$, by $[t_1, \dots, t_n]$ we denote
the set of tuples $r_1, \dots, r_n$ for all $r_1 \in [t_1], \dots, r_n \in
[t_n]$.

The semantics of programs is defined by rewriting rules in the syntax of
propositional logic and referring to the definition of a stable model
(answer set) of a propositional theory \cite{fer05}.  The
transformation~$\tau$ is defined as follows.
For any ground atom $p(\boldt)$, where $\boldt$ is a tuple of terms,
\begin{itemize}
\item
$\tau (p(\boldt))$ stands for $\bigvee_{\boldr \in [\boldt]} p(\boldr)$,
\item
$\tau (\mynot p(\boldt))$ stands for $\bigvee_{\boldr \in [\boldt]} \neg p(\boldr)$, and
\item
$\tau (\mynot \mynot p(\boldt))$ stands for $\bigvee_{\boldr \in [\boldt]} \neg\neg p(\boldr)$.
\end{itemize}
For any ground comparison $t_1 \rel t_2$, we define $\tau(t_1 \rel t_2)$ as
\begin{itemize}
\item $\top$ if the relation~$\rel$ holds between some~$r_1$ from $[t_1]$ and some~$r_2$ from~$[t_2]$;
\item $\bot$ otherwise.
\end{itemize}
If each of $C_1,\dots, C_k$ is a ground literal or a ground comparison, then $\tau(C_1\land \cdots \land C_k)$ stands for $\tau C_1\land \cdots \land \tau C_k$.

If~$R$ is a ground basic rule $p(\boldt) \ar \body$, then $\tau R$ is the propositional formula
\[
\textstyle
\tau(\body) \rar
\bigwedge_{\boldr\in [\boldt]} p(\boldr).
\]
If~$R$ is a ground choice rule $\{p(\boldt)\} \ar \body$, then $\tau R$ is the propositional formula
\[
\textstyle
\tau(\body) \rar \bigwedge_{\boldr \in [\boldt]}
(p(\boldr)\lor\neg p(\boldr)).
\]
If~$R$ is a ground constraint $\ar\body$, then $\tau R$ is $\neg\tau(\body)$.

An \emph{instance} of a rule is a ground rule obtained from it by substituting precomputed terms for variables.
For any program~$\Pi$, $\tau\Pi$ is the set of the formulas~$\tau R$ for all instances~$R$ of the rules of~$\Pi$.  Thus~$\tau\Pi$ is a set of
propositional combinations of precomputed atoms.

For example,~$\tau$ transforms
$\{q(X)\} \ar p(X)$
into the set of formulas 
$$p(t)\to (q(t)\lor\neg q(t))$$
for all precomputed terms~$t$.  The rule
$q(\num 0\,..\,\num 2) \ar \no\ p$
is transformed into
$$\neg p\to(q(\num 0)\land q(\num 1)\land q(\num 2)).$$

\emph{Stable models} of a program are defined as stable models of the
set of formulas obtained from it by applying the transformation~$\tau$.
Thus stable models of programs are sets of precomputed atoms.
This definition is a special case of the definition proposed
by~\citeN[Section~3]{lif19} for Abstract Gringo, except for the changes in
the treatment of the division and modulo operations mentioned above.

\subsection{Programs with input and output}\label{ssec:io}

A \emph{program with input and output}, or an~\emph{io\nobreakdash-program}, is a
quadruple
\beq
(\Pi,\emph{PH},\emph{In},\emph{Out}),
\eeq{iop}
where
\begin{itemize}
\item \emph{PH} is a finite set of symbolic constants;
\item \emph{In} is a finite set of predicate symbols,
\item \emph{Out} is a finite set of predicate symbols that is disjoint
  from \emph{In},
\end{itemize}
and $\Pi$ is a \mg\ program such that the symbols from~\emph{In} do not
occur in the heads of its rules.
Members of \emph{PH} are the \emph{placeholders} of~(\ref{iop});
members of \emph{In} are the \emph{input symbols} of~(\ref{iop});
members of \emph{Out} are the \emph{output symbols} of~(\ref{iop}).
The input symbols and output symbols of an io\nobreakdash-program are collectively
called its \emph{public symbols}.  The other predicate symbols occurring
in the rules are \emph{private}.  An \emph{input atom} is an
atom $p(t_1,\dots,t_n)$ such that~$p/n$ is an input symbol.
\emph{Output atoms}, \emph{public atoms} and \emph{private atoms}
are defined in a similar way.

For any set~$\PP$ of public atoms, $\PP^{in}$ stands for the set of input
atoms in~$\PP$.

The example discussed in the introduction corresponds to
program~(\ref{rooms1})--(\ref{rooms6}) as~$\Pi$, and
\beq
\emph{PH}=\{h\},\;
\emph{In}= \{\emph{person\/}/1, \emph{in\/}_0/2, \emph{goto\/}/3\},\;
\emph{Out}= \{\emph{in\/}/3\}.
\eeq{ex1}
This io-program will be denoted by~$\Omega_1$. The only private symbol
of~$\Omega_1$ is $\emph{in\_building\/}/2$.

A \emph{valuation} on
a set~\emph{PH} of symbolic constants is a function that maps elements
of~\emph{PH}
to precomputed terms that do not belong to~\emph{PH}.
An~\emph{input} for an io-program~(\ref{iop}) is a pair $(v,\I)$,
where
\begin{itemize}
\item $v$ is a valuation on the set \emph{PH} of its placeholdlers, and
\item $\I$ is a set of precomputed input atoms such that its members do
  not contain placeholders.
\end{itemize}
An input $(v,\I)$ represents a way to choose the values of
placeholders and the extents of input predicates: for every
placeholder~$c$, specify $v(c)$ as its value, and add the atoms~$\I$
to the rules of the program as facts.  If~$\Pi$ is a mini-\gringo\ program
then $v(\Pi)$ stands for the program obtained from~$\Pi$ by
replacing every occurrence of every constant~$c$ in the domain of~$v$
by~$v(c)$.  Using this notation, we can say that
choosing $(v,\I)$ as input for~$\Pi$ amounts to
replacing~$\Pi$ by the program $v(\Pi)\cup\I$.

About a set of precomputed atoms
we say that it is an \emph{io-model} of an
\mbox{io\nobreakdash-program}~(\ref{iop})
for an input~$(v,\I)$ if it is the set
of all public atoms of some stable model of the program
$v(\Pi)\cup\I$.  If~$\PP$ is an io-model of an io-program for an
input~$(v,\I)$ then $\PP^{in}=\I$, because
the only rules of $v(\Pi)\cup\I$ containing input symbols in the head are
the facts~$\I$.

For example, an input~$(v,\I)$ for the io-program~$\Omega_1$ can be defined
by the conditions
\beq\ba l
v(h)=\num 2,\\
\I= \{\emph{person\/}(\emph{alice\/}),\emph{person\/}(\emph{bob\/}),\\
\hskip 9mm
\emph{in\/}_0(\emph{alice\/},\emph{hall\/}),
\emph{in\/}_0(\emph{bob\/},\emph{hall\/}),\\
\hskip 9mm
\emph{goto\/}(\emph{alice\/},\emph{classroom\/},\num 0),
\emph{goto\/}(\emph{bob\/},\emph{classroom\/},\num 1)\}.
\ea\eeq{exinp}
The program $v(\Pi)\cup\I$ has a unique stable model, which consists of
the members of~$\I$, the atoms
\beq\ba l
\{\emph{in\/}(\emph{alice\/},hall,\num 0),\emph{in\/}(\emph{bob\/},hall,\num 0),\\
\hskip 2mm
\emph{in\/}(\emph{alice\/},classroom,\num 1),\emph{in\/}(\emph{bob\/},hall,\num 1),\\
\hskip 2mm
\emph{in\/}(\emph{alice\/},classroom,\num 2),\emph{in\/}(\emph{bob\/},classroom,\num 2)\},
\ea\eeq{out}
and the private atoms $\emph{in\_building\/}(p,\num i)$ for all~$p$ in
$\{\emph{alice\/},\emph{bob\/}\}$ and all~$i$ in $\{0,1,2\}$.
The io-model of~$\Omega_1$ for $(v,\I)$ consists of the members of~$\I$
and atoms~(\ref{out}).

\subsection{Two-sorted formulas and standard interpretations}
\label{ssec:formulas}

The two-sorted signature~$\sigma_0$ includes
\begin{itemize}
\item the sort \emph{general} and its subsort \emph{integer};
\item all precomputed terms of \mg\ 
 as object constants; an object constant
is assigned the sort \emph{integer} iff it is a
numeral;
\item the symbol~$|\,|$ as a unary function constant;
  its argument and value have the sort \emph{integer};
\item the symbols~$+$, $-$ and~$\times$ as binary function constants;
  their arguments and values have the sort \emph{integer};
\item predicate symbols $p/n$ as
  $n$-ary predicate constants; their arguments have the sort \emph{general};
\item symbols~(\ref{comp}) as binary predicate constants;
  their arguments have the sort \emph{general}.\footnote{The
    symbols~$/$ and~$\backslash$ are not included in this signature because
    division is not a total function on integers.
The symbol~$..$ is not included either, because intervals are not meant to be among values of variables of this first\nobreakdash-order language.}
\end{itemize}

A formula of the form $(p/n)({\bf t})$ can be written also as $p({\bf t})$.
This convention allows us to view precomputed atoms
(Section~\ref{sssec:syntax}) as sentences over~$\sigma_0$.
Conjunctions of equalities and inequalities can be abbreviated as usual
in algebra; for instance, $X=Y<Z$ stands for $X=Y\land Y<Z$.

An interpretation of the signature~$\sigma_0$ is \emph{standard} if
 \begin{itemize}
 \item[(a)] its domain of the sort \emph{general} is the set of
   precomputed terms;
 \item[(b)] its domain of the sort \emph{integer} is the set of numerals;
 \item[(c)] every object  constant represents itself;
 \item[(d)] the absolute value symbol and the binary function constants are
   interpreted as usual in arithmetic;
 \item[(e)] predicate constants~(\ref{comp}) are interpreted in accordance
   with the total order on precomputed terms chosen in the definition
   of \mg\ (Section~\ref{sssec:syntax}).
 \end{itemize}
 A standard interpretation will be uniquely determined if we specify
 the set~$\J$ of precomputed atoms to which it assigns the value
 \emph{true}.  This interpretation will be denoted by~$\J^\uparrow$.
  
 When formulas over~$\sigma_0$ are used in reasoning about io-programs with
 placeholders, we may need interpretations satisfying a condition different
 from~(c).  An  interpretation is  \emph{standard for} a set~\emph{PH}
 of symbolic constants if it satisfies
conditions~(a),~(b),~(d),~(e) above and the conditions
 \begin{itemize}
 \item[(c$'$)] every object constant that belongs to \emph{PH}
   represents a precomputed term that does not belong to~\emph{PH};
 \item[(c$''$)] every object constant that does not belong to \emph{PH}
   represents itself.
 \end{itemize}
 An interpretation that is standard for \emph{PH} will be uniquely
 determined if we specify (i) the valuation~$v$ that maps every element
 of~\emph{PH} to the precomputed term that it represents, and
 (ii)~the set~$\J$ of precomputed atoms to which it assigns the value
 \emph{true}.
 This interpretation will be denoted by~$\J^v$.

 \subsection{Representing rules by formulas}\label{ssec:taustar}

 From now on, we assume that every symbol designated as a variable in \mg\
 is among general variables of the signature~$\sigma_0$.
The transformation~$\tau^*$, described in this section, converts \mg\ rules
into formulas over~$\sigma_0$.

First we define, for every mini-{\sc gringo} term $t$,
a formula
$\val tV$ over the signature~$\sigma_0$, where $V$ is a general variable that
does not occur in~$t$.  That formula expresses, informally speaking,
that~$V$ is one of the values of~$t$.  The definition is
recursive:\footnote{\label{ft} The treatment of division here follows the paper
  \cite{fan23} that corrects a mistake found in previous publications;
  see Section~\ref{sssec:semantics}.}
\begin{itemize}
\item
if $t$ is a precomputed term or a variable then $\val tV$ is $V=t$,
\item
  if~$t$ is $|t_1|$ then $\val tV$ is
  $\exists I(\val{t_1}I\land V =|I|)$,
\item
if $t$ is $t_1\;\emph{op}\;t_2$, where \emph{op} is $+$, $-$, or $\times$
then $\val tV$ is
$$\exists I J  (\val{t_1}I \land \val{t_2}J \land V=I\;\emph{op}\;J),$$
\item
if $t$ is $t_1\,/\,t_2$ then $\val tV$ is
$$\ba l
\exists I J K (\val{t_1}I \land \val{t_2}J 
\land K\times |J|\leq |I|<(K+\num 1)\times |J|\\
\hskip 1.2cm \land\; ((I\times J \geq \num 0 \land V=K)
\lor(I\times J < \num 0 \land V=-K))),
\ea$$
\item
if $t$ is $t_1\backslash t_2$ then $\val tV$ is
$$\ba l
\exists I J K (\val{t_1}I \land \val{t_2}J
\land K\times |J|\leq |I|<(K+\num 1)\times |J|\\
\hskip 1.2cm \land\; ((I\times J \geq \num 0 \land V=I-K\times J)
\lor(I\times J < \num 0 \land V=I+K\times J))),
\ea$$
\item
if $t$ is $t_1\,..\,t_2$ then $\val tV$ is
$$\exists I J K (\val{t_1}I \land \val{t_2}J \land
    I\leq K \leq J \land V=K),$$
where $I$, $J$, $K$ are fresh integer variables.
\end{itemize}

If $\bf t$ is a tuple $t_1,\dots,t_n$ of mini-{\sc gringo} terms, and
$\bf V$ is a tuple $V_1,\dots,V_n$ of distinct general variables, then
$\val{\bf t}{\bf V}$ stands for the conjunction
$\val{t_1}{V_1} \land \cdots \land \val{t_n}{V_n}$.

The next step is to define the
transformation~$\tau^B$, which converts literals and comparisons into
formulas over~$\sigma_0$. (The superscript~$B$ reflects the fact that
this translation is close to the meaning of expressions in \emph{bodies} of
rules.)
It transforms
\begin{itemize}
\item
  $p({\bf t})$ into
  $\exists {\bf V}({\val{\bf t}{\bf V}} \land p({\bf V}))$;
\item
  $\no\ p({\bf t})$ into
  $\exists {\bf V}({\val{\bf t}{\bf V}} \land \neg p({\bf V}))$;
\item
  $\no\ \no\ p({\bf t})$  into
  $\exists {\bf V}({\val{\bf t}{\bf V}} \land \neg\neg p({\bf V}))$;
\item
$t_1 \;\mathit{rel}\; t_2$ into
$\exists V_1 V_2 (\val{t_1}{V_1} \land \val{t_2}{V_2} \land
V_1\;\mathit{rel}\; V_2)$.
\end{itemize}

If \emph{Body} is a conjunction $B_1\land B_2\land\cdots$ of literals and
comparisons then $\tau^B(\emph{Body})$ stands for the conjunction
$\tau^B(B_1)\land\tau^B(B_2)\land\cdots$.

Now we are ready to define the transformation~$\tau^*$.  It converts a
basic rule
\begin{gather}
  p({\bf t}) \ar \body
  \label{eq:basic.rule}
\end{gather}
of a program~$\Pi$ into the formula
$$\uc({\val{\bf t}{\bf V}} \land \tau^B(\body) \rar p({\bf V})),$$
where $\bf V$ is the list of alphabetically first general variables that
do not occur in~$\Pi$, and~$\uc$ denotes universal closure
(see~\ref{app:many-sorted}).  A choice rule
\beq
\{p({\bf t})\} \ar \body
\eeq{choicer}
is converted into\footnote{In some publications,
  the result of applying~$\tau^*$ to a choice rule~(\ref{choicer}) is defined
  as
$$\uc({\val{\bf t}{\bf V}} \land \tau^B(\body) \rar\neg p({\bf V})
\lor\neg p({\bf V})).$$
The difference between this formula and formula~(\ref{choicef}) is
not essential, because the two formulas are satisfied by the same
HT-interpretations (see \ref{app:ht}).}
\beq
  \uc({\val{\bf t}{\bf V}} \land \tau^B(\body) \land \neg\neg p({\bf V})
  \rar p({\bf V})),
  \eeq{choicef}
and a constraint $\ar\body$ becomes $\uc(\tau^B(\body)\to\bot)$.

By $\tau^*\Pi$ we denote the set
of sentences~$\tau^* R$ for all rules~$R$ of~$\Pi$.

The transformation~$\tau^*$ can be used in place of the transformation~$\tau$
(Section~\ref{sssec:semantics}) to describe the stable models of a
\mg\ program,
as shown by the theorem below.  Its statement refers to stable models of
many\nobreakdash-sorted theories (\ref{app:ht}) for the case
when the underlying signature is~$\sigma_0$,
and all comparison symbols are classified as extensional.
Its proof is given in~Section~\ref{ssec:taustar.prop}.

\begin{theorem}\label{thm1}
  A set~$\J$ of precomputed atoms is a stable model of a
  mini-\gringo\ program~$\Pi$ iff $\J^\uparrow$ is a stable model of
  $\tau^*\Pi$.
\end{theorem}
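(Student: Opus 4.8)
The plan is to reduce Theorem~\ref{thm1} to the statement that $\tau^*\Pi$ and $\tau\Pi$ have, in a suitable sense, the same here-and-there (HT) models. Recall that the stable models of~$\Pi$ are by definition the stable models of the propositional theory~$\tau\Pi$, and that the stable models of a many-sorted theory over~$\sigma_0$ with the comparison symbols classified as extensional (so that the intensional predicates are exactly the symbols~$p/n$) are determined by its HT-models: a standard interpretation~$\J^\uparrow$ is a stable model iff it satisfies the theory and, for no set~$\J'$ of precomputed atoms with $\J'\subsetneq\J$, the HT-interpretation with components~${\J'}^\uparrow$ and~$\J^\uparrow$ satisfies it (see~\ref{app:ht}); the analogous characterization holds for propositional theories. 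A pair $(\J',\J)$ of sets of precomputed atoms with $\J'\subseteq\J$ determines both a propositional HT-interpretation and, together with the standard interpretation of the non-predicate part of~$\sigma_0$, a many-sorted HT-interpretation both of whose components are standard; call such an HT-interpretation \emph{standard}, and denote this one by $(\J',\J)^\uparrow$. Every standard HT-interpretation arises this way, so it is enough to show that $(\J',\J)^\uparrow$ satisfies $\tau^*\Pi$ in HT iff $(\J',\J)$ satisfies $\tau\Pi$ in HT: the characterization above then carries stable models across the bijection $(\J',\J)\leftrightarrow(\J',\J)^\uparrow$, which gives the theorem.

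The technical core is a lemma about the value formulas: for every ground \mg\ term~$t$ and every precomputed term~$r$, the sentence obtained from $\val{t}{V}$ by substituting~$r$ for~$V$ holds in a standard interpretation iff $r\in[t]$. I would prove this by structural induction on~$t$, running the recursive clauses that define $[\,\cdot\,]$ (Section~\ref{sssec:semantics}) and those that define $\val{t}{V}$ (Section~\ref{ssec:taustar}) side by side: the cases $|t_1|$, $t_1\op t_2$ with $\op\in\{+,-,\times\}$, and $t_1\,..\,t_2$ follow at once from the induction hypothesis, and the cases $t_1/t_2$ and $t_1\backslash t_2$ come down to checking that the quantified integer~$K$ in the translation is forced to equal $\mathit{round}(n_1/n_2)$---this is the only place where the argument differs from earlier proofs of the corresponding fact, on account of the revised treatment of division. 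It is worth recording here that $\val{t}{V}$ contains no predicate symbols other than the extensional comparison symbols and no object constants outside the fixed part of~$\sigma_0$, so its truth value under a standard HT-interpretation $(\J',\J)^\uparrow$ is independent of~$\J'$ and~$\J$ and coincides with its classical truth value; in particular, the integer quantifiers inside $\val{t}{V}$, and the existential quantifier introduced by $\tau^B$ below, behave classically within HT.

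From the lemma, a direct computation shows that over standard HT-interpretations $\tau^B(C)$ is HT-equivalent to the propositional formula $\tau C$ for every ground literal or comparison~$C$: for $C=p(\boldt)$, $\tau^B(C)=\exists\boldV(\val{\boldt}{\boldV}\land p(\boldV))$ collapses by the lemma to $\bigvee_{\boldr\in[\boldt]}p(\boldr)=\tau(p(\boldt))$, and the cases $\mynot\,p(\boldt)$, $\mynot\,\mynot\,p(\boldt)$ and $t_1\;\rel\;t_2$ are analogous; consequently $\tau^B(\body\theta)$ is HT-equivalent to $\tau(\body\theta)$ whenever $\theta$ substitutes precomputed terms for the variables of a body~$\body$ of~$\Pi$. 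Now fix a basic rule $R=(p(\boldt)\ar\body)$ of~$\Pi$. A standard HT-interpretation satisfies the sentence $\tau^*R$ iff, for every substitution~$\theta$ of precomputed terms for the variables of~$R$, it satisfies the universal closure over~$\boldV$ of $\val{\boldt\theta}{\boldV}\land\tau^B(\body\theta)\rar p(\boldV)$; and by the lemma and the remark above this universal closure is HT-equivalent to $\bigwedge_{\boldr\in[\boldt\theta]}(\tau(\body\theta)\rar p(\boldr))$, hence to $\tau(\body\theta)\rar\bigwedge_{\boldr\in[\boldt\theta]}p(\boldr)$, which is exactly $\tau R'$ for the instance~$R'$ of~$R$ given by~$\theta$. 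So $\tau^*R$ and $\{\tau R':R'\text{ an instance of }R\}$ have the same standard HT-models. Choice rules and constraints are handled in the same way, the choice case additionally using the HT-equivalence of $\neg\neg q\rar q$ with $q\lor\neg q$. Therefore $(\J',\J)^\uparrow$ satisfies $\tau^*\Pi$ in HT iff $(\J',\J)$ satisfies $\tau\Pi$ in HT, and by the first paragraph the theorem follows.

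The step I expect to be the main obstacle is making the two ``collapsing'' moves rigorous---treating the integer quantifiers inside $\val{t}{V}$ and the existential quantifier of $\tau^B$ classically within HT-logic. These are sound because the integer sort is interpreted by the fixed set of numerals and because $\val{t}{V}$ mentions no intensional predicate, so the relevant subformulas take the same value in the ``here'' and in the ``there'' component of any standard HT-interpretation; turning this observation into a clean lemma about HT-satisfaction of formulas built solely from the fixed, extensional part of~$\sigma_0$ is the delicate bookkeeping. Everything else---the value lemma, the $\tau^B$-versus-$\tau$ comparison, and the reconciliation of the universal closure with grounding---is then a matter of routine induction and HT-logic identities.
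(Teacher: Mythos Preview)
Your argument is correct. The decomposition differs from the paper's: the paper routes the proof through an auxiliary infinitary propositional translation $F\mapsto F^{\rm prop}$, invoking the (cited, not reproved) fact that $(\tau^*R)^{\rm prop}$ is strongly equivalent to~$\tau R$ together with a general lemma that a standard interpretation is a stable model of a first-order theory iff $I^\dar$ is a stable model of its $\rm prop$-translation. You instead compute the HT-equivalence of $\tau^*R$ and the set $\{\tau R':R'\text{ an instance of }R\}$ directly over standard HT-interpretations, using the value lemma (which the paper states as Lemma~\ref{lem:val.0} and uses only later, in the proof of Theorem~\ref{thm2}) and Proposition~\ref{properties}(b) to collapse the extensional subformulas. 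What the paper's route buys is brevity---the substantive calculation is delegated to earlier publications---while your route is self-contained and makes explicit exactly where the division/modulo revision enters. The ``delicate bookkeeping'' you flag is genuinely routine: Proposition~\ref{properties}(b) in~\ref{app:ht} is precisely the lemma you need, since every $\val{t}{V}$ and every comparison produced by~$\tau^B$ is free of intensional symbols.
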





\section{Completion} \label{sec:completion}

This section describes the process of constructing the completion of
an io-program.  This construction involves the formulas~$\tau^*R$
for the rules~$R$ of the program, and it exploits the
special syntactic form of these formulas, as discussed below.

\subsection{Completable sets} \label{ssec:simple}

Consider a many-sorted signature~$\sigma$ (see \ref{app:many-sorted})
with its predicate constants partitioned into \emph{intensional}
symbols and \emph{extensional} symbols, so that the set of intensional
symbols is finite.
A sentence over~$\sigma$ is \emph{completable} if it has the form
\beq\uc (F\to G),\eeq{cs}
where~$G$ either contains no intensional symbols or
has the form $p({\bf V})$, where~$p$ is an intensional symbol with
argument sorts~$s_1,\dots,s_n$, and~$\bf V$ is a tuple of
pairwise distinct variables of these sorts.  A finite set~$\Gamma$ of
sentences is \emph{completable} if
\begin{itemize}
\item every sentence in~$\Gamma$ is completable, and
  \item for any two sentences $\uc (F_1\to G_1)$, $\uc (F_2\to G_2)$
    in~$\Gamma$ such that~$G_1$ and~$G_2$ contain the same intensional symbol,
    $G_1=G_2$.
\end{itemize}

It is easy to see that for any \mg\ program~$\Pi$, the set~$\tau^*\Pi$ is
completable.

The \emph{definition} of an intensional symbol~$p$ in a completable
set~$\Gamma$ is the set of members~(\ref{cs}) of~$\Gamma$
such that~$p$ occurs in~$G$.
A \emph{first-order constraint}
is a completable sentence~(\ref{cs}) such that~$G$
does not contain intensional symbols.  Thus~$\Gamma$ is the union of
the definitions of intensional symbols and a set of first-order constraints.

If the definition of~$p$ in~$\Gamma$ consists of the sentences
$\uc(F_i({\bf V})\to p({\bf V}))$ then the \emph{completed definition} of~$p$
in~$\Gamma$ is the sentence
$$\forall {\bf V} \left (p({\bf V}) \lrar
   \bigvee_i \exists {\bf U}_i\,F_i({\bf V}) \right ),$$
where~${\bf U}_i$ is the list of free variables of $F_i({\bf V})$ that
do not belong to~$\bf V$.

The \emph{completion} $\comp[\Gamma]$
of a completable set~$\Gamma$ is the conjunction of
the completed definitions of all intensional symbols of~$\sigma$ in~$\Gamma$
and the constraints of~$\Gamma$.

\subsection{Completion of a program with input and output} \label{ssec:comp}

The \emph{completion} of an io-program~(\ref{iop})
is the second-order sentence\footnote{Recall that second-order formulas
  may contain predicate variables, which can be used to form atomic formulas
  in the same way as predicate constants, 
and can be bound by quantifiers in the
same way as object variables \cite[Section~1.2.3]{lif08b}.
In many-sorted setting, each predicate variable is assigned
argument sorts, like a predicate constant (see~\ref{app:many-sorted}).
}
$\exists P_1 \cdots P_l\,C$, where~$C$ is obtained from the sentence
$\comp[\tau^*\Pi]$ by replacing all private symbols $p_1/n_1,\dots,p_l/n_l$
with distinct predicate variables $P_1,\dots,P_l$.
The completion
of~$\Omega$ will be denoted by $\comp[\Omega]$.%
\footnote{The
  definitions of $\comp[\Omega]$ in other publications
  do not refer to the translation~$\tau^*$.  But they describe
  formulas that are almost identical to the sentence $\comp[\Omega]$ defined
  here, and are equivalent to it in classical second-order logic.%
}
In building~$\comp[\Omega]$, we classify as extensional all comparison and input symbols.
All output and private symbols are intensional.

Consider, for example, the io\nobreakdash-program with the
rules
\beq
\ba l
p(a),\\
p(b),\\
q(X,Y) \ar p(X) \land p(Y),
\ea
\eeq{tpr}
without input symbols and with the output symbol $q/2$.
The translation~$\tau^*$ transforms these rules into the formulas
\begin{gather}
  \newcommand{\shorteq}{\hspace{-2pt}=\hspace{-2pt}}
  \begin{aligned}[c]
    &\forall V_1(V_1=a \to p(V_1)),\\
    &\forall V_1(V_1=b \to p(V_1)),\\
    &\forall XYV_1V_2(V_1 \shorteq  X \land V_2 \shorteq Y \land
    \exists V(V \shorteq X \land p(V)) \land\exists V(V \shorteq Y \land p(V))\to q(V_1,V_2)).%
  \end{aligned}
    \label{tpf}
\end{gather}
The completed definition of~$p/1$ in~(\ref{tpf}) is
$$\forall V_1 (p(V_1) \lrar V_1=a \lor V_1=b),$$
and the completed definition of~$q/2$  is
$$\ba l
\forall V_1V_2(q(V_1,V_2)\, \lrar\\
\qquad\quad\qquad
\exists XY(V_1=X \land V_2=Y \land \exists V(V=X \land p(V)) \land
\exists V(V=Y\land p(V)))).
\ea$$
The completion of the program is
\beq
\ba l
\!\!\!\exists P(\forall V_1(P(V_1) \lrar V_1=a \lor V_1=b)\,\land\\
\quad\;\forall V_1V_2(q(V_1,V_2)\, \lrar\\
\qquad\;\;\;
\exists XY(V_1=X \land V_2=Y \land \exists V(V=X \land P(V)) \land
\exists V(V=Y\land P(V))))).
\ea
\eeq{sof}
Formula~(\ref{sof}) can be equivalently rewritten as
$$
\exists P(\forall V_1(P(V_1) \lrar V_1=a \lor V_1=b)\land
\forall V_1V_2(q(V_1,V_2) \lrar
P(V_1) \land P(V_2))).
$$

Second-order quantifiers in completion formulas can often be eliminated.
For instance, formula~(\ref{sof}) is equivalent to the first\nobreakdash-order formula
$$
\forall V_1V_2(q(V_1,V_2) \leftrightarrow (V_1=a \lor V_1=b)\land(V_2=a \lor V_2=b)).
$$

\subsection{Review: tight programs}\label{ssec:tight}

We are interested in cases when 
the sentence $\comp[\Omega]$ can be viewed as a description of the
io-models of~$\Omega$.  This is a review of one result of this kind.

The \emph{positive predicate dependency graph} of a \mg\ program~$\Pi$
is the directed graph defined as follows.
Its vertices are the predicate symbols~$p/n$ that occur in~$\Pi$.
It has an edge from $p/n$ to $p'/n'$ iff~$p/n$ occurs in the head of a
rule~$R$ of~$\Pi$ such that $p'/n'$ occurs in an atom that is a
conjunctive term of the body of~$R$.

A \mg\ program is \emph{tight} if its positive predicate dependency graph
is acyclic.
For example, the positive predicate dependency graph of
program~(\ref{tpr}) has  a single edge, from $q/2$ to~$p/1$; this program is
tight.  The positive predicate dependency graph of
program~(\ref{rooms1})--(\ref{rooms6}) includes a self-loop at~$\emph{in}/3$;
this program is not tight.  


Let~$\Omega$ be an io-program $(\Pi,\emph{PH},\emph{In},\emph{Out})$,
and let~$\PP$ be a set of precomputed public atoms that
do not contain placeholders of~$\Omega$.
Under the assumption that~$\Pi$ is tight,~$\PP$ is an io-model of~$\Omega$ for
an input~$(v,\I)$ iff  $\PP^v$ satisfies $\comp[\Omega]$ and $\PP^{in} = \I$
\cite[Theorem~2]{fan20}.

\section{Locally tight programs}\label{sec:main}

For any tuple $t_1,\dots,t_n$ of ground terms, $[t_1,\dots,t_n]$
stands for the set of tuples $r_1,\dots,r_n$ of precomputed terms such that
$r_1\in[t_1],\dots,r_n\in[t_n]$.

The \emph{positive dependency graph of an io\nobreakdash-program~$\Omega$ for an input
  $(v,\I)$} is the directed graph defined as follows.
Its vertices
are the ground atoms $p(r_1,\dots,r_n)$ such that $p/n$ is an output symbol
or a private symbol of~$\Omega$, and each~$r_i$ is a precomputed term
different from the placeholders of~$\Omega$.
It has an edge from $p({\bf r})$ to $p'({\bf r}')$ iff
there exists a ground instance~$R$ of one of the rules of $v(\Pi)$ such that
\begin{itemize}
\item[(a)]
  the head of~$R$ has the form $p({\bf t})$ or $\{p({\bf t})\}$,
  where ${\bf t}$  is a tuple of terms such that ${\bf r}\in[{\bf t}]$;
\item[(b)]
  the body of~$R$ has a conjunctive term of the form $p'({\bf t})$,
  where ${\bf t}$  is a tuple of terms such that ${\bf r}'\in[{\bf t}]$;
\item[(c)]
  for every conjunctive term of the body of~$R$ that
  contains an input symbol of~$\Omega$ and has the form
  $q({\bf t})$ or $\no\ \no\ q({\bf t})$, the set $[{\bf t}]$
  contains a tuple~$\bf r$ such that $q({\bf r})\in\I$;
\item[(d)]
  for every conjunctive term of the body of~$R$ that
  contains an input symbol of~$\Omega$ and has the form
  $\no\ q({\bf t})$, the set $[{\bf t}]$
  contains a tuple~$\bf r$ such that
  $q({\bf r})\not\in\I$;
\item[(e)]
  for every comparison ${t_1 \prec t_2}$ in the body of~$R$, there
  exist terms~$r_1$,~$r_2$ such that $r_1\in[t_1]$, $r_2\in[t_2]$, and
  the relation~$\prec$ holds for the pair~${(r_1,r_2)}$.
\end{itemize}

Recall that an \emph{infinite walk} in a directed graph is an infinite sequence of edges which joins a sequence of vertices.
If the positive dependency graph of~$\Omega$ for an input $(v,\I)$ has
no infinite walks, then we say that~$\Omega$ is \emph{locally tight} on
that input.

Consider, for example, the positive dependency graph of the
io-program~$\Omega_1$ (Section~\ref{ssec:io}).  Its vertices
for an input $(v,\I)$ are atoms
$\emph{in\/}(p,r,t)$ and $\emph{in\_building\/}(p,t)$, where $p$, $r$, $t$
are precomputed terms different from~$h$.  The only rules of~$\Omega_1$ that
contribute edges to the graph are~(\ref{rooms3}) and~(\ref{rooms5}), because
all predicate symbols in the bodies of~(\ref{rooms1}) and~(\ref{rooms2})
are input symbols, and rules~(\ref{rooms4}) and~(\ref{rooms6}) are constraints.
If~$R$ is~(\ref{rooms3}) then $v(R)$ is the rule
$$
\{\emph{in\/}(P,R,T+\num 1)\} \ar \emph{in\/}(P,R,T) \land
T = \num 0\,..\,v(h)\!-\!\num 1.
$$
An instance
$$
\{\emph{in\/}(p,r,t+\num 1)\} \ar \emph{in\/}(p,r,t) \land
t = \num 0\,..\,v(h)\!-\!\num 1
$$
of this rule contributes an edge if and only if~$v(h)$ is a numeral
$\num n$,~$t$ is a numeral~$\num i$, and $0\leq i<n$.
If~$R$ is~(\ref{rooms5}) then $v(R)=R$, and an instance
$$\emph{in\_building\/}(p,t) \ar \emph{in\/}(p,r,t)$$
of~$v(R)$ contributes the edge
\beq
\emph{in\_building\/}(p,t) \longrightarrow \emph{in\/}(p,r,t).
\eeq{edge}
Consequently the edges of the positive dependency graph of~$\Omega_1$
are~(\ref{edge}) and
$$
\emph{in\/}(p,r,\num{i+1}) \longrightarrow \emph{in\/}(p,r,\num i)
$$
if $v(h)=\num n$ and $0\leq i <n$.
It is clear that this graph has no infinite walks.  Consequently~$\Omega_1$
is locally tight on all inputs.

\begin{proposition}\label{prop1}
If a \mg\ program~$\Pi$ is tight then any io-program
  of the form $(\Pi,\emph{PH},\emph{In},\emph{Out})$ is locally tight for
  all inputs.
\end{proposition}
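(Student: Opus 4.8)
The plan is to establish the contrapositive: if some io-program $\Omega = (\Pi,\emph{PH},\emph{In},\emph{Out})$ fails to be locally tight on an input $(v,\I)$, then $\Pi$ is not tight. So suppose the positive dependency graph of $\Omega$ for $(v,\I)$ has an infinite walk. First I would extract from that walk an infinite sequence of ground atoms $A_0, A_1, A_2, \dots$ over output and private symbols, where each edge $A_k \longrightarrow A_{k+1}$ is witnessed by a ground instance $R_k$ of a rule of $v(\Pi)$ satisfying conditions (a)--(e). Writing $A_k = p_k(\boldr_k)$, the head of $R_k$ is $p_k(\boldt)$ or $\{p_k(\boldt)\}$ and the body of $R_k$ contains a conjunctive term $p_{k+1}(\boldt')$ with $\boldr_{k+1} \in [\boldt']$; the key point I want is that each $R_k$ is a ground instance of a rule $\widehat{R}_k$ of $\Pi$ itself (applying $v$ only substitutes placeholders, it does not create new rules), so in $\Pi$ we have a rule whose head contains the predicate symbol $p_k/n_k$ and whose body has a conjunctive term $p_{k+1}(\boldt')$, hence contains $p_{k+1}/n_{k+1}$. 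By the definition of the positive \emph{predicate} dependency graph of $\Pi$, this gives an edge from $p_k/n_k$ to $p_{k+1}/n_{k+1}$ in that graph.

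So the infinite walk in the atom-level graph projects to an infinite walk $p_0/n_0 \longrightarrow p_1/n_1 \longrightarrow \cdots$ in the positive predicate dependency graph of $\Pi$. Since only finitely many predicate symbols occur in $\Pi$, by the pigeonhole principle some symbol repeats, say $p_j/n_j = p_k/n_k$ with $j < k$; the segment of the walk from index $j$ to index $k$ is then a cycle in the positive predicate dependency graph. Therefore that graph is not acyclic, i.e.\ $\Pi$ is not tight, which completes the contrapositive and hence the proof.

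The only mildly delicate point is the bookkeeping in the projection step: I need to check that whenever conditions (a) and (b) in the definition of the positive dependency graph of $\Omega$ are met for a ground instance $R_k$ of $v(\Pi)$, the corresponding rule of $\Pi$ genuinely has $p_k/n_k$ occurring in its head and $p_{k+1}/n_{k+1}$ occurring in an atom that is a conjunctive term of its body --- in particular that $v$ substitutes only constants (placeholders) and so preserves which predicate symbols appear and where. Conditions (c), (d), and (e) play no role here; they only restrict when edges are \emph{present}, so dropping them can only make the atom-level graph larger, which is harmless for the contrapositive. I expect this to be the main (though still routine) obstacle, and once it is dispatched the pigeonhole argument on the finite vertex set is immediate.
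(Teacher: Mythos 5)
Your proposal is correct and follows essentially the same route as the paper's proof: each edge of the atom-level positive dependency graph of the io-program projects to an edge of the positive predicate dependency graph of~$\Pi$ (with the observation that the valuation~$v$ only replaces placeholders and so preserves predicate symbols), and an infinite walk in that finite graph contradicts acyclicity. The paper phrases this as a direct contradiction rather than a contrapositive with an explicit pigeonhole step, but the substance is identical.
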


\begin{proof}
  Assume that~$\Pi$ is tight, but the positive dependency graph of
  $(\Pi,\emph{PH},\emph{In},\emph{Out})$ for some input has an infinite walk.
If that graph has an edge from $p(r_1,\dots,r_n)$
to $p'(r'_1,\dots,r'_{n'})$ then the positive predicate dependency graph
of~$\Pi$ has an edge from~$p/n$ to~$p'/n'$.  Consequently this graph
has an infinite walk as well.  But that is impossible, because this
graph is finite and acyclic.
\end{proof}

The example of program~$\Omega_1$ shows that the converse of
Proposition~\ref{prop1} does not hold.

The theorem below generalizes the property of tight programs
reviewed at the end of Section~\ref{ssec:tight}.  For any formula~$F$
over~$\sigma_0$ and any valuation~$v$,
$v(F)$ stands for the formula obtained from~$F$ by
replacing every occurrence of every constant~$c$ in the domain of~$v$
by~$v(c)$.
The proof of the following theorem is given in~Section~\ref{sec:proofs}.

\begin{theorem}\label{thm2}
\begin{samepage}
For any io\nobreakdash-program~$\Omega$, any set~$\PP$ of
precomputed public atoms that do not contain placeholders of~$\Omega$,
and any input~$(v,\I)$ for which~$\Omega$ is locally tight, the following
conditions are equivalent:
\begin{itemize}
\item[(a)] $\PP$ is an io-model of~$\Omega$ for~$(v,\I)$,
\item[(b)] $\PP^\uparrow$ satisfies $v(\comp[\Omega])$ and $\PP^{in} = \I$,
\item[(c)] $\PP^v$ satisfies $\comp[\Omega]$ and $\PP^{in} = \I$.
\end{itemize}
\end{samepage}
\end{theorem}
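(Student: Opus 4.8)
The plan is to reduce Theorem~\ref{thm2} to a single ``Main Lemma'' about many-sorted first-order theories, exactly as the introduction promises. The reduction has three independent components. First, the equivalence of (b) and (c): the interpretation $\PP^v$ is, by construction, the interpretation $\PP^\uparrow$ composed with the renaming of placeholders given by $v$, and $v(\comp[\Omega])$ is the corresponding renaming of $\comp[\Omega]$; so $\PP^v\models\comp[\Omega]$ iff $\PP^\uparrow\models v(\comp[\Omega])$ by a routine substitution argument. This part is purely bookkeeping about standard interpretations and does not depend on local tightness. Second, I would unfold the definition of ``io-model'': $\PP$ is an io-model of $\Omega$ for $(v,\I)$ iff $\PP$ is the set of public atoms of some stable model of $v(\Pi)\cup\I$, and any such $\PP$ automatically satisfies $\PP^{in}=\I$. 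By Theorem~\ref{thm1}, stable models of $v(\Pi)\cup\I$ correspond to stable models $\J^\uparrow$ of $\tau^*(v(\Pi)\cup\I)=v(\tau^*\Pi)\cup\I$ over $\sigma_0$. So condition (a) becomes: there is a set $\J$ of precomputed atoms with $\J^\uparrow$ a stable model of $v(\tau^*\Pi)\cup\I$ and $\PP=\{$public atoms of $\J\}$.

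The heart of the matter is then to connect the stable models of the many-sorted theory $v(\tau^*\Pi)\cup\I$ with the models of its completion, after hiding the private symbols by existential second-order quantification. This is where the Main Lemma of Section~\ref{sec:mainlemma} enters: it should say that for a completable set $\Gamma$ of sentences over a many-sorted signature, if the ``positive dependency graph'' induced by $\Gamma$ on the relevant ground atoms has no infinite walk, then the intensional part of the stable models of $\Gamma$ coincides with the models of $\comp[\Gamma]$. I would instantiate this with $\Gamma=v(\tau^*\Pi)$, taking the input atoms $\I$ as an extensional part of the interpretation, the output symbols as the visible intensional symbols, and the private symbols as the intensional symbols to be projected away by the $\exists P_1\cdots P_l$ prefix in $\comp[\Omega]$. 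The crucial check is that the abstract positive dependency graph attached to $v(\tau^*\Pi)$ in the Main Lemma matches the concrete positive dependency graph of $\Omega$ for $(v,\I)$ defined in Section~\ref{sec:main}, so that ``locally tight on $(v,\I)$'' is exactly the ``no infinite walk'' hypothesis of the lemma. Conditions (c)--(e) in the definition of the dependency graph are precisely what guarantees that the body of a rule instance is not vacuously refuted by the input or by a comparison, i.e.\ that the corresponding implication in $\tau^*\Pi$ can actually ``fire''; I would verify that the $\tau^B$ translation of a body literal over an input or comparison symbol is satisfiable in $\J^\uparrow$ iff the matching clause (c), (d) or (e) holds.

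The main obstacle I anticipate is the treatment of the private (projected) symbols. The completion $\comp[\Omega]$ existentially quantifies over predicate variables $P_1,\dots,P_l$ replacing the private symbols, whereas the Main Lemma presumably talks about full stable models of $v(\tau^*\Pi)$ including the private atoms. So I must argue that $\PP^\uparrow\models v(\comp[\Omega])$ iff there is an \emph{extension} $\J$ of $\PP$ by private atoms with $\J^\uparrow\models v(\comp[\tau^*\Pi])$, and moreover that this $\J$ can be taken to be a stable model (not merely a model of the completion) — which is exactly where local tightness is needed, because the private atoms participate in the dependency graph and an infinite walk through them would break the equivalence. The delicate point is that adding input atoms $\I$ as facts is harmless for local tightness (input symbols never occur in heads, so they are sinks in the graph) but the interplay between the second-order projection and the stable-model/completion correspondence must be handled carefully; I expect to need a lemma saying that projecting private symbols commutes with taking completion, together with the observation that a set of precomputed public atoms extends to a stable model of $v(\Pi)\cup\I$ iff it extends to a model of the (first-order, unprojected) completion, under the local tightness assumption. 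Once the Main Lemma is in hand, assembling these pieces into the equivalence (a)$\Leftrightarrow$(b)$\Leftrightarrow$(c) should be routine.
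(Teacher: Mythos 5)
Your plan follows the paper's own proof almost exactly: (b)$\Leftrightarrow$(c) by the substitution property of standard interpretations; (a) unfolded via Theorem~\ref{thm1} into stable models of the translated theory over $\sigma_0$; the Main Lemma applied to that theory, where the ``crucial check'' you identify is precisely the paper's Lemma~\ref{lemma:local.tightness} (the positive dependency graph of~$\Omega$ for $(v,\I)$ is a supergraph of the graph $G^{sp}_{\J^\uparrow}$ of the translated rules, verified through exactly the conditions (a)--(e) and the satisfiability of the $\tau^B$\nobreakdash-images of input literals and comparisons, so local tightness yields the no-infinite-walk hypothesis); and the private symbols handled just as you anticipate, by identifying satisfaction of the $\exists P_1\cdots P_l$ prefix with extensions of~$\PP$ by private precomputed atoms. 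The one spot where your instantiation would need an extra argument is the input atoms: Theorem~\ref{thm1} classifies only comparison symbols as extensional, so it does not directly give you ``$\J^\uparrow$ is a stable model of $v(\tau^*\Pi)$ with input symbols extensional and $\J^{in}=\I$''; your choice $\Gamma=v(\tau^*\Pi)$ with inputs extensional would therefore require a splitting-style lemma at the stable-model level. The paper avoids this by taking $\Gamma=\tau^*(v(\Pi)\cup\PP^{in})$ with only comparisons extensional (the added facts contribute no edges, so the graph is unaffected---essentially your remark that the facts are harmless) and then discharging the inputs classically: in $\comp[\tau^*(v(\Pi))\cup\PP^{in}]$ the completed definitions of the input symbols hold automatically because $\J^{in}=\PP^{in}$, and the remaining conjuncts are exactly the completion with inputs extensional, i.e.\ what appears in $\comp[\Omega]$. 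Apart from this piece of bookkeeping, your proposal is the paper's proof.
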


Take, for instance, the io-program~$\Omega_1$ and the input $(v,\I)$
defined by conditions~(\ref{exinp}).  We observed in Section~\ref{ssec:io}
that the set~$\PP$ obtained from~$\I$ by adding atoms~(\ref{out}) is an
io-model of~$\Omega_1$ for $(v,\I)$.  According to Theorem~\ref{thm2},
this observation is equivalent to the claim that the corresponding
interpretation~$\PP^\uparrow$ satisfies the sentence obtained from
$\comp[\Omega_1]$ by substituting~$\num 2$ for the placeholder~$h$.


\section{Equivalence of io-programs} \label{sec:equiv}

We begin with an example.
Consider the io-program~$\Omega_2$, obtained from~$\Omega_1$ by replacing
inertia rule~(\ref{rooms3}) with the pair of rules
\beq
\emph{go\/}(P,T) \ar \emph{goto\/}(P,R,T),
\eeq{rooms3a}
\beq
\emph{in\/}(P,R,T+1) \ar \emph{in\/}(P,R,T) \land \no\ \emph{go\/}(P,T)
                               \land T =  0\,..\,h- 1.
                               \eeq{rooms3b}
Here $\emph{go}/2$ is a new private symbol.
These rules express commonsense inertia in a different way, using
the fact that \emph{goto} actions are the only possible
causes of change in the location of a person.  Programs~$\Omega_1$
and~$\Omega_2$ describe the same dynamic system using two different
approaches to encoding inertia.

Are these programs equivalent, in some sense?
The idea of equivalence with respect to a user guide \cite{fan23a} allows
us to make this question precise.
We present the details below.

About two io-programs we say that they are \emph{comparable} to each other
if they have the same placeholders, the same input symbols, and the same
output symbols.  For example, $\Omega_1$ and~$\Omega_2$ are comparable.
Any two comparable programs have the same inputs, because the definition of
an input for~$\Omega$ (Section~\ref{ssec:io}) refers to only two
components of~$\Omega$---to its placeholders and its input symbols.

We will define when comparable io-programs~$\Omega$,~$\Omega'$ are
equivalent to each other on a subset \emph{Dom} of their inputs.
Sets of inputs will be called \emph{domains}.  We need domains to express
the idea that two io-programs can be equivalent even if they exhibit
different behaviors on
some inputs that we consider irrelevant.  In an input for~$\Omega_1$
or~$\Omega_2$, for example, the first argument of each of the predicate
symbols $\emph{in\/}_0/2$, $\emph{goto\/}/3$ is expected to be a person;
we are not interested in inputs that contain
$\emph{in\/}_0(\emph{alice\/},\emph{hall\/})$ but do not contain
$\emph{person\/}(\emph{alice\/})$.

Domains are infinite sets, but there is a natural
way to represent some domains by
finite expressions.  An \emph{assumption} for an io-program~$\Omega$ is a
sentence over~$\sigma_0$ such that every predicate symbol~$p/n$ occurring
in it is an input symbol of~$\Omega$.  The domain \emph{defined} by an
assumption~$A$ is the set of all inputs $(v,\I)$ such that the
interpretation~$\I^v$ satisfies~$A$.  For example, the assumption
\beq
\forall P R (\emph{in\/}_0(P,R) \to \emph{person\/}(P)),
\eeq{as1}
where~$P$ and~$R$ are general variables,
defines the set~$\I$ of all inputs for~$\Omega_1$ such that
$\emph{person\/}(p)\in\I$ whenever $\emph{in\/}_0(p,r)\in\I$.

About comparable io-programs~$\Omega$,~$\Omega'$ we say that they are
\emph{equivalent} on a domain \emph{Dom} if, for every input~$(v,\I)$
from \emph{Dom},~$\Omega$ and~$\Omega'$ have the same io-models for that
input.

\anthemp\ \cite{fan23a} is a proof assistant designed for verifying
equivalence of io-programs in the sense of this definition.
It uses the following terminology.  A \emph{user guide} is a quadruple
\beq
(\emph{PH},\emph{In},\emph{Out},\emph{Dom\/}),
\eeq{ug}
where~\emph{PH}, \emph{In} and \emph{Out} are as in the definition of
an io-program (Section~\ref{ssec:io}), and \emph{Dom\/} is a domain.
The assertion that io-programs
$(\Pi,\emph{PH},\emph{In},\emph{Out})$
and
$(\Pi',\emph{PH},\emph{In},\emph{Out})$
are equivalent on a domain~\emph{Dom} can be expressed by saying that
$\Pi$ and~$\Pi'$ are equivalent with respect to user
guide~(\ref{ug}).\footnote{The definition
in the \anthemp\ paper looks different from the one
given here, but the two definitions are equivalent.  Instead of the condition
``$\Omega$ and~$\Omega'$ have the same io-models''
Fandinno et al.~require that~$\Pi$
  and~$\Pi'$ have the same external behavior.  The external behavior
  of a \mg\ program~$\Pi$ for a user guide~(\ref{ug}) and an input $(v,\I)$
  can be described as the collection of all sets that can be
  obtained from io-models of $(\Pi,\emph{PH},\emph{In},\emph{Out})$ by
  removing the input atoms~$\I$.}

We can say, for example, that replacing rule~(\ref{rooms3})
in \mg\ program (\ref{rooms1})--(\ref{rooms6}) with rules~(\ref{rooms3a})
is an equivalent transformation for the user guide~(\ref{ug}) with~\emph{PH},
\emph{In} and \emph{Out} defined by formulas~(\ref{ex1}), and with
\emph{Dom} defined by the conjunction
of assumption~(\ref{as1}) and the assumption
\beq
\forall P R\, T(\emph{goto\/}(P,R,T) \to \emph{person\/}(P)).
\eeq{as2}

Theorem~\ref{thm3} below shows that using the
\anthemp\ algorithm for verifying equivalence of io-programs
can be justified under a local tightness condition.
If an io-program is locally tight for all inputs satisfying an assumption~$A$
then we say that it is locally tight \emph{under the assumption}~$A$.
If two comparable io-programs are equivalent on the domain defined by an
assumption~$A$ then we say that they are equivalent \emph{under the
  assumption}~$A$.
The proof of the following theorem is given in~Section~\ref{sec:proofthm3}.

\begin{theorem}\label{thm3}
  Let~$\Omega$,~$\Omega'$ be comparable io-programs with
  placeholders~\emph{PH} that are locally tight under an assumption~$A$.
  Programs~$\Omega$,~$\Omega'$ are equivalent to each other under the
  assumption~$A$ iff the sentence
\beq
A\to(\comp[\Omega] \lrar\comp[\Omega'])
\eeq{thmf}
is satisfied by all interpretations that are standard for~\emph{PH}.
\end{theorem}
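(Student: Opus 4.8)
The plan is to reduce the equivalence of the two io-programs to the semantic equivalence of their completions by chaining together Theorem~\ref{thm2} with the definitions of equivalence under an assumption. First I would fix an arbitrary input $(v,\I)$ for the common placeholders \emph{PH} and input symbols \emph{In}, and observe that, since both programs are locally tight under~$A$, Theorem~\ref{thm2} applies to each of them whenever $\I^v$ satisfies~$A$: a set~$\PP$ of precomputed public atoms (not containing placeholders) is an io-model of~$\Omega$ for $(v,\I)$ iff $\PP^v$ satisfies $\comp[\Omega]$ and $\PP^{in}=\I$, and likewise for~$\Omega'$. Hence, for each such input, the set of io-models of~$\Omega$ coincides with the set of io-models of~$\Omega'$ iff $\comp[\Omega]$ and $\comp[\Omega']$ are satisfied by exactly the same interpretations of the form $\PP^v$ with $\PP^{in}=\I$.

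Next I would carry out the ``$\Leftarrow$'' direction. Suppose~(\ref{thmf}) holds in all interpretations standard for \emph{PH}. Take an input $(v,\I)$ from the domain defined by~$A$, so $\I^v \models A$. Given an io-model~$\PP$ of~$\Omega$ for that input, we have $\PP^{in}=\I$, so $\PP^v$ agrees with $\I^v$ on all input symbols; since~$A$ contains only input symbols, $\PP^v\models A$. By Theorem~\ref{thm2}, $\PP^v\models\comp[\Omega]$, and~(\ref{thmf}) then gives $\PP^v\models\comp[\Omega']$; applying Theorem~\ref{thm2} to~$\Omega'$ shows~$\PP$ is an io-model of~$\Omega'$ for $(v,\I)$. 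By symmetry the two programs have the same io-models on every input in the domain, i.e.\ they are equivalent under~$A$.

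For the ``$\Rightarrow$'' direction, assume the programs are equivalent under~$A$, and let~$\mathcal{M}$ be any interpretation standard for \emph{PH}; I must show $\mathcal{M}\models A\to(\comp[\Omega]\lrar\comp[\Omega'])$. The key step is to recover from~$\mathcal{M}$ an input and a candidate io-model: let~$v$ be the valuation reading off the values~$\mathcal{M}$ assigns to the placeholders, let~$\PP$ be the set of precomputed public atoms that~$\mathcal{M}$ makes true, and let $\I=\PP^{in}$; then $\mathcal{M}=\PP^v$. Assume $\mathcal{M}\models A$ and $\mathcal{M}\models\comp[\Omega]$. Here is where some care is needed: Theorem~\ref{thm2} is stated for sets~$\PP$ of public atoms not containing placeholders and inputs whose atoms contain no placeholders, and the equivalence of~$\PP^v\models\comp[\Omega]$ with item~(c) also presupposes $\PP^{in}=\I$ together with condition~(c) of that theorem --- so I would note that the completion $\comp[\Omega]$ does not constrain atoms involving placeholders (by the form of $\tau^*$ and condition~(c$'$)), hence we may harmlessly assume~$\PP$ and~$\I$ are placeholder-free. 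Then $\I^v$ is the restriction of~$\mathcal{M}$ to input symbols, so $\I^v\models A$, and $(v,\I)$ lies in the domain defined by~$A$. By Theorem~\ref{thm2}, $\PP$ is an io-model of~$\Omega$ for $(v,\I)$; by equivalence under~$A$, $\PP$ is an io-model of~$\Omega'$; by Theorem~\ref{thm2} again, $\PP^v=\mathcal{M}\models\comp[\Omega']$. The converse implication is symmetric, so $\mathcal{M}\models\comp[\Omega]\lrar\comp[\Omega']$, completing the proof.

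I expect the main obstacle to be the bookkeeping around placeholders in the ``$\Rightarrow$'' direction: Theorem~\ref{thm2} is phrased for placeholder-free public atoms and inputs, whereas an arbitrary interpretation standard for \emph{PH} may assign to placeholders values that syntactically look like ordinary precomputed terms, and one must argue that nothing in $\comp[\Omega]$ or in the notion of io-model is sensitive to this. This is the point at which conditions~(c$'$) and~(c$''$) on interpretations standard for \emph{PH}, and the fact that input symbols do not occur in rule heads, have to be invoked explicitly; everything else is a direct application of Theorem~\ref{thm2} and unwinding of definitions.
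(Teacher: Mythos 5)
Your proposal is correct and follows essentially the same route as the paper: both rest entirely on Theorem~\ref{thm2} (the equivalence of its conditions (a) and (c)), on the observation that $A$ mentions only input symbols so that $\PP^v$ and $\I^v$ agree on $A$ whenever $\PP^{in}=\I$, and on identifying interpretations standard for \emph{PH} with those of the form $\PP^v$. The only difference is organizational---the paper packages the argument as a single chain of equivalences rather than two separate implications---and the placeholder/private-atom bookkeeping you flag is glossed over at the same level of detail in the paper's own final step.
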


From Theorem~\ref{thm3} we can conclude
that equivalence between~$\Omega$ and~$\Omega'$ will be established if
formula~(\ref{thmf}) is derived from a set of axioms that are satisfied by all
interpretations that are standard for~\emph{PH}. This is how
\anthemp\ operates \cite[Sections~7]{fan23a}.

The claim that~$\Omega_1$ is equivalent to $\Omega_2$ assuming~(\ref{as1})
and~(\ref{as2}) has been verified by
\anthemp, with the resolution theorem prover \vampire\ \cite{vor13} used
as the proof engine.
We saw in Section~\ref{sec:main} that $\Omega_1$ is locally tight for all
inputs.  The positive dependency graph of~$\Omega_2$ has
additional edges, from $\emph{go\/}(p,t)$ to $\emph{goto\/}(p,r,t)$.  It is
clear that this graph has no infinite walks either,
so that~$\Omega_2$ is locally tight for all inputs as well.

The process of verifying equivalence in this case was interactive---we helped
\vampire\ find a proof by suggesing useful lemmas and an induction axiom,
as usual in such experiments~(\citeNP[Section~6]{fan23a}; \citeNP[Figure~5]{fan20}).
Induction was needed to prove the lemma
$$\forall P R\, T(\emph{in\/}(P,R,T) \to \emph{person\/}(P)),$$
with~(\ref{as1}) used to justify the base case.

\section{Main lemma} \label{sec:mainlemma}

The Main Lemma, stated below, relates the completion of a completable
set of sentences (Section~\ref{ssec:simple}) to its stable models
(\ref{app:ht}).

As in Section~\ref{ssec:simple}, $\sigma$ stands here for a many-sorted
signature with its predicate constants partitioned into intensional
and extensional.
 The symbols~$\sigma^I$, $\boldd^*$, used in this section, are
 introduced in \ref{app:many-sorted}; $I^\dar$ is defined in \ref{app:ht}.

We define, for an interpretation~$I$ of~$\sigma$ and a
sentence~$F$ over~$\sigma^I$, the set $\Pos{I}F$ of \emph{(strictly)
  positive atoms of~$F$
  with respect to~$I$}.  Elements of this set are formulas
of the form $p(\boldd^*)$.
This set is defined recursively, as follows.  If~$F$
does not contain intensional symbols or is not satisfied by~$I$
then $\Pos{I}F = \emptyset$. Otherwise,
\begin{itemize}
\item[(i)] $\Pos{I}{p(\boldt)} = \{p((\boldt^I)^*)\}$;
\item[(ii)] $\Pos{I}{F_1 \wedge F_2} = \Pos{I}{F_1 \vee F_2} = \Pos{I}{F_1} \cup
  \Pos{I}{F_2}$;
    \item[(iii)] $\Pos{I}{F_1 \to F_2} = \Pos{I}{F_2}$;
    \item[(iv)] $\Pos{I}{\forall X F(X)} = \Pos{I}{\exists X F(X)} =
      \bigcup_{d\in|I|^s} \Pos{I}{F(d^*)}$ if~$X$ is a variable of sort~$s$.
\end{itemize}

It is easy to check by induction on~$F$ that $\Pos{I}F$ is a subset
of the set~$I^\dar$.

An \emph{instance} of a completable sentence $\uc(F \to G)$ over~$\sigma^I$
is a sentence obtained from $F\to G$ by substituting names~$d^*$ for its free
variables.
For any completable set~$\Gamma$ of sentences over~$\sigma^I$,
the \emph{positive dependency graph}
$G^{sp}_{I}(\Gamma)$ is the directed graph defined as follows.
Its vertices are elements of~$I^\dar$.
It has an edge from~$A$ to~$B$ iff, for some instance
$F\to G$ of a member of~$\Gamma$,
$A\in\Pos{I}{G}$ and $B\in\Pos{I}{F}$.

\medskip\noindent\emph{Main Lemma}

\nobreak\noindent
For any interpretation~$I$ of~$\sigma$
  and any completable set~$\Gamma$ of sentences over~$\sigma^I$
  such that the graph~$G^{sp}_I(\Gamma)$ has no infinite walks,~$I$
  is a stable model of~$\Gamma$ iff~$I\models\comp[\Gamma]$.
  \medskip
  
  Consider, for example, the signature consisting of a single sort,
  the object constants~$a$,~$b$, and the intensional unary predicate
  constant~$p$. Let~$\Gamma$ be
  $$\{\forall V(V=a\to p(V)),\ \forall V(V=b \land p(V) \to p(V))\},$$
  and let the interpretations~$I$, $J$ be defined by the conditions
  $$\ba c
  |I|=|J|=\{0,1,\dots\};\ a^I=a^J=0;\ b^I=b^J=1;\\
p^I(n)=\emph{true}\hbox{ iff }n=0;\ p^J(n)=\emph{true}\hbox{ iff }n\in\{0,1\}.
\ea$$
Instances of the members of~$\Gamma$ are implications of the forms
$$n^*=a\to p(n^*)\quad\hbox{and}\quad n^*=b \land p(n^*) \to p(n^*)$$
($n=0,1,\dots$).
The set~$I^\dar$ of vertices of the graph $G^{sp}_I(\Gamma)$ is~$\{p(0^*)\}$.
This graph has no edges, because, for every~$n$,
$$\Pos{I}{n^*=a}=\Pos{I}{n^*=b \land p(n^*)}=\emptyset.$$
The set~$J^\dar$ of vertices of $G^{sp}_J(\Gamma)$ is
$\{p(0^*),p(1^*)\}$.  This graph has one edge---the self-loop at $p(1^*)$,
because
$$\Pos{J}{1^*=b \land p(1^*)}=\Pos{J}{p(1^*)} =\{p(1^*)\}.$$
Consequently the Main Lemma is applicable to~$I$, but not to~$J$.
It asserts that~$I$ is a stable model of~$\Gamma$ (which is true) iff~$I$
satisfies the completion
$$\forall V(p(V)\lrar V=a\lor(V=b\land p(b))$$
of~$\Gamma$ (which is true as well).  The interpretation~$J$, on the
other hand, satisfies the completion of~$\Gamma$, but it
is not a stable model.

\medskip
The Main Lemma is similar to two results published earlier
(\citeNP[Theorem~11]{fer09}; \citeNP[Corollary~4]{lee11b}).
The former refers to dependencies between predicate constants,
rather than ground atoms.  The latter is applicable to formulas of
different syntactic form.

About an interpretation~$I$ of~$\sigma$ we say that it is
\emph{semi-Herbrand} if
\begin{itemize}
\item all elements of its domains are object
  constants of~$\sigma$, and
  \item $d^I=d$ for every such
element~$d$.\footnote{This is weaker than the condition defining
  Herbrand interpretations; some ground terms over~$\sigma$ may be
  different from elements of the domains~$|I|^s$.}
\end{itemize}
For example, every interpretation of~$\sigma_0$ that is standard, or
standard for some some set~$PH$ (Section~\ref{ssec:formulas}),
is semi\nobreakdash-Herbrand.

If~$I$ is semi-Herbrand then the graph $G^{sp}_I(\Gamma)$
 can be modified by replacing each vertex $p({\bf d}^*)$
by the atom $p({\bf d})$ over the signature~$\sigma$.  This modified
positive dependency graph is isomorphic to $G^{sp}_I(\Gamma)$ as
defined above.  It can be
defined independently, by replacing clauses~(i) and~(iv) in the definition
of $\Pos I F$ above by
\begin{itemize}
\item[(i)$'$] $\Pos{I}{p(\boldt)} = \{p(\boldt^I)\}$;
\item[(iv)$'$] $\Pos{I}{\forall X F(X)} = \Pos{I}{\exists X F(X)} =
      \bigcup_{d\in|I|^s} \Pos{I}{F(d)}$ if~$X$ is a variable of sort~$s$.
\end{itemize}

\section{Proof of the Main Lemma}\label{sec:prooflemma}

The Main Lemma expresses a property of the completion operator, and its
proof below consists of two parts.  We first prove a similar
property of pointwise stable models, defined in~\ref{app:ht}
(Lemma~\ref{l3}); then we relate pointwise stable models to completion.

As in Section~\ref{sec:mainlemma}, $\sigma$ stands for a many-sorted
signature with its predicate constants subdivided into intensional and
extensional.

\begin{lemma}\label{l1}
  For any HT\nobreakdash-interpretation $\langle\HH,I \rangle$ and any
  sentence~$F$ over~$\sigma^I$, if $I \models F$ and 
$\Pos{I}F\subseteq \HH$
  then $\langle\HH,I \rangle\modelsht F.$
\end{lemma}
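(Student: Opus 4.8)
The statement to establish is that whenever $I\models F$ and $\PosF{F}\subseteq\HH$, we have $\langle\HH,I\rangle\modelsht F$. I would organize the induction around the recursive clauses in the definition of $\PosF{F}$ and the standard clauses defining the satisfaction relation $\modelsht$ for HT-interpretations (reviewed in~\ref{app:ht}). Throughout, note the convenient fact recorded just before the lemma: $\PosF{F}\subseteq I^\dar$, so the hypothesis $\PosF{F}\subseteq\HH$ is meaningful and compatible with $\HH\subseteq I^\dar$.

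\textbf{Base cases.} For an atomic formula $p(\boldt)$ where $p$ is intensional: if $I\models p(\boldt)$ then $\PosF{p(\boldt)}=\{p((\boldt^I)^*)\}$, and the hypothesis forces $p((\boldt^I)^*)\in\HH$, which is exactly what is needed for $\langle\HH,I\rangle\modelsht p(\boldt)$. For atomic formulas with an extensional predicate (including comparisons) and for equalities between terms, $\modelsht$ agrees with classical satisfaction in~$I$ by definition, so $I\models F$ gives the conclusion directly; here $\PosF{F}=\emptyset$ and the second hypothesis is vacuous. The cases $F=\top$ and $F=\bot$ are immediate.

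\textbf{Inductive steps.} For $F=F_1\wedge F_2$: from $I\models F$ we get $I\models F_1$ and $I\models F_2$, and since $\PosF{F}=\PosF{F_1}\cup\PosF{F_2}$ both $\PosF{F_i}\subseteq\HH$, so the induction hypothesis applies to each conjunct and $\langle\HH,I\rangle\modelsht F_1\wedge F_2$ follows. For $F=F_1\vee F_2$: $I\models F$ means $I\models F_i$ for some $i$, and again $\PosF{F_i}\subseteq\PosF{F}\subseteq\HH$, so the induction hypothesis gives $\langle\HH,I\rangle\modelsht F_i$, hence the disjunction. For $F=\forall X\,F_1(X)$ and $F=\exists X\,F_1(X)$ with $X$ of sort~$s$: using clause~(iv) of the definition of $\PosF{\cdot}$, for each $d\in|I|^s$ we have $\PosF{F_1(d^*)}\subseteq\bigcup_{d'\in|I|^s}\PosF{F_1(d'^*)}=\PosF{F}\subseteq\HH$; combined with $I\models F_1(d^*)$ (for all~$d$ in the universal case, for some~$d$ in the existential case) and the induction hypothesis, this yields the required $\modelsht$ statement, since satisfaction of quantified formulas in an HT-interpretation ranges over the same domain~$|I|^s$.

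\textbf{The implication case is the crux.} Let $F=F_1\to F_2$. Here $\PosF{F}=\PosF{F_2}$, so we only know $\PosF{F_2}\subseteq\HH$, and the induction hypothesis is available for~$F_2$ but \emph{not directly usable} for~$F_1$. Recall that $\langle\HH,I\rangle\modelsht F_1\to F_2$ holds iff (a)~$I\models F_1\to F_2$ and (b)~$\langle\HH,I\rangle\modelsht F_1$ implies $\langle\HH,I\rangle\modelsht F_2$. Condition~(a) is given. For~(b): assume $\langle\HH,I\rangle\modelsht F_1$. By the monotonicity/persistence property of the HT satisfaction relation (the ``here'' component entails the ``there'' component), $\langle\HH,I\rangle\modelsht F_1$ implies $I\models F_1$; combined with $I\models F_1\to F_2$ this gives $I\models F_2$. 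Now $I\models F_2$ together with $\PosF{F_2}=\PosF{F}\subseteq\HH$ lets us invoke the induction hypothesis on~$F_2$, yielding $\langle\HH,I\rangle\modelsht F_2$. This closes the implication case, and with it the induction.

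\textbf{Expected main obstacle.} The only subtlety is the implication case, and specifically the need to pass from $\langle\HH,I\rangle\modelsht F_1$ to $I\models F_1$ so that the induction hypothesis on~$F_2$ becomes applicable. This relies on the standard persistence lemma for HT-interpretations (that $\langle\HH,I\rangle\modelsht G$ implies $\langle I,I\rangle\modelsht G$, i.e.\ $I\models G$), which I would cite from~\ref{app:ht}. Everything else is a routine unwinding of the two recursive definitions, and care is only needed to keep the bookkeeping of $\PosF{\cdot}$ aligned with the syntactic case split—in particular remembering that $\PosF{F}=\emptyset$ whenever $I\not\models F$, which makes the hypothesis $\PosF{F}\subseteq\HH$ trivially true in the ``bad'' branch and consistent with the fact that $\modelsht$ need not hold there (but the lemma only claims the conclusion when $I\models F$).
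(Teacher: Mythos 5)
Your proposal is correct and follows essentially the same route as the paper: induction on the structure of $F$, handling atoms via the definition of $\modelsht$ and the agreement of $\modelsht$ with classical satisfaction on intensional-free formulas (Proposition~\ref{properties}(b)), propagating $\Pos{I}{F_i}\subseteq\HH$ through $\land$, $\lor$, $\forall$, $\exists$, and resolving the implication case via the persistence property (Proposition~\ref{properties}(a)) together with the induction hypothesis applied to the consequent. The only cosmetic difference is that the paper splits the implication case on whether $I\models F_1$ and uses persistence contrapositively, while you assume $\langle\HH,I\rangle\modelsht F_1$ and apply persistence directly—the two arguments are interchangeable.
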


\begin{proof}
By induction on the size of~$F$.
\emph{Case~1}: $F$ does not contain intensional symbols.  The
assertion of the lemma follows from Proposition~\ref{properties}(b)
in~\ref{app:ht}.  \emph{Case~2}:~$F$
contains an intensional symbol. Since ${I\models F}$,
the set $\Pos{I}F$ is determined by the recursive clauses in
the definition of~$\mathit{Pos}$.  \emph{Case~2.1:}~$F$ is $p(\boldt)$,
where $p$ is intensional.  Then, the assumption
$\Pos{I}F\subseteq \HH$ and the
claim $\langle\HH,I\rangle\modelsht F$
turn into the condition $p((\boldt^I)^*)\in \HH$.
\emph{Case~2.2:}~$F$ is $F_1\land F_2$.  Then, from the assumption
${I\models F}$ we conclude that ${I\models F_i}$
for $i=1,2$.  On the other hand,
$$\Pos{I}{F_i}\subseteq\Pos{I}F\subseteq \HH.$$
By the induction hypothesis, it
follows that \hbox{$\langle\HH,I\rangle\modelsht F_i$},
and consequently ${\langle\HH,I\rangle\modelsht F}$.
\emph{Case~2.3:}~$F$ is \hbox{$F_1\lor F_2$}.  Similar to Case~2.2.
\emph{Case~2.4:}~$F$ is $F_1\to F_2$.  Since ${I\models F_1\to F_2}$,
we only need to check that
${\langle\HH,I\rangle \not\modelsht F_1}$
or
${\langle\HH,I\rangle\modelsht F_2}$.
\emph{Case~2.4.1:} $I\models F_1$.
Since $I\models F_1\to F_2$, it follows that
\hbox{$I\models F_2$}.
On the other hand,
$$\Pos{I}{F_2}=\Pos{I}F\subseteq \HH.$$
By the induction hypothesis, it follows that
$\langle\HH,I\rangle\modelsht F_2$.
\emph{Case~2.4.2:} $I\not\models F_1$.
By Proposition~\ref{properties}(a) in~\ref{app:ht}, it follows that
$\langle\HH,I\rangle\not\modelsht F_1$.
\emph{Case~2.5:} $F$ is $\forall X\,G(X)$, where~$X$ is a variable of
sort~$s$.  Then, for every element~$d$ of~$|I|^s$, $I\models G(d^*)$
and
$$\Pos{I}{G(d^*)}\subseteq \Pos{I}F\subseteq \HH.$$
By the induction hypothesis, it follows that
\hbox{$\langle\HH,I\rangle\modelsht G(d^*)$}.  We can conclude that
\hbox{$\langle\HH,I\rangle\modelsht \forall X\,G(X)$}.
\emph{Case~2.6:} $F$ is $\exists X\,G(X)$.  Similar to Case~2.5.
\end{proof}

The definition of $G^{sp}_{I}(\Gamma)$ in Section~\ref{sec:mainlemma}
is restricted to the case when~$\Gamma$ is a completable set of sentences over~$\sigma^I$.
It can be generalized to arbitrary sets of sentences as follows.
A \emph{rule subformula} of a formula~$F$ is an
occurrence of an implication in~$F$ that does not belong to the
antecedent of any implication
(\citeNP[Section~7.3]{fer09}; \citeNP[Section~3.3]{lee11b}).
Let~$I$ be an interpretation of an arbitrary signature~$\sigma$, and let~$\Gamma$ be a set of
sentences over~$\sigma^I$.  All vertices of $G^{sp}_{I}(\Gamma)$ are
elements of~$I^\dar$.
The graph has an edge from~$A$ to~$B$ iff, for some sentence
$F_1\to F_2$ obtained from a rule subformula of a member of~$\Gamma$
by substituting names~$d^*$ for its
free variables, $A\in\Pos{I}{F_2}$ and $B\in\Pos{I}{F_1}$.

For any sentence~$F$, $G^{sp}_{I}(F)$ stands for  $G^{sp}_{I}(\{F\})$.

\begin{lemma}\label{l2}
  For any HT\nobreakdash-interpretation $\langle\HH,I \rangle$, any
  atom $M$ in $I^\dar \setminus \HH$, and any sentence~$F$
  over~$\sigma^I$, if
\begin{itemize}
\item[(i)]
  for every edge~$(M,B)$ of~$G^{sp}_{I}(F)$, $B\in \HH$,
\item[(ii)]
  $M\in\Pos{I}{F}$, and
\item[(iii)]
  $\langle\HH,I \rangle\modelsht F$,
\end{itemize}
then $\langle I^\dar \setminus\{M\},I\rangle\modelsht F$.
\end{lemma}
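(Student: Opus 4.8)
The plan is to prove Lemma~\ref{l2} by structural induction on~$F$, mirroring the case analysis used in the proof of Lemma~\ref{l1} but now tracking a single ``bad'' atom~$M$ rather than a whole set. The statement says: if $M\notin\HH$ but $M\in I^\dar$, every out-neighbor of~$M$ in the positive dependency graph of~$F$ already lies in~$\HH$, $M$ is a positive atom of~$F$, and $\langle\HH,I\rangle\modelsht F$, then adding~$M$ back --- i.e.\ moving from~$\HH$ to $I^\dar\setminus\{M\}$, which is a \emph{larger} set since $M\notin\HH$ --- still yields an HT-model of~$F$. Note $\HH\subseteq I^\dar\setminus\{M\}$, so intuitively we are only weakening the ``here'' component, but the subtlety is that weakening can break implications whose antecedents become satisfied in the larger world.

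First I would dispose of the easy cases. If~$F$ contains no intensional symbols, then $\langle\HH,I\rangle\modelsht F$ is equivalent to $I\models F$ (Proposition~\ref{properties}(b)), which does not mention the ``here'' set, so $\langle I^\dar\setminus\{M\},I\rangle\modelsht F$ follows immediately. If $F$ is an atom $p(\boldt)$ with~$p$ intensional, then (ii) forces $p((\boldt^I)^*)=M$, so the conclusion $M\in I^\dar\setminus\{M\}$ fails --- but this case cannot arise, because $\langle\HH,I\rangle\modelsht p(\boldt)$ would require $M\in\HH$, contradicting the hypothesis. For $F_1\wedge F_2$, $F_1\vee F_2$, $\forall X\,G(X)$, $\exists X\,G(X)$: here I would observe that the rule subformulas of~$F$ are exactly the rule subformulas of its immediate subformulas (for $\wedge,\vee$) or of $G(d^*)$ over all~$d$ (for the quantifiers), so the edge hypothesis~(i) transfers to each relevant subformula; hypothesis~(ii) localizes $M$ to (at least) one disjunct/conjunct/instance --- for $\wedge$ and $\forall$ it may be in several, but $M\in\Pos{I}{F_i}$ holds for whichever ones it appears in, and for the others we simply reapply Lemma~\ref{l1} or note the component is unaffected. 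Care is needed for $\wedge$ and $\forall$: a conjunct $F_j$ with $M\notin\Pos{I}{F_j}$ still satisfies $I\models F_j$ and $\Pos{I}{F_j}\subseteq\HH\subseteq I^\dar\setminus\{M\}$, so Lemma~\ref{l1} gives $\langle I^\dar\setminus\{M\},I\rangle\modelsht F_j$; for the conjunct containing~$M$ we use the induction hypothesis. The disjunction case uses $I\models F$ together with the recursive definition of $\Pos{I}{}$, exactly as in Case~2.3 of Lemma~\ref{l1}.

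The main obstacle --- and the case I expect to require real thought --- is the implication $F=F_1\to F_2$. The \emph{only} rule subformula of an implication is the implication itself, so the edge set of $G^{sp}_I(F_1\to F_2)$ consists precisely of the pairs $(A,B)$ with $A\in\Pos{I}{F_2}$, $B\in\Pos{I}{F_1}$; thus hypothesis~(i) says: since $M\in\Pos{I}{F_2}$ (by (ii), as $\Pos{I}{F_1\to F_2}=\Pos{I}{F_2}$), every $B\in\Pos{I}{F_1}$ lies in~$\HH$, i.e.\ $\Pos{I}{F_1}\subseteq\HH$. We must show $\langle I^\dar\setminus\{M\},I\rangle\modelsht F_1\to F_2$, i.e.\ for every HT-submodel --- in the relevant HT semantics it suffices to check the two worlds $I^\dar\setminus\{M\}$ and $I$ --- that satisfaction of~$F_1$ implies satisfaction of~$F_2$. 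The $I$-world part is just $I\models F_1\to F_2$, which follows from (iii). For the $(I^\dar\setminus\{M\})$-world: suppose $\langle I^\dar\setminus\{M\},I\rangle\modelsht F_1$. Since $I\models F_1$ (we have $I\models F_2$ from $M\in\Pos{I}{F_2}\subseteq I^\dar$ forcing $I\models F$ and the recursive clauses, which in particular give $I\models F_2$; alternatively if $I\not\models F_1$ there is nothing to prove) and $\Pos{I}{F_1}\subseteq\HH$, Lemma~\ref{l1} yields $\langle\HH,I\rangle\modelsht F_1$; then hypothesis~(iii) gives $\langle\HH,I\rangle\modelsht F_2$, and since $\Pos{I}{F_2}$ may contain $M$, I would finish by invoking the induction hypothesis on~$F_2$ --- checking that its hypotheses hold: $M\in\Pos{I}{F_2}$ is (ii); the edge condition for $G^{sp}_I(F_2)$ follows because every rule subformula of~$F_2$ is a rule subformula of $F_1\to F_2$, so edges of $G^{sp}_I(F_2)$ are edges of $G^{sp}_I(F_1\to F_2)$ and (i) applies; and $\langle\HH,I\rangle\modelsht F_2$ was just derived. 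The delicate point to verify carefully is the interaction between the two possibilities ``$M\in\Pos{I}{F_2}$'' and ``$M\notin\Pos{I}{F_2}$'': in the latter sub-case the induction hypothesis does not apply, but then $\Pos{I}{F_2}\subseteq\HH$, and Lemma~\ref{l1} directly gives $\langle I^\dar\setminus\{M\},I\rangle\modelsht F_2$ since $\Pos{I}{F_2}\subseteq\HH\subseteq I^\dar\setminus\{M\}$. I would also double-check the base clause of $\Pos{I}{}$: if $F$ is not satisfied by~$I$ then $\Pos{I}{F}=\emptyset$, contradicting (ii), so throughout we may assume $I\models F$, which is what makes the recursive clauses of $\Pos{I}{}$ usable at each step.
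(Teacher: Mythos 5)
Your proposal is correct and follows essentially the same route as the paper's proof: induction on $F$, ruling out the atomic case, splitting each of the $\land,\lor,\forall,\exists$ cases on whether $M\in\Pos{I}{F_i}$ (induction hypothesis in one subcase, Lemma~\ref{l1} in the other), and handling implication by deriving $\Pos{I}{F_1}\subseteq\HH$ from (i)--(ii), applying Lemma~\ref{l1} to $F_1$ and the induction hypothesis to $F_2$ (the paper splits on whether $\langle\HH,I\rangle\modelsht F_2$ rather than assuming $\langle I^\dar\setminus\{M\},I\rangle\modelsht F_1$, but the content is the same). One small correction: in the subcases where $M\notin\Pos{I}{F_j}$ you assert $\Pos{I}{F_j}\subseteq\HH$, which is not justified and not needed --- what Lemma~\ref{l1} requires is $\Pos{I}{F_j}\subseteq I^\dar\setminus\{M\}$, which follows from $\Pos{I}{F_j}\subseteq I^\dar$ together with $M\notin\Pos{I}{F_j}$; also, your ``delicate'' subcase $M\notin\Pos{I}{F_2}$ in the implication step is vacuous, since (ii) and $\Pos{I}{F}=\Pos{I}{F_2}$ already force $M\in\Pos{I}{F_2}$.
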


\begin{proof}
By induction on the size of~$F$.
Sentence~$F$ is neither atomic nor~$\bot$.  Indeed, in that case~$F$
would be an atomic sentence of the form $p(\boldt)$, where~$p$ is
intensional, because, by~(ii), $\Pos{I}F$ is non-empty.
Then, from~(iii), $p((\boldt^I)^*)\in \HH$.  On the
other hand, $\Pos{I}F$ is $\{p((\boldt^I)^*)\}$,
and from~(ii) we conclude that $M=p((\boldt^I)^*)$.  This contradicts the
assumption that $M\in I^\dar \setminus \HH $.  Thus five cases are
possible.

\emph{Case~1:} $F$ is $F_1\land F_2$.  From~(iii) we can conclude that
$\langle\HH,I \rangle\modelsht F_i$ for $i=1,2$.
It is sufficient to show that
$\langle I^\dar \setminus\{M\}, I\rangle\modelsht F_i$;
then the conclusion that
\hbox{$\langle I^\dar \setminus\{M\}, I\rangle\modelsht F$}
will follow.
\emph{Case~1.1:} $M\in\Pos{I}{F_i}$, so that
formula~$F_i$ satisfies condition~(ii). That formula satisfies
condition~(i) as well, because
$G^{sp}_{I}(F_i)$ is a subgraph of $G^{sp}_{I}(F)$,
and it satisfies condition~(iii).  So the conclusion
$\langle I^\dar \setminus\{M\}, I\rangle\modelsht F_i$ follows by
the induction hypothesis.  \emph{Case~1.2:} $M\not\in\Pos{I}{F_i}$.
Since $\Pos{I}{F_i}$ is a subset of~$I^\dar$,
\hbox{$\Pos{I}{F_i}\subseteq I^\dar \setminus\{M\}$}.
On the other hand, from the fact that
$\langle\HH,I \rangle\modelsht F_i$ we conclude, by
Proposition~\ref{properties}(a), that $I\models F_i$.  By Lemma~\ref{l1}, it
follows that
\hbox{$\langle I^\dar\setminus\{M\},I\rangle\modelsht F_i$}.

\emph{Case~2:} $F$ is $F_1\lor F_2$.  From~(iii) we can conclude that
$\langle\HH,I \rangle\modelsht F_i$ for $i=1$ or $i=2$.
It is sufficient to show that
$\langle I^\dar\setminus\{M\},I\rangle\modelsht F_i$;
then the conclusion that
$\langle I^\dar\setminus\{M\},I\rangle\modelsht F$ will follow.
The reasoning is the same as in Case~1.

\emph{Case~3:} $F$ is $F_1\to F_2$.  Then,
$ 
\Pos{I}F=\Pos{I}{F_2}
$. 
By~(ii), it follows that
\beq
M\in\Pos{I}{F_2}.
\eeq{e1a}
On the other hand,
$F$ is a rule subformula of itself, so that
for every atom~$B$ in $\Pos{I}{F_1}$, $(M,B)$ is an edge of
the graph $G^{sp}_{I}(F)$.  By~(i), it follows that every such
atom~$B$ belongs to~$\HH$.  Consequently
\beq
\Pos{I}{F_1}\subseteq \HH.
\eeq{e2}
\emph{Case~3.1:} $\langle\HH,I \rangle\modelsht F_2$, so
that~$F_2$ satisfies condition~(iii).  Since $G^{sp}_{I}(F_2)$ is
a subgraph of $G^{sp}_{I}(F)$, $F_2$ satisfies condition~(i) as
well.  By~(\ref{e1a}), $F_2$ satisfies
condition~(ii).  Then, by the induction hypothesis,
$\langle I^\dar\setminus\{M\},I\rangle\modelsht F_2$.
Consequently $\langle I^\dar\setminus\{M\},I\rangle\modelsht F$.
\emph{Case~3.2:} $\langle\HH,I \rangle\not\modelsht F_2$.
Then, in view of~(iii),
$\langle\HH,I \rangle\not\modelsht F_1$.
From this fact and formula~(\ref{e2}) we can conclude, by Lemma~\ref{l1}, that
$I \not\models F_1$.  By Proposition~\ref{properties}(a), it follows that
\hbox{$\langle I^\dar\setminus\{M\},I\rangle\not\modelsht F_1$},
which implies
$\langle I^\dar\setminus\{M\},I\rangle\modelsht F$.

\emph{Case~4:} $F$ is $\forall X\,G(X)$.  From~(iii) we can conclude that
for every $d$ in the domain $|I|^s$, where~$s$ is the sort of~$X$,
$\langle\HH,I \rangle\modelsht G(d^*)$.
It is sufficient to show that
$\langle I^\dar\setminus\{M\},I\rangle\modelsht G(d^*)$;
then the conclusion that
$\langle I^\dar\setminus\{M\},I\rangle\modelsht F$ will follow.
\emph{Case~4.1:} $M\in\Pos{I}{G(d^*)}$, so that
formula~$G(d^*)$ satisfies condition~(ii). That formula satisfies
condition~(i) as well, because
$G^{sp}_{I}(G(d^*))$ is a subgraph of $G^{sp}_{I}(F)$,
and it satisfies condition~(iii).  So the conclusion
${\langle I^\dar\setminus\{M\},I\rangle\modelsht G(d^*)}$ follows by
the induction hypothesis.
\emph{Case~4.2:} $M\not\in\Pos{I}{G(d^*)}$.
Since $\Pos{I}{G(d^*)}$ is a subset of~$I^\dar$,
we can conclude that
$\Pos{I}{G(d^*)}\subseteq I^\dar \setminus\{M\}$.
On the other hand, from the fact that
$\langle\HH,I \rangle\modelsht G(d^*)$ we conclude, by
Proposition~\ref{properties}(a), that $I\models G(d^*)$.  By Lemma~\ref{l1}, it
follows that
$\langle I^\dar\setminus\{M\},I\rangle\modelsht G(d^*)$.

\emph{Case~5:} $F$ is $\exists X\,G(X)$.  From~(iii) we can conclude that
for some $d$ in the domain $|I|^s$, where~$s$ is the sort of~$X$,
$\langle\HH ,I\rangle\modelsht G(d^*)$.
It is sufficient to show that
\hbox{$\langle I^\dar\setminus\{M\},I\rangle\modelsht G(d^*)$};
then the conclusion that
\hbox{$\langle I^\dar\setminus\{M\},I\rangle\modelsht F$}
will follow.  The reasoning is the same as in Case~4.
\end{proof}

\begin{lemma}\label{l3}
  For any interpretation~$I$ of~$\sigma$
  and any set~$\Gamma$ of sentences over~$\sigma^I$ such that the
  graph~$G^{sp}_I(\Gamma)$ has no infinite walks, $I$ is a stable
  model of~$\Gamma$ iff~$I$ is pointwise stable.
\end{lemma}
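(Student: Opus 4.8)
The plan is to prove the two implications separately. Recall that $I$ is a stable model of~$\Gamma$ when $\langle I,I\rangle\modelsht\Gamma$ and $\langle\HH,I\rangle\not\modelsht\Gamma$ for every $\HH\subsetneq I^\dar$, while $I$ is pointwise stable when $\langle I,I\rangle\modelsht\Gamma$ and $\langle I^\dar\setminus\{M\},I\rangle\not\modelsht\Gamma$ for every $M\in I^\dar$. With these formulations the left-to-right direction is immediate: each set $I^\dar\setminus\{M\}$ is a proper subset of~$I^\dar$, so if no proper subset of~$I^\dar$ yields an HT-model of~$\Gamma$, then in particular no single-atom deletion does. All the content is in the converse, where the hypothesis that $G^{sp}_I(\Gamma)$ has no infinite walk is used.

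For the converse I would argue by contradiction. Suppose $I$ is pointwise stable but $\langle\HH,I\rangle\modelsht\Gamma$ for some $\HH\subsetneq I^\dar$. The key point is that ``no infinite walk'' makes the edge relation of $G^{sp}_I(\Gamma)$ well-founded on~$I^\dar$, since an infinite descending chain for that relation is exactly an infinite walk. Hence the nonempty set $S=I^\dar\setminus\HH$ has an element~$M$ with no outgoing edge of $G^{sp}_I(\Gamma)$ into~$S$ --- otherwise every element of~$S$ would have such an edge and, by dependent choice, one could build an infinite walk lying inside~$S$. Since all edges of $G^{sp}_I(\Gamma)$ go to vertices in~$I^\dar$, it follows that every edge $(M,B)$ of $G^{sp}_I(\Gamma)$ satisfies $B\in\HH$.

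The remaining step is to show $\langle I^\dar\setminus\{M\},I\rangle\modelsht F$ for each $F\in\Gamma$; since $M\in I^\dar$, this contradicts pointwise stability and finishes the proof. Fix $F\in\Gamma$. From $\langle\HH,I\rangle\modelsht F$ and Proposition~\ref{properties}(a) we get $I\models F$, and since $G^{sp}_I(F)$ is a subgraph of $G^{sp}_I(\Gamma)$ (its edges arise from rule subformulas of~$F$, which are rule subformulas of members of~$\Gamma$), every edge $(M,B)$ of $G^{sp}_I(F)$ also has $B\in\HH$. If $M\notin\Pos{I}{F}$, then $\Pos{I}{F}\subseteq I^\dar\setminus\{M\}$ because $\Pos{I}{F}\subseteq I^\dar$, so Lemma~\ref{l1} gives $\langle I^\dar\setminus\{M\},I\rangle\modelsht F$. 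If $M\in\Pos{I}{F}$, then~$F$ together with the atom $M\in I^\dar\setminus\HH$ satisfies hypotheses~(i)--(iii) of Lemma~\ref{l2}, so Lemma~\ref{l2} gives the same conclusion.

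I expect the genuinely delicate work to be already isolated in Lemmas~\ref{l1} and~\ref{l2}, so that the only real obstacle here is the well-foundedness argument that extracts the minimal atom~$M$ from $I^\dar\setminus\HH$, together with the bookkeeping that $G^{sp}_I(F)$ embeds into $G^{sp}_I(\Gamma)$ --- this is exactly what lets the ``outgoing edges land in~$\HH$'' condition pass from the global graph to the per-formula graph, matching hypothesis~(i) of Lemma~\ref{l2}. A minor point to get right is the precise form of pointwise stability (deletion of one atom) versus stability (deletion of an arbitrary proper subset), and that $\langle I^\dar\setminus\{M\},I\rangle$ is genuinely a smaller HT-interpretation because $M\in I^\dar$.
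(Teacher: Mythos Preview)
Your proposal is correct and follows essentially the same route as the paper: one direction is trivial, and for the other you pick an atom~$M\in I^\dar\setminus\HH$ with no outgoing edge into $I^\dar\setminus\HH$ (using the absence of infinite walks), observe that all edges from~$M$ therefore land in~$\HH$, and then split on whether $M\in\Pos{I}{F}$ to invoke Lemma~\ref{l2} or Lemma~\ref{l1}. Two cosmetic points: write $\langle I^\dar,I\rangle$ rather than $\langle I,I\rangle$, and note that the paper's definition of pointwise stability does not itself include the clause ``$I\models\Gamma$''---the paper's proof simply treats the model condition as common to both sides.
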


\begin{proof}
We need to show that if $I$ is a model of~$\Gamma$
such that the graph~$G^{sp}_{I}(\Gamma)$ has no infinite walks, and
there exists a proper subset~$\HH$ of~$I^\dar$ such that $\langle\HH,I\rangle$ 
satisfies~$\Gamma$, then a subset with this property can be
obtained from~$I^\dar$ by removing a single element.

The set $I^\dar \setminus \HH$ contains an atom~$M$ such that
for every edge~$(M,B)$ of the graph $G^{sp}_{I}(\Gamma)$,
$B\not\in I^\dar \setminus \HH$.  Indeed, otherwise
this graph would have an infinite walk consisting of
elements of $I^\dar \setminus \HH$. 
On the other hand, for every such edge,
$B\in I^\dar$.  Indeed, from the definition of the
graph~$G^{sp}_I(\Gamma)$ we see that for every edge~$(M,B)$ of that graph,
$B$ belongs to the set $\Pos{I}F$ for some sentence~$F$, and that set is
contained in~$I^\dar$.  Consequently for every edge~$(M,B)$ of
$G^{sp}_{I}(\Gamma)$, $B\in \HH$.

We will show that $\langle I^\dar\setminus\{M\},I\rangle$
satisfies~$\Gamma$.  Take a sentence~$F$ from~$\Gamma$.
\emph{Case~1:} \hbox{$M\in\Pos{I}F$}.
Then, condition~(ii) of Lemma~\ref{l2} is
satisfied for the HT\nobreakdash-interpretation $\langle\HH,I\rangle$.  Condition~(i)
is satisfied for this HT\nobreakdash-interpretation as well, because
$G^{sp}_{I}(F)$ is a subgraph of $G^{sp}_{I}(\Gamma)$; furthermore,
condition~(iii) is satisfied because $\langle\HH,I\rangle$ satisfies~$\Gamma$.
Consequently
$\langle I^\dar\setminus\{M\},I\rangle\modelsht F$ by Lemma~\ref{l2}.
\emph{Case~2:} $M\not\in\Pos{I}F$.   Then,
$\Pos{I}F\subseteq  I^\dar\setminus\{M\}$.  Since
$I \models F$, we can conclude that
$\langle I^\dar\setminus\{M\},I\rangle\modelsht F$
by Lemma~\ref{l1}.
\end{proof}

A model~$I$ of a set~$\Gamma$ of completable sentences over~$\sigma$ is
\emph{supported} if for every atom $p(\boldd^*)$ in~$I^\dar$ there exists an
instance $F\to p(\boldd^*)$ of a member of~$\Gamma$
such that~$I \models F$.

\begin{lemma}\label{lb}
  Every pointwise stable model of a completable set of sentences is supported.
\end{lemma}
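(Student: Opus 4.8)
The plan is to prove the contrapositive of the statement implicitly built into the notion of pointwise stability: if $I$ is a model of $\Gamma$ that is not supported, then some single atom can be dropped from $I^{\dar}$ without destroying the HT\nobreakdash-model property. So let $I$ be a pointwise stable model of a completable set $\Gamma$ of sentences over $\sigma^{I}$; in particular $I\models\Gamma$. Assume, toward a contradiction, that $I$ is not supported. Then there is an atom $M=p(\boldd^{*})$ in $I^{\dar}$---so that $p$ is intensional and $I\models M$---such that for \emph{every} instance $F\to M$ of a member of $\Gamma$ we have $I\not\models F$. I would then show that $\langle I^{\dar}\setminus\{M\},I\rangle\modelsht\Gamma$, which contradicts pointwise stability of $I$, since that notion asserts precisely that no singleton removal from $I^{\dar}$ yields an HT\nobreakdash-model of $\Gamma$.

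To establish $\langle I^{\dar}\setminus\{M\},I\rangle\modelsht\Gamma$, I would go through each sentence $\uc(F\to G)$ of $\Gamma$, and (since HT\nobreakdash-satisfaction of a universally closed sentence reduces to HT\nobreakdash-satisfaction of all its instances) through each instance $F'\to G'$, in three cases. \emph{Case~1:} $G$, hence $G'$, contains no intensional symbol. Then $\Pos{I}{F'\to G'}=\Pos{I}{G'}=\emptyset$, so in particular $\Pos{I}{F'\to G'}\subseteq I^{\dar}\setminus\{M\}$; since $I\models\Gamma$ gives $I\models F'\to G'$, Lemma~\ref{l1} yields $\langle I^{\dar}\setminus\{M\},I\rangle\modelsht F'\to G'$. \emph{Case~2:} $G=p(\boldV)$ and the consequent $G'$ of the instance is an atom different from $M$. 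Then $\Pos{I}{F'\to G'}=\Pos{I}{G'}$ is either $\emptyset$ or the singleton $\{G'\}$; in both cases it is a subset of $I^{\dar}\setminus\{M\}$, and again $I\models F'\to G'$, so Lemma~\ref{l1} applies. \emph{Case~3:} $G=p(\boldV)$ and the consequent of the instance equals $M$, i.e., the instance is $F'\to M$. By the non-support assumption, $I\not\models F'$; hence $I\models F'\to M$ holds vacuously, and by Proposition~\ref{properties}(a) (contrapositive) $\langle I^{\dar}\setminus\{M\},I\rangle\not\modelsht F'$, so the implication $F'\to M$ is satisfied in HT logic as well, by the definition of HT\nobreakdash-satisfaction of an implication.

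I do not anticipate a real difficulty; the proof is a direct unfolding of definitions, with Lemma~\ref{l1} doing the work in the ``harmless'' cases. The point that requires care is Case~3: one must notice that ``$I$ is not supported at $M$'' is exactly the assertion that every instance of a member of $\Gamma$ with consequent $M$ has an $I$\nobreakdash-false antecedent---this is why completability matters, since it guarantees that the instances with a given consequent $p(\boldd^{*})$ are precisely the instantiations of the defining implications $\uc(F_{i}(\boldV)\to p(\boldV))$ of $p$---and, consequently, that such an instance is satisfied in HT logic trivially, both components of the HT\nobreakdash-implication clause being met. A secondary point worth recording is that $I^{\dar}$ consists only of intensional ground atoms true in $I$, so $M$ automatically has the form $p(\boldd^{*})$ with $p$ intensional, as the definition of support requires.
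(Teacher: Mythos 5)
Your proof is correct, but it takes the opposite route from the paper's. You argue contrapositively: assuming some atom $M=p(\boldd^*)\in I^\dar$ has no supporting instance, you check, instance by instance and with a three-way case split on the consequent, that $\langle I^\dar\setminus\{M\},I\rangle$ still HT-satisfies every member of~$\Gamma$ --- using Lemma~\ref{l1} when the consequent is intensional-free or an atom different from~$M$, and the HT clause for implication together with Proposition~\ref{properties}(a) when the consequent is exactly~$M$ --- which contradicts pointwise stability. The paper argues directly: for a given $p(\boldd^*)\in I^\dar$, pointwise stability supplies a single instance $F\to G$ that $\langle I^\dar\setminus\{p(\boldd^*)\},I\rangle$ falsifies; since $I\models F\to G$, the HT semantics forces $I\models F$ and $I\models G$ while the reduced pair fails~$G$, and Proposition~\ref{properties}(c) plus the completable form of~$G$ pins $G$ down to $p(\boldd^*)$, so the falsified instance is itself the supporting one. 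The paper's argument is shorter, since only that one instance needs to be examined and Lemma~\ref{l1} is not invoked; yours is somewhat longer but reuses the same machinery as Lemmas~\ref{l3} and~\ref{l4} (verifying HT-satisfaction after removing one atom via Lemma~\ref{l1}), and it makes explicit where completability matters in your setting, namely in the exhaustiveness of the three cases and in $\Pos{I}{G'}$ being at most a singleton distinct from~$M$ in your Case~2. Both arguments are sound.
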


\begin{proof}
Let~$I$ be a pointwise stable model of a completable set~$\Gamma$ of sentences.
Take an atom $p(\boldd^*)$ from~$I^\dar$.  We need to find an instance
$F\to p(\boldd^*)$ of a completable sentence from~$\Gamma$ such that~$I\models F$.

By the definition of a pointwise stable model,
$\langle I^\dar\setminus\{p(\boldd^*)\},I\rangle$ does not satisfy~$\Gamma$.
Then, one of the completable sentences from~$\Gamma$ has an instance $F\to G$
such that
\beq
\hbox{$\langle I^\dar\setminus\{p(\boldd^*)\},I\rangle\not\modelsht F\to G$}.
\eeq{C9}
We will show that this instance has the required properties.
Since $I$ is a model of~$\Gamma$,
\beq
I\models F\to G.
\eeq{C0}
From~(\ref{C9}) and~(\ref{C0}) we conclude that
\beq
\hbox{$\langle I^\dar\setminus\{p(\boldd^*)\},I\rangle\modelsht F$}
\eeq{C3}
and
\beq
\hbox{$\langle I^\dar\setminus\{p(\boldd^*)\},I\rangle\not\modelsht G$}.
\eeq{C2}
From (\ref{C3}) and Proposition~\ref{properties}(a), $I\models F$.  Then, in view of (\ref{C0}),
\beq
I\models G
\eeq{C4}
From~(\ref{C2}),~(\ref{C4}) and Proposition~\ref{properties}(c)
we can conclude that formula~$G$d
contains~$p$, so that it has the form $p(\bolde^*)$ for some tuple of domain elements~$\bolde$.
Then, from (\ref{C4}),
$p(\bolde^*)\in I^\dar$, and
from (\ref{C2}), $p(\bolde^*)\not\in I^\dar \setminus\{p(\boldd^*)\}$.
Consequently $p(\bolde^*)$ is $p(\boldd^*)$, so that $\bolde=\boldd$.
\end{proof}

\begin{lemma}\label{l4}
  For any interpretation~$I$ of~$\sigma$
  and any completable set~$\Gamma$ of sentences over~$\sigma$ such that the
  graph~$G^{sp}_I(\Gamma)$ has no infinite walks, $I$ is a supported
  model of~$\Gamma$ iff~$I$ is stable.
\end{lemma}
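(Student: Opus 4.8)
The plan is to prove the biconditional by establishing its two implications separately, in each case reducing to the lemmas already proved in this section. For the direction ``$I$ stable $\Rightarrow$ $I$ supported model'' the argument is short: by the definition of a stable model, $I$ is in particular a model of~$\Gamma$; since $G^{sp}_I(\Gamma)$ has no infinite walks, Lemma~\ref{l3} lets me pass from stability to pointwise stability; and Lemma~\ref{lb} then yields that~$I$ is supported. Putting these together, $I$ is a supported model of~$\Gamma$. This half is essentially just a chain of the previous lemmas.

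For the converse, ``$I$ supported model $\Rightarrow$ $I$ stable'', I would argue by contradiction. Suppose~$I$ is a supported model of~$\Gamma$ that is not stable. By Lemma~\ref{l3} (again using the no-infinite-walks hypothesis), $I$ is not pointwise stable; since~$I$ is a model of~$\Gamma$, this means there is an atom~$M=p(\boldd^*)$ in~$I^\dar$ such that $\langle I^\dar\setminus\{M\},I\rangle$ satisfies~$\Gamma$. By supportedness, there is an instance $F\to p(\boldd^*)$ of a member of~$\Gamma$ with $I\models F$. The crucial point is that $M\notin\Pos{I}{F}$: otherwise, since $M\in I^\dar$ gives $I\models M$ and hence $\Pos{I}{p(\boldd^*)}=\{M\}$, the instance $F\to p(\boldd^*)$ would contribute the edge $(M,M)$ to $G^{sp}_I(\Gamma)$, i.e.\ a self-loop, contradicting the absence of infinite walks. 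Given $M\notin\Pos{I}{F}$, we have $\Pos{I}{F}\subseteq I^\dar\setminus\{M\}$, so Lemma~\ref{l1} yields $\langle I^\dar\setminus\{M\},I\rangle\modelsht F$; combined with $\langle I^\dar\setminus\{M\},I\rangle\modelsht F\to p(\boldd^*)$ this forces $M\in I^\dar\setminus\{M\}$, the desired contradiction.

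I do not expect a serious obstacle, since the substantive work is carried by Lemmas~\ref{l1}, \ref{l3} and~\ref{lb}. The one place that requires care is the self-loop argument in the second implication: it is precisely the assumption $M\in I^\dar$ (so that $I\models M$, and therefore $\Pos{I}{p(\boldd^*)}$ equals $\{M\}$ rather than~$\emptyset$) that turns ``$M\in\Pos{I}{F}$'' into an actual edge of $G^{sp}_I(\Gamma)$. It is also worth double-checking that the definitions of stable model, pointwise stable model, and supported model are each invoked with the quantifier over atoms of~$I^\dar$ in the right position, and that ``satisfies~$\Gamma$'' is used in its HT sense (so that it descends to all instances of members of~$\Gamma$, as in the proof of Lemma~\ref{lb}).
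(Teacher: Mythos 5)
Your proposal is correct and follows essentially the same route as the paper: the stable-to-supported direction chains Lemmas~\ref{l3} and~\ref{lb}, and the supported-to-stable direction uses supportedness to obtain an instance $F\to M$ with $I\models F$, rules out $M\in\Pos{I}{F}$ via the self-loop/infinite-walk argument, and applies Lemma~\ref{l1} to refute $\langle I^\dar\setminus\{M\},I\rangle\modelsht F\to M$. The only difference is presentational: the paper verifies pointwise stability directly for every $M\in I^\dar$, whereas you phrase the same steps as a proof by contradiction.
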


\begin{proof}
The if part follows from Lemmas~\ref{l3} and~\ref{lb}.  For the only if part, consider
a supported model $I$ of a completable set~$\Gamma$ of sentences
such that the graph $G^{sp}_{I}(\Gamma)$ has no infinite walks; we need to
prove that $I$ is stable.  According to Lemma~\ref{l3}, it is sufficient to
check that~$I$ is pointwise stable.

Take any atom~$M$ in~$I^\dar$; we will show that
$\langle I^\dar\setminus\{M\}, I\rangle$ is not an HT\nobreakdash-model of~$\Gamma$.
Since $I$ is supported, one of the completable sentences in~$\Gamma$ has an
instance ${F\to M}$ such that ${I\models F}$.  Atom~$M$ does not belong
to $\Pos{I}F$, because otherwise $M,M,\dots$ would be an
infinite walk in $G^{sp}_{I}(\Gamma)$.  Since the set $\Pos{I}F$
is a subset of~$I^\dar$, we can conclude that it is a subset of $I^\dar\setminus\{M\}$.
By Lemma~\ref{l1}, it follows that
$\langle I^\dar\setminus\{M\}, I\rangle\modelsht F$.
Therefore $\langle I^\dar\setminus\{M\}, I\rangle\not\modelsht F\to M$.
\end{proof}

\begin{lemma}\label{lc}
  For any interpretation~$I$ of~$\sigma$
  and any completable set~$\Gamma$ sentences over~$\sigma$,
  $I$ is a supported model of~$\Gamma$ iff $I\models\comp[\Gamma]$.
\end{lemma}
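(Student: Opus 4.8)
The plan is to read off $\comp[\Gamma]$ as a first-order transcription of the two conditions defining a supported model---being a model of $\Gamma$, and providing, for each true intensional atom, an ``activated'' rule instance having that atom as its consequent. First I would recall from Section~\ref{ssec:simple} that $\Gamma$ splits into its first-order constraints, which occur verbatim among the conjuncts of $\comp[\Gamma]$, and, for each intensional symbol $p$, its definition, consisting of finitely many (possibly zero) sentences $\uc(F_i(\mathbf{V})\to p(\mathbf{V}))$, whose completed definition $\forall\mathbf{V}\,(p(\mathbf{V})\leftrightarrow\bigvee_i\exists\mathbf{U}_i\,F_i(\mathbf{V}))$ supplies the remaining conjuncts of $\comp[\Gamma]$. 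The elementary observation driving the proof is that, since $p(\mathbf{V})$ contains no free occurrence of the variables $\mathbf{U}_i$, the sentence $\uc(F_i(\mathbf{V})\to p(\mathbf{V}))$ is logically equivalent to $\forall\mathbf{V}\,(\exists\mathbf{U}_i\,F_i(\mathbf{V})\to p(\mathbf{V}))$, that is, to the ``$\leftarrow$'' half of the biconditional in the completed definition of $p$; the ``$\rightarrow$'' half, $\forall\mathbf{V}\,(p(\mathbf{V})\to\bigvee_i\exists\mathbf{U}_i\,F_i(\mathbf{V}))$, will express supportedness for $p$.

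For the ``only if'' direction, assume $I$ is a supported model of $\Gamma$. As a model of $\Gamma$, $I$ satisfies the constraints, hence the corresponding conjuncts of $\comp[\Gamma]$, and it satisfies each definition sentence $\uc(F_i(\mathbf{V})\to p(\mathbf{V}))$, which by the equivalence above yields the ``$\leftarrow$'' half of the completed definition of every intensional $p$. For the ``$\rightarrow$'' half, suppose $I\models p(\mathbf{d}^*)$, so that $p(\mathbf{d}^*)\in I^\downarrow$; supportedness furnishes an instance $F\to p(\mathbf{d}^*)$ of a member of $\Gamma$ with $I\models F$. Since the consequent of this member mentions the intensional symbol $p$, completability forces its head to have the form $p(\mathbf{V})$, so the member is one of the sentences $\uc(F_i(\mathbf{V})\to p(\mathbf{V}))$, and $F$ is obtained from $F_i(\mathbf{V})$ by substituting $\mathbf{d}^*$ for $\mathbf{V}$ and names for the remaining free variables $\mathbf{U}_i$; therefore $I\models\exists\mathbf{U}_i\,F_i(\mathbf{d}^*)$, and hence $I\models\bigvee_i\exists\mathbf{U}_i\,F_i(\mathbf{d}^*)$, as required. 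So $I\models\comp[\Gamma]$.

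For the ``if'' direction, assume $I\models\comp[\Gamma]$. Then $I$ satisfies the constraints of $\Gamma$ directly, and the ``$\leftarrow$'' half of each completed definition, together with the quantifier equivalence above, shows that $I$ satisfies every sentence $\uc(F_i(\mathbf{V})\to p(\mathbf{V}))$; hence $I\models\Gamma$. For supportedness, take $p(\mathbf{d}^*)\in I^\downarrow$, i.e., $I\models p(\mathbf{d}^*)$; the ``$\rightarrow$'' half of the completed definition of $p$ gives $I\models\exists\mathbf{U}_i\,F_i(\mathbf{d}^*)$ for some $i$, so there is a tuple $\mathbf{e}$ of domain elements such that $I$ satisfies the instance of $F_i(\mathbf{V})\to p(\mathbf{V})$ obtained by substituting $\mathbf{d}^*$ for $\mathbf{V}$ and $\mathbf{e}^*$ for $\mathbf{U}_i$; its antecedent is satisfied by $I$ and its consequent is $p(\mathbf{d}^*)$, so $I$ is supported.

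The step I expect to require the most care is the bookkeeping around the notion of \emph{instance}: aligning the substitution of the names $\mathbf{d}^*$ for $\mathbf{V}$ with the leading quantifier $\forall\mathbf{V}$ of the completed definition, and the substitution of names for the auxiliary variables $\mathbf{U}_i$ with the existential quantifiers $\exists\mathbf{U}_i$ introduced by completion. One must also not overlook the degenerate case in which $p$ has an empty definition in $\Gamma$: then the completed definition of $p$ is $\forall\mathbf{V}\,\neg p(\mathbf{V})$, no instance of a member of $\Gamma$ has a consequent of the form $p(\mathbf{d}^*)$, and the arguments above still give the equivalence, with ``$p(\mathbf{d}^*)\notin I^\downarrow$ for every $\mathbf{d}$'' on both sides.
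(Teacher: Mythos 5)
Your proposal is correct and follows essentially the same route as the paper: decompose $\comp[\Gamma]$ into the constraints, the ``$\leftarrow$'' halves of the completed definitions (equivalent, via the quantifier shift $\uc(F_i\to p(\mathbf{V}))\equiv\forall\mathbf{V}(\exists\mathbf{U}_i\,F_i\to p(\mathbf{V}))$, to being a model of $\Gamma$), and the ``$\rightarrow$'' halves (equivalent to supportedness). The paper phrases this as a single equivalence of conditions rather than two directions, but the substance, including the instance/name bookkeeping you flag, is the same.
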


\begin{proof}
Let the sentences defining an intensional predicate symbol~$p$ in~$\Gamma$ be
\beq
\forall \boldU_i\boldV(F_i(\boldU_i,\boldV)\to p(\boldV))\qquad ( i=1,\dots,k).
\eeq{r1}
The completed definition of~$p$ is
$$
\forall {\boldV}\left(p({\boldV}) \lrar \bigvee_{i = 1}^k
  \exists \boldU_i \, F_i(\boldU_i,\boldV)
\right).
$$
Hence, an interpretation $I$ of~$\sigma$ satisfies $\comp[\Gamma]$ iff the following three conditions hold:
\begin{itemize}
\item[(a)]
  for every intensional~$p$, $I$ satisfies the sentence
$$
\forall {\boldV}\left(p({\boldV}) \to \bigvee_{i = 1}^k
  \exists \boldU_i \, F_i(\boldU_i,\boldV)
\right);
$$
\item[(b)]
for every intensional~$p$ and for every~$i$, $I$ satisfies the sentence
\beq
\forall {\boldV}(
  \exists \boldU_i \, F_i(\boldU_i,\boldV)
\to p({\boldV}));
\eeq{cd3}
\item[(c)]
$I$ satisfies all constraints of~$\Gamma$.
\end{itemize}
Since~(\ref{cd3}) is equivalent to~(\ref{r1}),
$I$ is a model
of~$\Gamma$ if and only if conditions~(b) and~(c) hold.  It remains to
check that condition~(a) holds if and only if the model $I$ is supported.

Condition~(a) can be expressed by saying that
\medskip

\begin{center}
for every atom $p(\boldd^*)$ in~$I^\dar$ there exists~$i$ such that
$
I\models
\exists \boldU_i \, F_i(\boldU_i,\boldd^*),
$
\end{center}

\medskip\noindent
or, equivalently, that

\medskip\begin{center}
  for every atom $p(\boldd^*)$ in~$I^\dar$ there exist~$i$ and a
  tuple $\boldd_i$ of domain elements\\
  such that
  $I\models F_i(\boldd_i^*,\boldd^*)$.
\end{center}

\medskip\noindent
Since the members of~$\Gamma$ defining~$p$ are sentences of form~(\ref{r1}),
the last condition expresses that
$I$ is a supported model of~$\Gamma$.
\end{proof}

The assertion of the Main Lemma follows from Lemmas~\ref{l4} and~\ref{lc}.

\section{Proof of Theorem~\ref{thm1}}\label{ssec:taustar.prop}

The proof of Theorem~\ref{thm1} refers to infinitary
propositional formulas and the strong equivalence relation between them
\cite{har17}.

Translations~$\tau$ and~$\tau^*$ are related by a
third translation~$F\mapsto F^{\rm prop}$
\cite[Section~5]{lif19}, which transforms sentences over~$\sigma_0$ into
infinitary propositional combinations of precomputed atoms.
This translation is defined as follows:
\begin{itemize}
  \item if $F$ is $p(t_1,\dotsc,t_n)$, then $F^{\rm prop}$ is obtained from $F$ by replacing each~$t_i$ by the value obtained after evaluating all arithmetic functions in~$t_i$;
  
  \item if $F$ is $t_1 \,\mathit{rel}\, t_2$, then $F^{\rm prop}$ is $\top$ if the values of $t_1$ and $t_2$ are in the relation $\mathit{rel}$,  and $\bot$ otherwise;

  \item $\bot^{\rm prop}$ is $\bot$;

  \item $(F \odot G)^{\rm prop}$ is $F^{\rm prop} \odot G^{\rm prop}$ for every binary connective~$\odot$;
  
  \item $(\forall X\, F(X))^{\rm prop}$ is the conjunction of the formulas $F(r)^{\rm prop}$
    over all precomputed terms $r$ if $X$ is a general variable, and
    over all numerals $r$ if $X$ is an integer variable;

  \item $(\exists X\, F(X))^{\rm prop}$ is the disjunction of the formulas $F(r)^{\rm prop}$
    over all precomputed terms $r$ if $X$ is a general variable, and over
    all numerals $r$ if $X$ is an integer variable.
\end{itemize}

This translation is similar to the grounding operation defined
by~\citeN[Section~2]{tru12}.
The following proposition, analogous to Proposition~2 from Truszczynski's
paper and to Proposition~3 by~\citeN{lif19}, relates the meaning of a sentence to the meaning of its propositional translation. 
It differs from the last result in view of the fact that the division and
modulo operations are treated here in a different way (see
Section~\ref{sssec:semantics} and Footnote~\ref{ft}), but
can be proved in a similar way.

\begin{proposition}\label{prop:taustar-tau}
  For any rule~$R$, 
  $(\tau^*R)^{\rm prop}$ is strongly equivalent to~$\tau R$.
\end{proposition}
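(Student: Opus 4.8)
The plan is to prove Proposition~\ref{prop:taustar-tau} by structural induction on the rule~$R$, reducing it to a single key lemma about terms: for every mini-\gringo\ term~$t$ and every precomputed term~$r$, the infinitary propositional formula $(\val{t}{r})^{\rm prop}$ is equivalent (as a truth value, i.e.\ equals~$\top$ or~$\bot$) to the condition $r\in[t]$. Since $\val{t}{V}$ contains only integer quantifiers and the atomic formulas are equalities and comparisons between arithmetic terms, $(\val{t}{r})^{\rm prop}$ evaluates to a constant $\top$ or $\bot$, and I would show by induction on the structure of~$t$ that this constant is $\top$ iff $r\in[t]$. The base cases (precomputed terms, variables) are immediate, and the cases for $|\,|$, $+$, $-$, $\times$, and $..$ unwind the definition of $[\cdot]$ directly. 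The only delicate cases are division $t_1/t_2$ and modulo $t_1\backslash t_2$: here one must check that the quantified variable~$K$ pinned down by $K\times|J|\le|I|<(K+\num 1)\times|J|$, together with the sign split on $I\times J$, produces exactly $\mathit{round}(n_1/n_2)$ and $n_1 - n_2\cdot\mathit{round}(n_1/n_2)$ respectively, for $n_1\in[t_1]$, $n_2\in[t_2]$, $n_2\ne 0$ (and that the formula is unsatisfiable, hence $(\val{t}{r})^{\rm prop}=\bot$, when $[t_2]$ yields only~$0$ or either value set is empty). This is the step where the corrected treatment of division matters and where Footnote~\ref{ft} is relevant; it is a finite case analysis on the sign of $n_1$ and $n_2$.

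From the term lemma I would lift to tuples: $(\val{\bf t}{\bf r})^{\rm prop}$ is $\top$ iff ${\bf r}\in[{\bf t}]$, and $(\exists{\bf V}(\val{\bf t}{\bf V}\land p({\bf V})))^{\rm prop}$ — the propositional image of $\tau^B(p({\bf t}))$ — is the disjunction over all precomputed tuples~${\bf r}$ of $(\val{\bf t}{\bf r})^{\rm prop}\land p({\bf r})$, which is strongly equivalent to $\bigvee_{{\bf r}\in[{\bf t}]} p({\bf r})$, i.e.\ to $\tau(p({\bf t}))$. The cases $\no\ p({\bf t})$, $\no\ \no\ p({\bf t})$ and comparisons $t_1\rel t_2$ are analogous. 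Then, using that $(\cdot)^{\rm prop}$ commutes with the binary connectives, I get that $(\tau^B(\body))^{\rm prop}$ is strongly equivalent to $\tau(\body)$ for any body.

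Finally I handle the three rule forms. For a basic rule $p({\bf t})\ar\body$, the formula $\tau^*R$ is $\uc(\val{\bf t}{\bf V}\land\tau^B(\body)\to p({\bf V}))$; applying $(\cdot)^{\rm prop}$ turns the universal closure over the general variables~${\bf V}$ (and any variables of~$\body$) into a conjunction over precomputed instantiations, and the term lemma collapses the $\val{\bf t}{\bf V}$ conjunct so that only tuples ${\bf r}\in[{\bf t}]$ survive, yielding a conjunction of formulas strongly equivalent to $\tau(\body)\to p({\bf r})$, which is strongly equivalent to $\tau(\body)\to\bigwedge_{{\bf r}\in[{\bf t}]}p({\bf r})$, i.e.\ to $\tau R$. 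The choice-rule case is identical except that $p({\bf V})$ is guarded by $\neg\neg p({\bf V})$ and the consequent becomes $p({\bf r})\lor\neg p({\bf r})$; here I would invoke the remark in the paper that the two variants of $\tau^*$ on choice rules are HT-equivalent, or simply note $\neg\neg p\land(\,\cdot\,\to p)$ is strongly equivalent to $\cdot\to(p\lor\neg p)$ after the substitution. The constraint case $\ar\body$ is the simplest: $(\uc(\tau^B(\body)\to\bot))^{\rm prop}$ is strongly equivalent to $\neg\tau(\body)=\tau R$. Throughout, the only nontrivial reasoning is packaged in the term lemma; the rest is bookkeeping about how $(\cdot)^{\rm prop}$ interacts with quantifiers and connectives, which is routine once the term lemma is in hand — and indeed, as the paper notes, the argument parallels Proposition~3 of \citeN{lif19}, so I would follow that structure and only redo the division/modulo computations carefully. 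The main obstacle is therefore precisely that finite but fiddly sign analysis for $/$ and $\backslash$.
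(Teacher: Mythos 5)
Your proposal is correct and takes essentially the approach the paper intends: the paper does not spell out a proof but defers to the analogous Proposition~3 of \citeN{lif19} (and Proposition~2 of Truszczynski), noting that only the treatment of $/$ and $\backslash$ changes, and your induction based on the term-level lemma that $(\val{t}{r})^{\rm prop}$ reduces to $\top$ iff $r\in[t]$, with the careful sign analysis for division and modulo and the routine lifting through $\tau^B$, the connectives, and the three rule forms (including the HT-equivalence used for choice rules), is exactly that argument adapted to the corrected semantics. No gaps beyond the paper's own level of detail.
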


Since standard interpretations of~$\sigma_0$ are semi-Herbrand
(Section~\ref{sec:mainlemma}), 
the correpondence between
tuples {\bf d} of elements of domains of a standard interpretation
and tuples ${\bf d}^*$ of their names
is one-to-one, and we take the liberty to identify them. 
Therefore, for a standard interpretation~$I$ of~$\sigma_0$,
$I^\dar$ is identified with the set of precomputed atoms that are
satisfied by~$I$.  In view of this convention,
the transformation $I\mapsto I^\dar$ is the inverse of the transformation
$\J\mapsto \J^\uparrow$, defined in Section~\ref{ssec:formulas}:
for any set~$\J$ of precomputed atoms over~$\sigma_0$,
\beq
(\J^\uparrow)^\dar=\J.
\eeq{updown}

\begin{lemma}[\citeNP{fan23}, Lemma~2(i)]\label{mirek_prop2}
  A standard interpretation $I$ of~$\sigma_0$
  satisfies a sentence~$F$ over~$\sigma_0$ iff
  $I^\dar$ satisfies $F^{\rm prop}$.
\end{lemma}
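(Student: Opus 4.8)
The plan is to prove the equivalence by structural induction on $F$, in the generalized form that covers formulas with free variables. For a formula $F$ over $\sigma_0$ with free variables among $X_1,\dots,X_k$ of sorts $s_1,\dots,s_k$, and domain elements $d_1\in|I|^{s_1},\dots,d_k\in|I|^{s_k}$ (which, because $I$ is semi-Herbrand, are precomputed terms or numerals and are identified with their names $d_i^*$), I would show that $I\models F(d_1,\dots,d_k)$ iff $I^\dar\models (F(d_1,\dots,d_k))^{\rm prop}$. Before the induction on formulas, I would record an auxiliary fact about terms: for every ground term $t$ over $\sigma_0$ --- that is, a term built from precomputed terms using $|\,|$, $+$, $-$, $\times$ --- the value $t^I$ is the precomputed term obtained from $t$ by carrying out all arithmetic operations. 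This is a short induction on $t$: clause~(c) in the definition of a standard interpretation handles precomputed terms, and clause~(d) handles the four function symbols, the sort discipline of $\sigma_0$ ensuring that each arithmetic operation is applied only to integer-sorted subterms, whose values are numerals, so that the computation never gets stuck.

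With this in hand the induction on $F$ is routine. In the atomic case $p(\boldt)$ (with $\boldt$ ground after substitution), $I\models p(\boldt)$ iff $p(\boldt^I)$ is true in $I$; by the auxiliary fact $p(\boldt^I)$ is exactly the precomputed atom $(p(\boldt))^{\rm prop}$, and by the identification of $I^\dar$ with the set of precomputed atoms satisfied by $I$ this holds iff $I^\dar\models (p(\boldt))^{\rm prop}$. In the case of a comparison $t_1\,\mathit{rel}\,t_2$, the auxiliary fact together with clause~(e) of the definition of a standard interpretation shows that $I\models t_1\,\mathit{rel}\,t_2$ iff the values of $t_1$ and $t_2$ stand in the relation $\mathit{rel}$, which is precisely the condition under which $(t_1\,\mathit{rel}\,t_2)^{\rm prop}$ is $\top$ rather than $\bot$; the case $F=\bot$ is immediate. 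For a binary connective $\odot$ we have $(F_1\odot F_2)^{\rm prop}=F_1^{\rm prop}\odot F_2^{\rm prop}$, so the claim follows from the induction hypothesis and the definition of satisfaction for infinitary propositional formulas. For $F=\forall X\,G(X)$ with $X$ of sort $s$, $I\models F$ iff $I\models G(d)$ for all $d\in|I|^s$; since $I$ is semi-Herbrand, $d$ ranges over exactly the precomputed terms (if $s$ is general) or over exactly the numerals (if $s$ is integer), which are precisely the $r$ over which $(\forall X\,G(X))^{\rm prop}$ forms its conjunction, so the induction hypothesis yields the equivalence, and the case of $\exists$ is symmetric.

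I do not expect a genuine obstacle here; the work is bookkeeping. The two places needing care are (i) setting up the induction over formulas with free variables and consistently using the semi-Herbrand identification of domain elements with their names, and (ii) checking that the definition of $F^{\rm prop}$ on atoms and comparisons really does mirror the $\sigma_0$-semantics of terms under a standard interpretation --- this is exactly what the auxiliary term-evaluation fact, and the absence of the partial operations $/$, $\backslash$, $..$ from $\sigma_0$, are there to guarantee. Since the argument is a routine adaptation of the proofs of the analogous statements of \citeN[Proposition~2]{tru12} and \citeN[Proposition~3]{lif19}, differing only in which function symbols occur in the signature, it may alternatively be cited directly from \cite{fan23}, as is done here.
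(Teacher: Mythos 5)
Your proposal is correct, but note that the paper itself gives no proof of this lemma at all: it is imported verbatim, as Lemma~2(i) of \citeNP{fan23}, and used as a black box. What you have written is essentially the standard argument that lies behind the cited result (and behind its single-sorted ancestors, Proposition~2 of \citeNP{tru12} and Proposition~3 of \citeNP{lif19}): a preliminary induction on ground terms showing that $t^I$ is the precomputed term obtained by evaluating the arithmetic operations, followed by a structural induction on formulas with domain elements substituted for free variables. The two points you flag as needing care are exactly the right ones: since $/$, $\backslash$ and $..$ are excluded from $\sigma_0$ and the arithmetic symbols take only integer-sorted arguments, term evaluation is total and lands on numerals, so the atomic and comparison clauses of the $F^{\rm prop}$ translation match the standard semantics; and since $I$ is semi-Herbrand, the quantifier clauses of the translation (conjunctions/disjunctions over precomputed terms for general variables and over numerals for integer variables) range over precisely the domains $|I|^s$, with domain elements identified with their names. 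One tiny point worth making explicit in the atomic case is that $I^\dar$ records only the atoms formed with the predicate symbols $p/n$ (the comparison symbols being extensional), which is harmless because $F^{\rm prop}$ eliminates comparisons in favor of $\top$ and $\bot$; your separate treatment of comparisons handles this correctly, including equality. So your proof is a sound self-contained replacement for the citation; citing \citeNP{fan23} directly, as the paper does, is of course also legitimate.
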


The lemma below relates the meaning of a sentence in the logic of here-and-there to the meaning of its grounding in the infinitary version of that logic~\cite[Section~2]{tru12}.
It is similar to Proposition~4 from that paper and can be proved by induction in a similar way.

\begin{lemma}\label{mirek_prop4}
  A standard HT-interpretation $\langle \HH,I\rangle$ of~$\sigma_0$
  satisfies a sentence~$F$
  over~$\sigma_0$ iff $\langle \HH,I^\dar\rangle$
  satisfies~$F^{\rm prop}$.
\end{lemma}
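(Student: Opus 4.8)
The plan is to argue by induction on the size of~$F$, reusing the bookkeeping that appears in the inductive proofs of Lemmas~\ref{l1} and~\ref{l2}, but this time tracking both components of the HT\nobreakdash-interpretation at once. Two facts about a standard interpretation~$I$ of~$\sigma_0$ will be used throughout. First, its domains of sorts \emph{general} and \emph{integer} are, respectively, the set of precomputed terms and the set of numerals, with $d^I=d$ for every domain element~$d$; hence, under the identification of~$d$ with~$d^*$ fixed in Section~\ref{ssec:taustar.prop}, quantifier instances are obtained simply by substituting precomputed terms (or numerals). Second, for a ground term~$t$ over~$\sigma_0$ the value~$t^I$ is exactly the precomputed term obtained by carrying out the arithmetic operations occurring in~$t$ --- which is precisely the replacement performed inside atoms by the translation $F\mapsto F^{\rm prop}$. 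Note also that $\langle\HH,I^\dar\rangle$ is a legitimate infinitary HT\nobreakdash-interpretation, since $\HH\subseteq I^\dar$.

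For the base cases: if~$F$ is a comparison, or~$\bot$, or more generally any sentence over~$\sigma_0$ containing no intensional symbols, then $\langle\HH,I\rangle\modelsht F$ iff $I\models F$ by Proposition~\ref{properties}, and likewise $\langle\HH,I^\dar\rangle$ satisfies $F^{\rm prop}$ iff $I^\dar$ satisfies $F^{\rm prop}$ in the infinitary logic; by Lemma~\ref{mirek_prop2} these two conditions coincide. If~$F$ is an atom $p(\boldt)$ with~$p$ intensional, then $\langle\HH,I\rangle\modelsht p(\boldt)$ and $\langle\HH,I^\dar\rangle$-satisfaction of $F^{\rm prop}=p(\boldt^I)$ both unfold to the single condition $p(\boldt^I)\in\HH$.

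The connective cases $F=F_1\wedge F_2$ and $F=F_1\vee F_2$ are immediate from the induction hypothesis together with the clause $(F_1\odot F_2)^{\rm prop}=F_1^{\rm prop}\odot F_2^{\rm prop}$. For $F=\forall X\,G(X)$ with~$X$ of sort~$s$, the left-hand side holds iff $\langle\HH,I\rangle\modelsht G(r)$ for every element~$r$ of the domain of sort~$s$, that is, for every precomputed term~$r$ when~$s$ is \emph{general} and for every numeral~$r$ when~$s$ is \emph{integer}; applying the induction hypothesis to each sentence $G(r)$ and recalling that $(\forall X\,G(X))^{\rm prop}$ is the conjunction of the $G(r)^{\rm prop}$ over the same range of~$r$, this is equivalent to $\langle\HH,I^\dar\rangle$ satisfying $(\forall X\,G(X))^{\rm prop}$. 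The case $F=\exists X\,G(X)$ is dual, with disjunction in place of conjunction.

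The step requiring the most care --- and the only genuine, if mild, obstacle --- is $F=F_1\to F_2$, since HT\nobreakdash-satisfaction of an implication is the conjunction of a ``here'' condition (that $\langle\HH,I\rangle\not\modelsht F_1$ or $\langle\HH,I\rangle\modelsht F_2$) and a ``there'' condition (that $I\models F_1\to F_2$). The first condition transfers to $\langle\HH,I^\dar\rangle$ by applying the induction hypothesis to $F_1$ and to $F_2$; the second transfers by Lemma~\ref{mirek_prop2}, using that $(F_1\to F_2)^{\rm prop}$ is $F_1^{\rm prop}\to F_2^{\rm prop}$. Since infinitary HT\nobreakdash-satisfaction of $F_1^{\rm prop}\to F_2^{\rm prop}$ by $\langle\HH,I^\dar\rangle$ is governed by exactly the corresponding pair of conditions, the two sides of the equivalence agree, which completes the induction.
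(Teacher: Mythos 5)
Your proof is correct and is exactly the argument the paper has in mind: the paper omits the details, saying only that the lemma ``can be proved by induction in a similar way'' to Proposition~4 of Truszczynski's paper, and your structural induction (classical transfer via Lemma~\ref{mirek_prop2} for the ``there'' condition of implications and for sentences without intensional symbols, direct matching for intensional atoms, and componentwise treatment of connectives and quantifiers using the semi-Herbrand identification of domain elements with their names) is precisely that induction.
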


The following lemma relates stable models of first-order formulas
(as defined in~\ref{app:ht})
to stable models of infinitary propositional formulas~\cite[Section~2]{tru12}. 
It is a generalization of Theorem~5 from that paper.

\begin{lemma}\label{mirek-th5}
  A standard interpretation~$I$ of~$\sigma_0$ is
  a stable model of a set~$\Gamma$ of sentences
  over $\sigma_0$ iff $I^\dar$ is a stable model of
    $\{F^{\rm prop}\,:\,F\in\Gamma\}$.
\end{lemma}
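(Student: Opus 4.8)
The goal is to show that, for standard interpretations of $\sigma_0$, being a stable model of a set $\Gamma$ of $\sigma_0$-sentences matches, under the $\mathord{\cdot}^{\rm prop}$ translation and the identification $I\mapsto I^\dar$, being a stable model of the infinitary propositional theory $\{F^{\rm prop}:F\in\Gamma\}$. The definition of stable model in \ref{app:ht} is: $I$ is stable iff $I$ is a model of $\Gamma$ and no proper subset $\HH$ of $I^\dar$ yields an HT-model $\langle\HH,I\rangle$ of $\Gamma$; the infinitary propositional notion is the analogous condition phrased with propositional HT-interpretations $\langle\HH,J\rangle$ where $J=I^\dar$. So the plan is to unfold both definitions and match them piece by piece using the two lemmas immediately preceding: Lemma~\ref{mirek_prop2} (the classical-satisfaction correspondence) and Lemma~\ref{mirek_prop4} (the here-and-there correspondence).

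\textbf{Key steps.} First I would record the book-keeping fact, already noted in the excerpt, that for a standard $I$ the set $I^\dar$ is exactly the set of precomputed atoms true in $I$, and that $\J\mapsto\J^\uparrow$ and $I\mapsto I^\dar$ are mutually inverse bijections between sets of precomputed atoms and standard interpretations (equation~(\ref{updown})). In particular, the proper subsets $\HH\subsetneq I^\dar$ that appear in the first-order stability condition are literally the same objects as the ``there'' worlds used in the propositional stability condition for $I^\dar$, and the propositional HT-interpretations $\langle\HH,I^\dar\rangle$ range over exactly the right set. Second, by Lemma~\ref{mirek_prop2}, $I\models\Gamma$ (classically) iff $I^\dar$ satisfies $\{F^{\rm prop}:F\in\Gamma\}$, so the ``model'' halves of the two stability conditions coincide. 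Third, by Lemma~\ref{mirek_prop4}, for every proper subset $\HH$ of $I^\dar$ and every $F\in\Gamma$, $\langle\HH,I\rangle\modelsht F$ iff $\langle\HH,I^\dar\rangle$ satisfies $F^{\rm prop}$; hence $\langle\HH,I\rangle$ is an HT-model of $\Gamma$ iff $\langle\HH,I^\dar\rangle$ is a propositional HT-model of $\{F^{\rm prop}:F\in\Gamma\}$. Combining: ``$I$ is a classical model of $\Gamma$ and no proper $\HH\subsetneq I^\dar$ gives an HT-model of $\Gamma$'' is equivalent, clause for clause, to ``$I^\dar$ is a model of $\{F^{\rm prop}:F\in\Gamma\}$ and no proper subset of $I^\dar$ gives a propositional HT-model of it,'' which is the definition of $I^\dar$ being a stable model of the propositional theory. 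That yields both directions at once.

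\textbf{Where the difficulty lies.} The only real subtlety is making sure the set of candidate ``there'' components lines up exactly. The first-order definition of stable model quantifies over HT-interpretations $\langle\HH,I\rangle$ with the \emph{same} classical part $I$ and $\HH$ ranging over subsets of $I^\dar$; the infinitary propositional definition quantifies over propositional HT-interpretations with classical part $I^\dar$ and the same range of $\HH$. The identification in equation~(\ref{updown}) together with the semi-Herbrand remark in Section~\ref{sec:mainlemma} is what makes these two ranges literally identical, so no interpretation is ``lost'' or ``added'' in passing between the two settings; I would state this explicitly. Once that alignment is in place, the argument is a routine unwinding of definitions plus two citations, and I would keep the write-up short accordingly. (One should also note that Theorem~5 of Truszczynski, which this generalizes, is stated for a single formula or a program; here $\Gamma$ is an arbitrary set of sentences, but since all of Lemmas~\ref{mirek_prop2} and~\ref{mirek_prop4} and the HT-model condition are formulated ``for each $F\in\Gamma$,'' the generalization is immediate.)
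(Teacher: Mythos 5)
Your proposal is correct and follows essentially the same route as the paper's proof: Lemma~\ref{mirek_prop2} aligns the classical-model halves and Lemma~\ref{mirek_prop4} aligns the HT-satisfaction condition for every proper subset $\HH$ of $I^\dar$, after which the two definitions of stability coincide. The extra remarks you add about the bijection from equation~(\ref{updown}) and the matching ranges of $\HH$ are bookkeeping the paper leaves implicit, not a different argument.
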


\begin{proof}
  By Lemma~\ref{mirek_prop2}, an interpretation $I$ is a
  model of~$\Gamma$ iff~$I^\dar$ is a model of
  \hbox{$\{F^{\rm prop}\,:\,F\in\Gamma\}$}.
  By  Lemma~\ref{mirek_prop4}, for any proper
  subset~$\HH$ of~$I^\dar$, $\langle\HH,I\rangle$ satisfies~$\Gamma$ iff
  $\langle\HH,I^\dar\rangle$ satisfies
  \hbox{$\{F^{\rm prop}\,:\,F\in\Gamma\}$}.
\end{proof}

\begin{proof}[Proof of Theorem~\ref{thm1}]
  For any set~$\J$ of precomputed atoms and any \mg\ program~$\Pi$,
  \medskip

  $\J^{\uparrow}$ is a stable model of $\tau^*\Pi$

    \quad iff

  $(\J^{\uparrow})^{\downarrow}$ is a stable model of $\{ F^{\rm
    prop} : F \in \tau^*\Pi\}$ (Lemma~\ref{mirek-th5})

    \quad iff

    $\J$ is a stable model of $\{ F^{\rm prop} : F \in \tau^*\Pi\}$
      (formula (\ref{updown}))

    \quad iff

    $\J$ is a stable model of $\{  (\tau^*R)^{\rm prop} : R \in\Pi\}$
    (definition of~$\tau^*\Pi$)

        \quad iff

    $\J$ is a stable model of $\{  \tau R : R \in\Pi\}$
    (Proposition~\ref{prop:taustar-tau})

        \quad iff

    $\J$ is a stable model of $\Pi$  (semantics of \mg).
\end{proof}

\section{Proof of Theorem~\ref{thm2}} \label{sec:proofs}

The equivalence between conditions~(b) and~(c) in the statement of
Theorem~\ref{thm2} follows from the fact that for every sentence~$F$
over~$\sigma_0$ (first-order or second-order), $\PP^\uparrow$ satisfies
$v(F)$ iff $\PP^v$ satisfies $F$.
The equivalence between conditions~(a) and~(b) will be derived from
the Main Lemma.  To this end, we need to relate the positive dependency
graph of an io-program to the positive dependency graph of the corresponding
completable set of sentences with respect to a standard interpretation
of~$\sigma_0$.  Such a
relationship is described by Lemma~\ref{lemma:local.tightness} below.

\begin{lemma}\label{lem:val.0}
  For any tuple $\bf t$ of ground terms in the language of
  mini-\gringo\ and for any tuple $\bf r$ of precomputed
  terms of the same length, the formula $\val{{\bf t}}{{\bf r}}$
  is equivalent to~$\top$ if ${\bf r} \in [{\bf t}]$,
  and to $\bot$ otherwise.
\end{lemma}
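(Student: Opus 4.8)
The plan is to reduce the statement to the case of a single term and then argue by structural induction. First I would note that it is enough to prove the claim for a single \mg\ term $t$ and a single precomputed term $r$: by the definition of $\val{\bf t}{\bf V}$ (Section~\ref{ssec:taustar}) the formula $\val{\bf t}{\bf r}$ is the conjunction $\val{t_1}{r_1}\land\dots\land\val{t_n}{r_n}$, and by the definition of $[\,\cdot\,]$ (Section~\ref{sssec:semantics}) the tuple ${\bf r}$ belongs to $[{\bf t}]$ iff $r_i\in[t_i]$ for every~$i$; so if each conjunct is equivalent to $\top$ or to $\bot$ according to whether $r_i\in[t_i]$, the conjunction is equivalent to $\top$ when ${\bf r}\in[{\bf t}]$ and to $\bot$ otherwise. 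I would also observe that $\val tV$ contains no predicate constants other than equality and the comparison symbols and no function constants other than $|\,|$, $+$, $-$, $\times$, all of which have a fixed meaning in standard interpretations of~$\sigma_0$; hence the sentence $\val tr$ has the same truth value in every standard interpretation, and ``equivalent to $\top$ (to $\bot$)'' is to be understood in this sense.

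Next I would carry out the induction on the structure of~$t$, running in parallel with the recursive definitions of $[t]$ and of $\val tV$. The base case, $t$ a precomputed term, is immediate: $\val tr$ is $r=t$, and in a standard interpretation this holds iff $r$ and $t$ are the same symbol, i.e.\ iff $r\in[t]=\{t\}$. In each recursive case the formula $\val tV$ is an existential quantification over fresh integer variables; applying the induction hypothesis to the subformulas $\val{t_1}{\,\cdot\,}$ and $\val{t_2}{\,\cdot\,}$ shows that these variables may be taken to range precisely over the numerals in $[t_1]$ and $[t_2]$, and the remaining conjunct --- an arithmetic equation in the cases of $|t_1|$, $t_1+t_2$, $t_1-t_2$, $t_1\times t_2$, or the chain of inequalities $I\leq K\leq J$ together with $V=K$ in the case of $t_1\,..\,t_2$ --- then forces the value substituted for the displayed variable to range over exactly the set $[t]$ prescribed by the semantics. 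These cases are routine.

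The one place where a genuine computation is needed is division, and, by an entirely analogous argument, the remainder operation. For $t=t_1/t_2$, after the induction hypothesis has replaced the inner existentials by numerals $\overline{n_1}\in[t_1]$ and $\overline{n_2}\in[t_2]$, I would verify that the conjunct $K\times|J|\leq|I|<(K+\overline1)\times|J|$ has a solution for~$K$ exactly when $n_2\neq0$, that this solution is then unique and equals $\lfloor|n_1|/|n_2|\rfloor$, and that the sign split $((I\times J\geq\overline0\land V=K)\lor(I\times J<\overline0\land V=-K))$ consequently sets~$V$ to $\lfloor|n_1/n_2|\rfloor$ when $n_1/n_2\geq0$ and to $-\lfloor|n_1/n_2|\rfloor=\lceil n_1/n_2\rceil$ when $n_1/n_2<0$, which in both cases equals $\mathit{round}(n_1/n_2)$. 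It then follows that $\val tr$ holds iff $r=\overline{\mathit{round}(n_1/n_2)}$ for some admissible $n_1$, $n_2$, that is, iff $r\in[t_1/t_2]$. I expect this arithmetic verification --- in particular, checking that the first-order encoding computes $\mathit{round}$ rather than the floor function (see Section~\ref{sssec:semantics} and Footnote~\ref{ft}) --- to be the only non-routine step; everything else is bookkeeping along the recursion.
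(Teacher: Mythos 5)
Your proof is correct, but it is organized differently from the paper's. The paper disposes of this lemma in two lines: it reduces the tuple case to the single-term case (exactly as you do, via the conjunctive definition of $\val{\bf t}{\bf V}$ and the componentwise definition of $[t_1,\dots,t_n]$) and then simply cites the single-term statement as Lemma~1 of Lifschitz et al.\ (2020), rather than proving it. You instead give a self-contained structural induction on the term, running the recursion for $\val tV$ in parallel with the recursive definition of $[t]$, and you verify the only delicate cases, $t_1/t_2$ and $t_1\backslash t_2$, by checking that the conjunct $K\times|J|\leq|I|<(K+\num 1)\times|J|$ is solvable exactly when the divisor is nonzero, pins down $K$ uniquely, and that the sign split yields $\mathit{round}(n_1/n_2)$ rather than the floor. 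That extra work is not wasted: since the present paper changed the semantics of $/$ and $\backslash$ relative to earlier publications (Section~\ref{sssec:semantics} and Footnote~\ref{ft}), an explicit check that the corrected $\val tV$ matches the corrected $[t]$ is precisely what a citation-free proof must supply, and your argument supplies it; your preliminary observation that $\val tr$ contains no intensional or uninterpreted symbols, so that ``equivalent to $\top$ or $\bot$'' means having a fixed truth value across standard interpretations, also makes explicit a point the paper leaves implicit. The trade-off is only length versus self-containment; mathematically the two routes agree, with yours essentially re-deriving the cited lemma under the updated definition.
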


\begin{proof}
  The special case when $\bf t$ is a single term is Lemma~1 by
  Lifschitz et al.~\citeyear{lif20}.  The general case easily follows.
\end{proof}

The following lemma describes properties of the transformation $\tau^B$,
defined in Section~\ref{ssec:taustar}.

\begin{lemma}\label{lem:taub.atom}
  For any set~$\J$ of precomputed atoms and any ground literal~$L$
  such that~$\J^\uparrow \models \tau^B(L)$,
  \begin{itemize}
  \item[(a)] If~$L$ is $p({\bf t})$ or $\no\ \no\ p({\bf t})$
      then, for some tuple $\bf r$ in $[{\bf t}]$, $p({\bf r})\in\J$.
  \item[(b)]  If~$L$ is $\no\ p({\bf t})$
      then, for some tuple $\bf r$ in $[{\bf t}]$, $p({\bf r})\not\in\J$.
\end{itemize}
\end{lemma}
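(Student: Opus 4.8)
The plan is to unwind the definition of $\tau^B$ and read off the required tuple ${\bf r}$ directly from the witness of the existential quantifier in $\tau^B(L)$, converting the truth of $\val{{\bf t}}{{\bf V}}$ into membership in $[{\bf t}]$ by means of Lemma~\ref{lem:val.0}. Throughout I will use two elementary facts about the standard interpretation $\J^\uparrow$: its domain of the sort \emph{general} is the set of all precomputed terms, and for any tuple ${\bf r}$ of precomputed terms, $\J^\uparrow\models p({\bf r})$ iff $p({\bf r})\in\J$.

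Suppose first that $L$ is $p({\bf t})$, so that $\tau^B(L)$ is $\exists {\bf V}({\val{{\bf t}}{{\bf V}}}\land p({\bf V}))$. Since $\J^\uparrow$ satisfies this sentence and the elements of its general domain are precomputed terms, there is a tuple ${\bf r}$ of precomputed terms with $\J^\uparrow\models\val{{\bf t}}{{\bf r}}$ and $\J^\uparrow\models p({\bf r})$. By Lemma~\ref{lem:val.0}, the sentence $\val{{\bf t}}{{\bf r}}$ is not equivalent to~$\bot$, hence ${\bf r}\in[{\bf t}]$; and $\J^\uparrow\models p({\bf r})$ gives $p({\bf r})\in\J$, which is the claim of~(a). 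If $L$ is $\no\ \no\ p({\bf t})$, then $\tau^B(L)$ is $\exists {\bf V}({\val{{\bf t}}{{\bf V}}}\land\neg\neg p({\bf V}))$, and the same argument applies once we observe that $\neg\neg p({\bf r})$ is classically equivalent to $p({\bf r})$. Finally, if $L$ is $\no\ p({\bf t})$, then $\tau^B(L)$ is $\exists {\bf V}({\val{{\bf t}}{{\bf V}}}\land\neg p({\bf V}))$; as before we obtain a tuple ${\bf r}$ of precomputed terms with $\J^\uparrow\models\val{{\bf t}}{{\bf r}}\land\neg p({\bf r})$, so ${\bf r}\in[{\bf t}]$ by Lemma~\ref{lem:val.0} and $p({\bf r})\notin\J$, which is the claim of~(b).

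There is no genuine obstacle here: the statement is an immediate consequence of Lemma~\ref{lem:val.0} together with the definition of a standard interpretation. The only point that deserves a moment's attention is that the witness ${\bf r}$ for the quantifier over ${\bf V}$ is automatically a tuple of precomputed terms, because that set is precisely the general domain of $\J^\uparrow$; in the positions where ${\bf t}$ has a genuinely arithmetic term the witness will in fact be a numeral, but this refinement is not needed for the lemma.
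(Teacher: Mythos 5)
Your proof is correct and follows essentially the same route as the paper: extract a witness tuple of precomputed terms from the existential quantifier in $\tau^B(L)$, apply Lemma~\ref{lem:val.0} to conclude ${\bf r}\in[{\bf t}]$, and read off membership (or non-membership) of $p({\bf r})$ in~$\J$ from the definition of~$\J^\uparrow$. The only difference is that the paper writes out just the $p({\bf t})$ case and declares the other two analogous, while you spell all three out.
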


\begin{proof}
Consider the case when~$L$ is $p({\bf t})$.
  Then, $\tau^B(L)$ is
$\exists {\bf V}(\val{{\bf t}}{{\bf V}} \land p({\bf V}))$.
Since~$\J^\uparrow$ satisfies this formula, there exists a tuple~$\bf r$
of precomputed terms such that $\J^\uparrow$ satisfies
$\val{{\bf t}}{{\bf r}}$ and $p({\bf r})$.  Since $\val{{\bf t}}{{\bf r}}$
is satisfiable, we can conclude by Lemma~\ref{lem:val.0}
that ${\bf r}\in[{\bf t}]$.
The other two cases are analogous.
\end{proof}

\begin{lemma}\label{lem:comm}
  If $\bf U$ is a tuple of general variables, and~$\bf u$ is a tuple of
  precomputed terms of the same length, then
  \begin{itemize}
  \item[(a)] for any term $t({\bf U})$ and any precomputed term~$r$,
    the result of substituting $\bf u$ for $\bf U$ in $\val{t({\bf U})}r$
    is $\val{t({\bf u})}r$;    
  \item[(b)] for any conjunction $\body({\bf U})$ of literals and
    comparisons,
    the result of substituting $\bf u$ for $\bf U$ in $\tau^B(\body({\bf U}))$
    is $\tau^B(\body({\bf u}))$.
\end{itemize}
\end{lemma}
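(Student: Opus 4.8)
The plan is to prove both parts by structural induction --- part~(a) on the mini-\gringo\ term~$t$, and part~(b) on the conjunction~$\body$, using part~(a) in the base case of~(b). Before starting, I would normalize the choice of the auxiliary variables occurring in the definitions of $\valop$ and $\tau^B$. The fresh variables $I$, $J$, $K$ introduced in the clauses for $|t_1|$, $t_1\,\op\,t_2$, $t_1/t_2$, $t_1\backslash t_2$, and $t_1..t_2$ are integer variables, hence automatically distinct from the general variables in~${\bf U}$; the fresh general variables~${\bf V}$ in the clauses of $\tau^B$ for $p({\bf t})$, $\mynot\,p({\bf t})$, $\mynot\,\mynot\,p({\bf t})$, and $t_1\,\rel\,t_2$ may be assumed, without loss of generality, to be distinct from~${\bf U}$ and to be chosen in the same way for $\body({\bf U})$ and for $\body({\bf u})$. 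Since~${\bf u}$ consists of precomputed terms and so contains no variables, substituting~${\bf u}$ for~${\bf U}$ then neither captures nor alters any of these bound variables, and it therefore commutes with every quantifier, connective, and arithmetic atom occurring in $\val{t}{r}$ or in $\tau^B(B)$.

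For part~(a), it is convenient to prove the slightly more general statement in which the ``output slot'' is an arbitrary precomputed term \emph{or} a variable not occurring in~${\bf U}$; this generality is used because the recursive clauses feed the integer variables $I$, $J$, $K$ into $\valop$. The base case is when $t({\bf U})$ is a precomputed term or a variable. If $t$ is a precomputed term, or a variable not among~${\bf U}$, then $t({\bf u})=t({\bf U})$ and the formula $\val{t}{r}$ contains no variable of~${\bf U}$, so the substitution does nothing and the claim is immediate. If $t$ is the variable~$U_i$, then $\val{U_i}{r}$ is $r=U_i$, and substituting~${\bf u}$ for~${\bf U}$ turns it into $r=u_i$, which is $\val{u_i}{r}$. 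In the induction step, each remaining clause expresses $\val{t}{r}$ as an existential quantification over the fresh variables of a propositional combination of $\val{t_1}{I}$, possibly $\val{t_2}{J}$, and arithmetic atoms mentioning only $r$ and the fresh variables; by the normalization the substitution passes through the quantifiers and the propositional combination unchanged, and by the induction hypothesis it replaces $\val{t_1}{I}$ and $\val{t_2}{J}$ by $\val{t_1({\bf u})}{I}$ and $\val{t_2({\bf u})}{J}$; reassembling the clause yields $\val{t({\bf u})}{r}$.

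For part~(b), I would first treat a single literal or comparison~$B$. In each of the four clauses defining~$\tau^B$, the formula $\tau^B(B({\bf U}))$ is $\exists{\bf V}$ applied to the conjunction of $\val{{\bf t}}{{\bf V}}$ --- that is, $\val{t_1}{V_1}\land\cdots$ --- with an atom or a literal over~${\bf V}$. By the normalization, substituting~${\bf u}$ for~${\bf U}$ passes through $\exists{\bf V}$, through the conjunction, and through the ${\bf V}$-atom, and by part~(a) it turns each $\val{t_j}{V_j}$ into $\val{t_j({\bf u})}{V_j}$; reassembling gives $\tau^B(B({\bf u}))$. Finally, for a conjunction $\body({\bf U})=B_1({\bf U})\land B_2({\bf U})\land\cdots$, both $\tau^B$ and the substitution distribute over~$\land$, so the claim for~$\body$ follows from the claim for each~$B_j$.

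I do not anticipate a genuine obstacle here: the lemma is the standard statement that substituting variable-free terms commutes with a structure-preserving translation, and the only point requiring care is the bookkeeping for the bound and fresh variables, which is why I would dispose of it once and for all by the normalization at the outset.
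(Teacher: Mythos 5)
Your proof is correct and follows essentially the same route as the paper, which simply notes that part~(a) is proved by induction on the term and that part~(b) follows immediately; you have merely spelled out the induction and the (harmless) bound-variable bookkeeping that the paper leaves implicit.
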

	
\begin{proof}
Part~(i) is easy to prove by induction.  Part~(ii) immediately follows.
\end{proof}

In the rest of this section,
$v$ is a valuation on the set of placeholders of an io-program~$\Omega$,
and~$\J$ is a set of precomputed atoms that do not
contain placeholders of~$\Omega$.  Standard interpretations of~$\sigma_0$
are semi-Herbrand, and the references to \emph{Pos} and $G^{sp}$ below
refer to the definitions modified as described at the end of
Section~\ref{sec:mainlemma}.

\begin{lemma}\label{lem:main.aux.graph2}
  Let~$\bf U$ be a list of distinct general variables, and let
  $p({\bf t}({\bf U})) \ar \body({\bf U})$ be a basic rule
  of~$\Omega$ with all variables explicitly shown.
  For any tuple~$\bf u$ of precomputed terms of the same length as~$\bf U$,
  any tuple $\bf r$ from $[v({\bf t}({\bf u}))]$, and any
  precomputed atom~$A$ from
  $\Pos{\J^\uparrow}{\tau^B(v(\body({\bf u})))}$,
  the positive dependency graph of~$\Omega$ for the input~$(v,\J^{in})$
  has an edge from $p({\bf r})$ to~$A$.
\end{lemma}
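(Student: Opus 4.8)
The plan is to verify that the given rule instance witnesses the required edge in the positive dependency graph of~$\Omega$ for the input~$(v,\J^{in})$, by checking conditions~(a)--(e) in the definition of that graph (Section~\ref{sec:main}). First I would fix the ground rule $R$ obtained from $p({\bf t}({\bf U})) \ar \body({\bf U})$ by substituting~$\bf u$ for~$\bf U$ and then applying~$v$; that is, $R$ is $p(v({\bf t}({\bf u}))) \ar v(\body({\bf u}))$, which is a ground instance of a rule of $v(\Pi)$. Condition~(a) holds immediately: the head of~$R$ is $p(v({\bf t}({\bf u})))$ and by hypothesis ${\bf r}\in[v({\bf t}({\bf u}))]$.

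For conditions~(b)--(e), the key observation is that the atom~$A$ lies in $\Pos{\J^\uparrow}{\tau^B(v(\body({\bf u})))}$, and I would trace through the recursive definition of \emph{Pos} (modified for semi-Herbrand interpretations, as in Section~\ref{sec:mainlemma}) together with the definition of~$\tau^B$ to see where such a positive atom can come from. Since $\tau^B$ of a body is a conjunction over the conjunctive terms of the body, and since $\Pos{}$ of a conjunction is the union of the $\Pos{}$ of the conjuncts, $A$ must belong to $\Pos{\J^\uparrow}{\tau^B(v(C))}$ for some conjunctive term~$C$ of $\body({\bf u})$. Now $\tau^B$ of a comparison or of a negative literal $\no\ q({\bf s})$ or $\no\ \no\ q({\bf s})$ contains no intensional symbol in a strictly positive position — the only intensional occurrence in $\tau^B(\no\ q({\bf s}))$ is under a negation, and comparisons have none — so $\Pos{}$ of those is empty; hence $C$ must be a positive literal $p'({\bf s})$, and $\tau^B(v(C))$ is $\exists {\bf V}(\val{v({\bf s})}{{\bf V}} \land p'({\bf V}))$. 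Unwinding clause~(iv)$'$ and clause~(ii) of \emph{Pos}, and using Lemma~\ref{lem:val.0} to see that the only value of~$\bf V$ for which the body of the existential is satisfied by $\J^\uparrow$ are tuples in $[v({\bf s})]$, I get that $A = p'({\bf r}')$ for some ${\bf r}' \in [v({\bf s})]$ with $p'({\bf r}')\in\J$. This immediately gives condition~(b), with $p'({\bf s})$ as the required conjunctive term of the body of~$R$.

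Conditions~(c), (d), and~(e) constrain only those conjunctive terms of the body of~$R$ that contain an input symbol, or that are comparisons; they say that each such term is ``satisfiable'' in the relevant sense against~$\J^{in}$ (for input atoms) or against the chosen total order (for comparisons). Here the point is that $A\in\Pos{\J^\uparrow}{\tau^B(v(\body({\bf u})))}$ is nonempty, which by the leading clause of the definition of \emph{Pos} forces $\J^\uparrow\models\tau^B(v(\body({\bf u})))$, i.e.\ $\J^\uparrow$ satisfies $\tau^B(v(C'))$ for \emph{every} conjunctive term~$C'$ of $v(\body({\bf u}))$. For an input literal $q({\bf s})$ or $\no\ \no\ q({\bf s})$, Lemma~\ref{lem:taub.atom}(a) then yields a tuple ${\bf r}$ in $[v({\bf s})]$ with $q({\bf r})\in\J = \J^{in}$ (using that $q/n$, being an input symbol, cannot occur in a head, so every $q$-atom of~$\J$ is an input atom, and that $[v({\bf s})]$ is what ``$[{\bf t}]$'' in the graph definition refers to after passing through~$v$); this is condition~(c). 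Symmetrically, Lemma~\ref{lem:taub.atom}(b) gives condition~(d) for $\no\ q({\bf s})$. For a comparison ${t_1\prec t_2}$ in the body, $\J^\uparrow\models\tau^B(v(t_1\prec t_2))$ means there are $r_1\in[v(t_1)]$, $r_2\in[v(t_2)]$ with $r_1\prec r_2$, which is condition~(e). I would also note that the placeholder-freeness of the terms in the edge vertices is inherited from~$\bf u$ and the fact that $v$ maps placeholders to non-placeholders, so $p({\bf r})$ and~$A$ are genuine vertices of the graph.

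The main obstacle I anticipate is bookkeeping rather than conceptual: correctly handling the interaction between the substitution of~$\bf u$, the application of~$v$, and the translations $\val{\cdot}{\cdot}$ and $\tau^B$ — this is exactly what Lemma~\ref{lem:comm} is for, so I would invoke it to commute these substitutions and keep the argument clean — and being careful that ``$[{\bf t}]$'' in the definition of the positive dependency graph of~$\Omega$ is evaluated \emph{after} applying~$v$, matching the $v(\body({\bf u}))$ appearing in the statement. Once the data flow is pinned down, each of (a)--(e) is a short check, with (b) relying on the \emph{Pos}/$\tau^B$ unwinding plus Lemma~\ref{lem:val.0}, and (c)--(e) relying on Lemma~\ref{lem:taub.atom}.
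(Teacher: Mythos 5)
Your proposal is correct and follows essentially the same route as the paper's proof: exhibit the instance $p(v({\bf t}({\bf u})))\ar v(\body({\bf u}))$ and check conditions (a)--(e), using nonemptiness of $\Pos{\J^\uparrow}{\tau^B(v(\body({\bf u})))}$ to get satisfaction of the body translation, unwinding $\Pos{}{}$ and $\tau^B$ to identify $A$ with a positive conjunctive term for (b), and invoking Lemmas~\ref{lem:taub.atom} and~\ref{lem:val.0} for (c)--(e). If anything, your handling of (b) is slightly more careful than the paper's, since you allow arbitrary ground terms in the body atom and obtain ${\bf r}'\in[v({\bf s})]$ via Lemma~\ref{lem:val.0} rather than writing the conjunctive term with precomputed arguments.
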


\begin{proof}
  We will show that the instance
  $$p(v({\bf t}({\bf u}))) \ar v(\body({\bf u}))$$
  of the rule
  $$p(v({\bf t}({\bf U}))) \ar v(\body({\bf U}))$$
  satisifies conditions (a)--(e) imposed on~$R$ in the definition of the
  positive dependency graph of an io-program (see Section~\ref{sec:main}).

  Verification of condition (a): the argument of~$p$ in the head of~$R$ is
  $v({\bf t}({\bf u}))$, and ${\bf r} \in [v({\bf t}({\bf u}))]$.

  Verification of condition (b):
  since $A\in\Pos{\J^\uparrow}{\tau^B(v(\body({\bf u})))}$, the
  body $v(\body({\bf u}))$ of~$R$ has a conjunctive term
  $p'({\bf r}')$ such that
  $$A\in\Pos{\J^\uparrow}{\tau^B(p'({\bf r}'))}
  =\Pos{\J^\uparrow}{\exists {\bf V}({\bf r}'={\bf V} \land p'({\bf V}))}
  =\Pos{\J^\uparrow}{p'({\bf r}'))}\subseteq\{p'({\bf r}')\}.$$
  It follows that $A=p'({\bf r}')$, so
  that~$A$ is a conjunctive term of the body
  required by condition~(b).

  Verification of conditions (c)--(e):
  let $L$ be a conjunctive term of the body
  $v(\body({\bf u}))$ of~$R$ that contains an input symbol of~$\Omega$
  or is a comparison.  Since
$\Pos{\J^\uparrow}{\tau^B(v(\body({\bf u})))}$ is non-empty,
$\tau^B(v(\body({\bf u})))$ is satisfied by $\J^\uparrow$.  Therefore, the
conjunctive term~$\tau^B(L)$ of that formula is
satisfied by $\J^\uparrow$ as well.  If~$L$ has the form $q({\bf t})$
or $\no\ \no\ q({\bf t})$ then, by Lemma~\ref{lem:taub.atom}(a), there
exists a tuple~${\bf r}'$ in $[{\bf t}]$ such that $q({\bf r}')\in\J$,
and consequently $q({\bf r}')\in\J^{in}$. If~$L$ has the form
$\no\ q({\bf t})$ then, by Lemma~\ref{lem:taub.atom}(b), there
exists a tuple~${\bf r}'$ in $[{\bf t}]$ such that $q({\bf r}')\not\in\J$,
and consequently $q({\bf r}')\notin\J^{in}$.  If~$L$ is a comparison
$t_1\prec t_2$ then $\tau^B(L)$ is
$$\exists V_1 V_2 (\val{t_1}{V_1} \land \val{t_2}{V_2} \land V_1\prec V_2).$$
Since this formula is satisfied by~$\J^\uparrow$,
the relation~$\prec$ holds for a pair $(r_1,r_2)$ of
precomputed terms such that $\val{t_1}{r_1}$ and $\val{t_2}{r_2}$
are satisfied by $\J^\uparrow$. By Lemma~\ref{lem:val.0},
$r_1\in[t_1]$ and $r_2\in[t_2]$.
\end{proof}

\begin{lemma}\label{lemma:local.tightness}
Let~$\Pi$ be the set of rules of~$\Omega$.
  \begin{itemize}
    \item[(i)]
The positive dependency graph of~$\Omega$ for the input~$(v,\J^{in})$
is a supergraph of the positive dependency graph of~$\tau^*(v(\Pi))$.
    \item[(ii)]  If~$\Omega$ is locally tight for the input~$(v,\J^{in})$
  then the graph $G^{sp}_{\J^\uparrow}(\tau^*(v(\Pi)))$ has no infinite walks.
\end{itemize}      
\end{lemma}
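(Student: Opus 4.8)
The plan is to read each edge of the positive dependency graph $G^{sp}_{\J^\uparrow}(\tau^*(v(\Pi)))$ off the definition of the positive dependency graph of a completable set, and then to produce, via Lemma~\ref{lem:main.aux.graph2}, a matching edge of the positive dependency graph of~$\Omega$ for~$(v,\J^{in})$; part~(ii) will then be a triviality. All of this uses the semi-Herbrand forms of \emph{Pos} and $G^{sp}$ recalled at the end of Section~\ref{sec:mainlemma}, so that vertices are precomputed atoms $p({\bf r})$.

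For part~(i), observe first that $v(\Pi)$ is a \mg\ program, so $\tau^*(v(\Pi))$ is completable, and its intensional symbols are exactly the output and private symbols of~$\Omega$. Suppose $G^{sp}_{\J^\uparrow}(\tau^*(v(\Pi)))$ has an edge from~$A$ to~$B$. By definition there are a rule~$R$ of~$\Pi$ and an instance $F\to G$ of the completable sentence $\tau^*(v(R))$ such that $A\in\Pos{\J^\uparrow}{G}$ and $B\in\Pos{\J^\uparrow}{F}$. Since $\Pos{\J^\uparrow}{G}\ne\emptyset$, the formula~$G$ contains an intensional symbol, so~$R$ is not a constraint; hence~$R$ has the form $p({\bf t}({\bf U}))\ar\body({\bf U})$ or $\{p({\bf t}({\bf U}))\}\ar\body({\bf U})$, with all variables shown. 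Writing~${\bf u}$ and~${\bf r}$ for the tuples of precomputed terms that the instance substitutes for~${\bf U}$ and for the variables~${\bf V}$ introduced by~$\tau^*$, and using Lemma~\ref{lem:comm} to commute the substitution of~${\bf u}$ with the term and body translations, we obtain that $F\to G$ is
\[ \val{v({\bf t}({\bf u}))}{\bf r}\land\tau^B(v(\body({\bf u})))\to p({\bf r}) \]
in the basic case, and this implication with the extra antecedent conjunct $\neg\neg p({\bf r})$ in the choice case; in either case $A=p({\bf r})$.

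Now $\Pos{\J^\uparrow}{F}\ne\emptyset$ forces $\J^\uparrow\models F$, so in particular $\J^\uparrow$ satisfies $\val{v({\bf t}({\bf u}))}{\bf r}$, whence ${\bf r}\in[v({\bf t}({\bf u}))]$ by Lemma~\ref{lem:val.0}. The conjunct $\val{v({\bf t}({\bf u}))}{\bf r}$ contains no intensional symbol and the conjunct $\neg\neg p({\bf r})$ contributes nothing to \emph{Pos}, so $\Pos{\J^\uparrow}{F}=\Pos{\J^\uparrow}{\tau^B(v(\body({\bf u})))}$, and therefore $B\in\Pos{\J^\uparrow}{\tau^B(v(\body({\bf u})))}$. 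In the basic case Lemma~\ref{lem:main.aux.graph2} now gives an edge from $p({\bf r})=A$ to~$B$ in the positive dependency graph of~$\Omega$ for~$(v,\J^{in})$; the choice case is settled by the same argument, because a choice head $\{p({\bf t})\}$ satisfies condition~(a) in the definition of that graph exactly as a basic head $p({\bf t})$ does, and the conjunct $\neg\neg p({\bf V})$ present in $\tau^*$ of a choice rule is irrelevant to \emph{Pos}. It remains to note that each vertex $p({\bf r})$ of $G^{sp}_{\J^\uparrow}(\tau^*(v(\Pi)))$ lies in $(\J^\uparrow)^\dar$, so $p/n$ is output or private and, since $p({\bf r})\in\J$ and the atoms of~$\J$ contain no placeholders of~$\Omega$, the tuple~${\bf r}$ contains no placeholder; hence $p({\bf r})$ is a vertex of the graph of~$\Omega$ as well. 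This proves that the positive dependency graph of~$\Omega$ for~$(v,\J^{in})$ is a supergraph of $G^{sp}_{\J^\uparrow}(\tau^*(v(\Pi)))$.

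Part~(ii) then follows at once: if~$\Omega$ is locally tight for~$(v,\J^{in})$, its positive dependency graph for that input has no infinite walks, and any infinite walk in the subgraph $G^{sp}_{\J^\uparrow}(\tau^*(v(\Pi)))$ would be an infinite walk in that supergraph. I expect the main obstacle to be entirely one of bookkeeping: lining up an ``instance'' of a completable sentence (a substitution of names for the variables bound by~$\uc$) with a ground instance of a rule of~$v(\Pi)$, keeping the term- and body-translations synchronized through Lemmas~\ref{lem:comm} and~\ref{lem:val.0}, and treating the choice-rule case, which Lemma~\ref{lem:main.aux.graph2} as stated does not cover; the graph-theoretic content is immediate once part~(i) is in place.
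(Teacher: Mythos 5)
Your proposal is correct and follows essentially the same route as the paper: read an edge of $G^{sp}_{\J^\uparrow}(\tau^*(v(\Pi)))$ off an instance of $\tau^*$ of a rule, normalize it with Lemma~\ref{lem:comm}, use Lemma~\ref{lem:val.0} and the emptiness of $\mathit{Pos}$ on the $\mathit{val}$ conjunct to isolate $A\in\Pos{\J^\uparrow}{\tau^B(v(\body({\bf u})))}$, and invoke Lemma~\ref{lem:main.aux.graph2}, with (ii) immediate. The only difference is cosmetic: the paper dispatches choice rules up front by observing that replacing $\{p({\bf t})\}\ar\body$ with $p({\bf t})\ar\body$ changes neither graph, whereas you handle them inline via the same two observations (condition~(a) admits choice heads, and $\neg\neg p({\bf V})$ contributes nothing to $\mathit{Pos}$).
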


\begin{proof}
(i)~Replacing a choice rule $\{p({\bf t})\}\ar\body$ in~$\Omega$ by the
  basic rule $p({\bf t})\ar\body$ affects neither the positive dependency
  graph of~$\Omega$ nor the positive dependency graph of~$\tau^*(v(\Pi))$.
  Consequently we can assume, without loss of generality, that~$\Omega$ is an
  io-program without choice rules.

  Pick any edge~$(p({\bf r}),A)$ of the (modified) graph
  $G^{sp}_{\J^\uparrow}(\tau^*(v(\Pi)))$.
  Then, there is an instance $F\to G$ of the completed sentence obtained by
  applying $\tau^*$ to a basic rule of~$v(\Pi)$ such that
  \hbox{$p({\bf r})\in\Pos {\J^\uparrow}G$} and $A\in\Pos {\J^\uparrow}F$.
  That rule of~$v(\Pi)$ can be written as
  \beq
  p(v({\bf t}({\bf U})))\ar v(\body({\bf U})),
  \eeq{vrule}
  where
$$
  p({\bf t}({\bf U}))\ar \body({\bf U})
$$
is a rule of~$\Pi$, and $\bf U$ is the list of its variables.
  The result of applying~$\tau^*$ to~(\ref{vrule}) is
  $$\forall {\bf U}{\bf V}
  (\val{v({\bf t}({\bf U}))}{{\bf V}}
  \land \tau^B(v(\body({\bf U}))) \rar p({\bf V})),$$
  where $\bf V$ is a tuple of general variables.  Let~$\bf u$,~$\bf z$ be
  tuples of precomputed terms that are substituted for~$\bf U$,~$\bf V$ in
  the process of forming the instance $F\to G$.  By Lemma~\ref{lem:comm},
  that instance can be written as
  $$\val{v({\bf t}({\bf u}))}{\boldv}
  \land \tau^B(v(\body({\bf u}))) \rar p(\boldv).$$
  Consequently $p({\bf r})\in\Pos{\J^\uparrow}{p(\boldv)}$
  and
  $$A\in\Pos{\J^\uparrow}{\val{v({\bf t}({\bf u}))}{\boldv}
    \land \tau^B(v(\body({\bf u})))}.$$
  The first of these conditions implies $\boldv={\bf r}$, so that the
  second can be rewritten as
  \beq
  A\in\Pos{\J^\uparrow}{\val{v({\bf t}({\bf u}))}{{\bf r}}
    \land \tau^B(v(\body({\bf u})))}.\eeq{mm}
  It follows that $\J^{\uparrow}$ satisfies
  $\val{v({\bf t}({\bf u}))}{{\bf r}}$.  By Lemma~\ref{lem:val.0}, we
  can conclude that ${\bf r}\in v({\bf t}({\bf u}))$.  On the other hand, the
  set $\Pos{\J^\uparrow}{\val{v({\bf t}({\bf u}))}{\bf r}}$ is empty,
    because the formula $\val{v({\bf t}({\bf u}))}{{\bf r}}$ does not contain
    intensional symbols.  Consequently~(\ref{mm}) implies that
    $$A\in\Pos{\J^\uparrow}{\tau^B(v(\body({\bf u})))}.$$
    By Lemma~\ref{lem:main.aux.graph2}, it follows that
    $(p({\bf r}),A)$ is an edge of the positive dependency graph of~$\Omega$
    for the input~$(v,\J^{in})$.

    (ii)~Immediate from part~(i).
\end{proof}

Now we are ready to prove that condition~(a) in the statement of
Theorem~\ref{thm2} is equivalent to condition~(b).
The assumption that~$\PP$ is
an io-model of~$\Omega$ for an input~$(v,\I)$ implies
that~$\PP^{in}=\I$.  Indeed, that assumption means that~$\PP$ is the
set of public atoms of some stable model~$\J$ of the program $v(\Pi)\cup\I$,
where~$\Pi$ is the set of rules of~$\Omega$.  In that program, the
facts~$\I$ are the only rules containing input symbols in the head, so
that $\J^{in}=\I$.  Since all input symbols are public, it follows
that $\PP^{in}=\I$.  It remains to show that
\beq
\hbox{$\PP$ is an io-model of~$\Omega$ for~$(v,\PP^{in})$ iff 
  $\PP^\uparrow$ satisfies $v(\comp[\Omega])$}
\eeq{thm2b}
under the assumption that $\Omega$ is locally tight for~$(v,\PP^{in})$.

Recall that~$\comp[\Omega]$ is the second order sentence $\exists P_1 \cdots P_l\,C$, where~$C$ is obtained from the sentence
$\comp[\tau^*\Pi]$ by replacing the private symbols $p_1/n_1,\dots,p_l/n_l$
of~$\Omega$ with distinct predicate variables $P_1,\dots,P_l$.
Then $v(\comp[\Omega])$ is $\exists P_1 \cdots P_l\,v(C)$.
Hence, the condition in the right\nobreakdash-hand side of~(\ref{thm2b}) means that
$$\ba l
\hbox{for some set~$\J$ obtained from~$\PP$ by adding private precomputed
  atoms,}\\
\J^\uparrow \hbox{ satisfies } v(\comp[\tau^*\Pi]).
\ea$$
Note also
that~$v(\comp[\tau^*\Pi]) = \comp[\tau^*(v(\Pi)]$.  Thus the right-hand
side of~(\ref{thm2b}) is equivalent to the condition
\begin{gather}
  \begin{aligned}
    &\hbox{for some set~$\J$ obtained from~$\PP$ by adding private precomputed
    atoms,}\\
  &\J^\uparrow \hbox{ satisfies } \comp[\tau^*(v(\Pi))].
  \end{aligned}
  \label{thm2c}
\end{gather}
The condition in the left\nobreakdash-hand side of~(\ref{thm2b})
means that~$\PP$ is the set of public atoms of some stable model of
$v(\Pi) \cup \PP^{in}$. 
By Theorem~\ref{thm1}, this condition can be
equivalently reformulated as follows: 
\begin{gather}
 \begin{aligned}
   &\hbox{for some set~$\J$ obtained from~$\PP$ by adding private precomputed
   atoms,}\\
 &\hbox{$\J^\uparrow$ is a stable model of
 $\tau^*(v(\Pi) \cup \PP^{in})$.}
 \end{aligned}
  \label{thm2dd}
\end{gather}
We can further reformulate this condition
using the Main Lemma (Section~\ref{sec:mainlemma}) with
\begin{itemize}
\item the signature~$\sigma_0$ as~$\sigma$,
\item all comparison symbols viewed as extensional,
\item $\J^\uparrow$ as~$I$, and
\item $\tau^*(v(\Pi) \cup \PP^{in})$ as $\Gamma$.
\end{itemize}
The graph
$G^{sp}_{\J^\uparrow}(\tau^*(v(\Pi)\cup \PP^{in}))$
has no infinite walks, because
it is identical to the graph $G^{sp}_{\J^\uparrow}(\tau^*(v(\Pi)))$, which
has no infinite walks by Lemma~\ref{lemma:local.tightness}(ii). 
Consequently~\eqref{thm2dd}---and so the left-hand side of~(\ref{thm2b})---is equivalent to the condition:
\begin{gather}
  \begin{aligned}
    &\hbox{for some set~$\J$ obtained from~$\PP$ by adding private  precomputed atoms,}\\
    &\hbox{$\J^\uparrow$ satisfies $\comp[\tau^*(v(\Pi)) \cup \PP^{in}]$}.
  \end{aligned}
  \label{thm2e}
\end{gather}
Therefore, condition~\eqref{thm2b} is equivalent to the assertion that conditions~\eqref{thm2c} and~\eqref{thm2e} are equivalent to each other.
So to prove~\eqref{thm2b}, it is enough to show that
\begin{gather}
  \begin{aligned}
    &\hbox{for every set~$\J$ obtained from~$\PP$ by adding private precomputed atoms,}\\
    &\hbox{$\J^\uparrow$  satisfies $\comp[\tau^*(v(\Pi))]$ iff $\J^\uparrow$ satisfies $\comp[\tau^*(v(\Pi)) \cup \PP^{in}]$}.
  \end{aligned}
  \label{thm2f}
\end{gather}
Formula $\comp[\tau^*(v(\Pi)) \cup \PP^{in}]$ is a conjunction
that includes
the completed definitions of all input symbols among its
conjunctive terms.  The
interpretation~$\J^\uparrow$ satisfies these completed definitions, because
$\J^{in}=\PP^{in}$. 
On the other hand, the remaining conjunctive terms of
$\comp[\tau^*(v(\Pi)) \cup \PP^{in}]$ form the completion
$\comp[\tau^*(v(\Pi))]$ under the 
assumption that both comparison symbols and the input symbols are considered
extensional.
Hence, \eqref{thm2f} holds.

\section{Proof of Theorem~\ref{thm3}}\label{sec:proofthm3}

IO-programs~$\Omega$ and~$\Omega'$ are equivalent to each other under
an assumption~$A$ iff, for every input~$(v,\I)$ from the domain defined
by~$A$ and every set~$\PP$ of precomputed public atoms,
\begin{gather*}
  \text{$\PP$ is an io\nobreakdash-model of~$\Omega$ for~$(v,\I)$
    iff $\PP$ is an io\nobreakdash-model of~$\Omega'$ for~$(v,\I)$. }
\end{gather*}
Since $\Omega$ and~$\Omega'$ are locally tight under assumption~$A$, they both are locally tight for input~$(v,\I)$.
Hence, by Theorem~\ref{thm2}, the above condition is equivalent to:
\begin{gather*}
  \text{
    $\PP^v$ satisfies~$\comp[\Omega]$ and~$\PP^{\mathit{in}} = \I$
    \quad iff\quad
    $\PP^v$ satisfies~$\comp[\Omega']$ and~$\PP^{\mathit{in}} = \I$.
  }
\end{gather*}
We can further rewrite this condition as
\begin{gather*}
  \text{
    $\PP^{\mathit{in}} = \I$
    implies that
    $\PP^v$ satisfies~$\comp[\Omega] \leftrightarrow \comp[\Omega']$.
  }
\end{gather*}
Hence,
programs~$\Omega$ and~$\Omega'$ are equivalent to each other under assumption~$A$ iff the condition
\begin{gather}
  \text{
    $\PP^v$ satisfies~$\comp[\Omega] \leftrightarrow \comp[\Omega']$
  }
  \label{thm3.b}
\end{gather}
holds for every input~$(v,\I)$ from the domain defined by~$A$ and every
set~$\PP$ of precomputed public atoms such that~$\PP^{\mathit{in}} = \I$;
equivalently, iff
\begin{center}
\eqref{thm3.b} holds for every input~$(v,\I)$ such that~$\I^v$ satisfies~$A$\\
and every set~$\PP$ of precomputed public atoms\\
such that~$\PP^{\mathit{in}} = \I$;
\end{center}
equivalently, iff
\begin{center}
\eqref{thm3.b} holds for every input~$(v,\I)$\\
and every set~$\PP$ of precomputed public atoms\\
such that $\PP^{\mathit{in}} = \I$ and~$\PP^v$ satisfies~$A$;
\end{center}
equivalenly, iff
\begin{center}
\eqref{thm3.b} holds for every valuation~$v$ of \emph{PH}\\
and every set~$\PP$ of precomputed public atoms
such that $\PP^v$ satisfies~$A$;
\end{center}
equivalently, iff for every valuation~$v$ of \emph{PH} and every
set~$\PP$ of precomputed public atoms, $\PP^v$ satisfies the
sentence
\begin{gather*}
A \to (\comp[\Omega] \leftrightarrow \comp[\Omega']).
\end{gather*}
It remains to observe that an interpretation of~$\sigma_0$ is standard
for~$PH$ iff it can be represented in the form~$\PP^v$ for some~$\PP$ and~$v$.

\section{Conclusion}

In this paper, we defined when an io-program is considered locally tight for
an input, and proved two properties of locally tight programs: io-models
can be characterized in terms of completion~(Theorem~\ref{thm2}), and
the \anthemp\ algorithm can be used to verify equivalence (Theorem~\ref{thm3}).
The local tightness condition is satisfied for many nontight programs that
describe dynamic domains.

Theorem~\ref{thm2} can be used also to justify using the \anthem\ algorithm
\cite{fan20} for verifying a locally tight program with respect to a
specification expressed by first-order sentences.

Future work will include extending the \anthemp\ algorithm and
Theorem~\ref{thm3} to ASP programs that involve aggregate expressions,
using the ideas of Fandinno et al.~\citeyear{fan22a}
and Lifschitz \citeyear{lif22a}.

\section*{Acknowledgements} 

Many thanks to Yuliya Lierler and to
the anonymous referees for their valuable comments.

\bibliographystyle{acmtrans}
\bibliography{bib,procs}

\appendix

\section{Many-Sorted Formulas} \label{app:many-sorted}

A many-sorted signature consists of symbols of three
kinds---\emph{sorts}, \emph{function constants}, and
\emph{predicate constants}.  A reflexive and transitive \emph{subsort}
relation is defined on the set of sorts.
A tuple $s_1,\dots,s_n$ ($n\geq 0$) of \emph{argument sorts} is assigned
to every function constant and to every predicate constant; in addition, a
\emph{value sort} is assigned to every function constant.
Function constants with $n=0$ are called \emph{object constants}.

We assume that for every sort, an infinite sequence of \emph{object
  variables} of that sort is chosen.  \emph{Terms} over a many-sorted
signature~$\sigma$ are defined recursively:
\begin{itemize}
\item
  object constants and object variables of a sort~$s$ are terms of sort~$s$;
\item
  if~$f$ is a function constant with argument sorts $s_1,\dots,s_n$
  ($n>0$) and value sort~$s$, and $t_1,\dots,t_n$ are terms
  such that the sort of $t_i$ is a subsort of~$s_i$ ($i=1,\dots,n$), then
  $f(t_1,\dots,t_n)$ is a term of sort~$s$.
\end{itemize}
\emph{Atomic formulas} over~$\sigma$ are
\begin{itemize}
\item
  expressions of the form $p(t_1,\dots,t_n)$, where~$p$ is a predicate
  constant and $t_1,\dots,t_n$ are terms such that their sorts are
  subsorts of the argument sorts~$s_1,\dots,s_n$ of~$p$, and
\item
  expressions of the form $t_1=t_2$, where $t_1$ and~$t_2$ are terms such that
  their sorts have a common supersort.
\end{itemize}
\emph{First-order formulas} over~$\sigma$ are formed from atomic formulas
and the 0-place connective~$\bot$ (falsity) using the binary
connectives $\land$, $\lor$, $\to$ and the quantifiers $\forall$, $\exists$.
The other connectives are treated as abbreviations: $\neg F$ stands for
$F\to\bot$ and $F\lrar G$ stands for $(F\to G)\land (G\to F)$.
\emph{First-order sentences} are first-order formulas without free variables.
The \emph{universal closure} $\uc F$ of a formula~$F$ is the
sentence~$\forall{\bf X}\,F$, where~$\bf X$ is the list of free variables
of~$F$.

An \emph{interpretation}~$I$ of a signature~$\sigma$ assigns
\begin{itemize}
  \item
  a non-empty \emph{domain} $|I|^s$ to every sort~$s$ of~$I$, so that
  $|I|^{s_1}\subseteq |I|^{s_2}$ whenever~$s_1$ is a subsort of~$s_2$,
  \item
a function~$f^I$ from $|I|^{s_1}\times\cdots\times|I|^{s_n}$ to $|I|^s$ to
every function constant~$f$ with argument sorts $s_1,\dots,s_n$ and
value sort~$s$, and
\item a Boolean-valued function~$p^I$ on
$|I|^{s_1}\times\cdots\times|I|^{s_n}$ to every predicate constant~$p$
with argument sorts $s_1,\dots,s_n$.
\end{itemize}

If~$I$ is an interpretation of a signature~$\sigma$ then
by~$\sigma^I$ we denote the signature
obtained from~$\sigma$ by adding, for every element $d$ of a domain $|I|^s$,
its \emph{name} $d^*$ as an object constant of sort~$s$.  The
interpretation~$I$ is extended to $\sigma^I$ by defining $(d^*)^I=d$.
The value $t^I$ assigned by an interpretation~$I$ of~$\sigma$ to a ground
term~$t$ over~$\sigma^I$ and the
 satisfaction relation between an interpretation of~$\sigma$ and a
sentence over~$\sigma^I$ are defined recursively, in the usual
way.  A \emph{model} of a set~$\Gamma$ of sentences is an interpretation
that satisifes all members of~$\Gamma$.

If $\boldd$ is a tuple $d_1,\dots,d_n$ of elements of domains of~$I$
then~$\boldd^*$ stands for the tuple $d_1^*,\dots,d_n^*$ of their names.
If $\boldt$ is a tuple $t_1,\dots,t_n$ of ground terms then~$\boldt^I$
stands for the tuple $t_1^I,\dots,t_n^I$ of values assigned to them
by~$I$. 

\section{Stable models of many-sorted theories}\label{app:ht}

The definition of a stable model given below is based on the first-order
logic of
here-and-there, which was introduced by Pearce and Valverde~\citeyear{pea04}
and Ferraris \emph{et al.}~\citeyear{fer09}, and then extended to many-sorted
formulas \cite{fan23}.

Consider a many-sorted signature~$\sigma$
with its predicate constants partitioned into \emph{intensional}
and \emph{extensional}.
For any interpretation~$I$ of~$\sigma$,~$I^\dar$ stands for the
set of atoms of the form $p(\boldd^*)$ with intensional~$p$ that are
satisfied by this interpretation.\footnote{In earlier publications, this
  set of atoms was denoted by $I^{int}$.  The symbol $I^\dar$ is more
  appropriate, because this operation is opposite to the operation
  $\J^\uparrow$, as discussed in Section~\ref{ssec:taustar.prop}.}

An \emph{HT\nobreakdash-interpretation} of~$\sigma$ is a pair
$\langle \HH,I\rangle$, where $I$ is an interpretation
of~$\sigma$, and $\HH$ is a subset of $I^\dar$.
(In terms of Kripke models with two sorts,~$I$ is the there-world, and~$\HH$
describes the intensional predicates in the here-world).
The satisfaction relation~$\modelsht$ between
HT\nobreakdash-interpretation $\langle \HH, I\rangle$ of~$\sigma$
and a sentence~$F$ over~$\sigma^I$ is defined recursively as follows:
\begin{itemize}
\item
  $\langle \HH, I\rangle \modelsht p(\boldt)$,
  where~$p$ is intensional, if $p((\boldt^I)^*)\in \HH$;
\item
  $\langle \HH, I\rangle \modelsht p(\boldt)$,
  where~$p$ is extensional, if $I \models p(\boldt)$;
\item
$\langle \HH, I\rangle \modelsht t_1=t_2$ if $t_1^I=t_2^I$;
\item
$\langle \HH, I\rangle \not\modelsht\bot$;
\item
$\langle \HH, I\rangle \modelsht F\land G$ if
$\langle \HH, I\rangle \modelsht F$ and
$\langle \HH, I\rangle \modelsht G$;
\item
$\langle \HH, I\rangle \modelsht F\lor G$ if
$\langle \HH, I\rangle \modelsht F$ or
$\langle \HH, I\rangle \modelsht G$;
\item
  $\langle \HH, I\rangle \modelsht F\to G$ if
  \begin{itemize}
  \item[(i)]
    $\langle \HH, I\rangle \not\modelsht F$ or $\langle \HH, I\rangle \modelsht G$,
    and
  \item[(ii)]
    $I \models F\to G$;
    \end{itemize}
\item
  $\langle \HH, I\rangle\modelsht\forall X\,F(X)$
 if $\langle \HH, I\rangle\modelsht F(d^*)$
  for each~$d\in|I|^s$, where~$s$ is the sort of~$X$;
\item
  $\langle \HH, I\rangle\modelsht\exists X\,F(X)$
 if $\langle \HH, I\rangle\modelsht F(d^*)$
  for some~$d\in|I|^s$, where~$s$ is the sort of~$X$.
\end{itemize}
If $\langle \HH, I\rangle \modelsht F$ holds, we say that $\langle \HH, I\rangle$ \emph{satisfies}~$F$ and that $\langle \HH, I\rangle$ is an \emph{HT\nobreakdash-model} of~$F$.
If two formulas have the same HT\nobreakdash-models then we say that they are \emph{HT\nobreakdash-equivalent}.

In the following proposition, we collected some properties of
this satisfaction relation that can be proved by induction.

\begin{proposition}\label{properties}
\begin{itemize}
\item[(a)]
If $\langle\HH,I \rangle\modelsht F$ then
\hbox{$I \models F$}.
\item[(b)]
For any sentence~$F$ that does not contain intensional symbols,
$\langle\HH,I \rangle\modelsht F$ iff ${I\models F}$.
\item[(c)]
For any subset~$\SSS$
of~$\HH$ such that the predicate symbols of its members do not occur in $F$,
$\langle\HH \setminus \SSS, I\rangle\modelsht F$
iff  $\langle\HH, I\rangle\modelsht F$.
\end{itemize}
\end{proposition}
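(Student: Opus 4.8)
The plan is to prove all three parts by structural induction on the sentence~$F$ over~$\sigma^I$, establishing part~(a) first, deducing the left-to-right direction of~(b) from it, and proving the right-to-left direction of~(b) and the equivalence in~(c) by two further inductions. For part~(a) the atomic base cases are immediate: if $p(\boldt)$ is intensional then $\langle\HH,I\rangle\modelsht p(\boldt)$ gives $p((\boldt^I)^*)\in\HH$, hence $p((\boldt^I)^*)\in I^\dar$ since $\HH\subseteq I^\dar$, so $I\models p(\boldt)$; if $p(\boldt)$ is extensional, or $F$ is an equality $t_1=t_2$, the clause defining~$\modelsht$ already reduces to classical satisfaction in~$I$; and~$\bot$ is vacuous. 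The cases $F_1\land F_2$, $F_1\lor F_2$, $\forall X\,G(X)$, $\exists X\,G(X)$ follow directly from the induction hypothesis, and the implication case is the easiest of all, because clause~(ii) in the definition of $\langle\HH,I\rangle\modelsht F_1\to F_2$ literally states $I\models F_1\to F_2$.

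For part~(b), the left-to-right direction is part~(a). For the converse I would induct on~$F$ under the restriction that~$F$ contains no intensional symbols: the atomic bases collapse because an extensional atom, an equality, and~$\bot$ depend only on~$I$; the conjunction, disjunction and quantifier steps transfer from the induction hypothesis; and for $F_1\to F_2$ I verify clause~(ii) directly from $I\models F_1\to F_2$, and clause~(i) by assuming $\langle\HH,I\rangle\modelsht F_1$, applying part~(a) to obtain $I\models F_1$ and hence $I\models F_2$, and then invoking the induction hypothesis for~$F_2$. For part~(c) I would induct on~$F$ while carrying the hypothesis that no predicate symbol occurring in a member of~$\SSS$ occurs in~$F$. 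The only base case with content is an intensional atom $p(\boldt)$: since~$p$ occurs in~$F$, no atom of~$\SSS$ has predicate symbol~$p$, so $p((\boldt^I)^*)\in\HH\setminus\SSS$ iff $p((\boldt^I)^*)\in\HH$; the other atomic cases and~$\bot$ are independent of~$\HH$ and~$\SSS$. In the compound cases one notes that any predicate symbol occurring in an immediate subformula of~$F$---or in $G(d^*)$, since substituting a name for a variable introduces no predicate symbols---also occurs in~$F$, so the induction hypothesis applies to each subformula with the same~$\SSS$, and the defining clauses of~$\modelsht$ transfer the equivalence upward.

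I do not expect a genuine obstacle: the proposition collects standard sanity checks about the many-sorted logic of here-and-there. The single point requiring care is the implication step in the right-to-left direction of~(b), where one must appeal to part~(a)---not to~(b)---for the antecedent~$F_1$, so that the argument is not circular, and use the induction hypothesis of~(b) only for the consequent~$F_2$. Beyond that the work is purely bookkeeping: dispatching each syntactic case in all three inductions and correctly propagating the side condition of~(c) to subformulas.
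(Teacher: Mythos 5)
Your proof is correct and matches the paper's approach: the paper offers no detailed argument, stating only that these properties ``can be proved by induction,'' which is exactly the structural induction you carry out. Your handling of the implication case in part~(b)---using part~(a) for the antecedent and the induction hypothesis only for the consequent---is sound (though invoking the induction hypothesis of~(b) for the smaller formula $F_1$ would also have been legitimate), and the propagation of the side condition in~(c) is handled properly.
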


About a model~$I$ of a set~$\Gamma$ of sentences
over~$\sigma^I$ we say it is \emph{stable} if, for
every proper subset~$\HH$ of~$I^\dar$, HT\nobreakdash-interpretation
$\langle \HH,I \rangle$ does not satisfy~$\Gamma$.
In application to finite sets of formulas with a single sort, this definition
of a stable model is
equivalent to the definition in terms of the operator~SM \cite{fer09}.

We say that~$I$ is \emph{pointwise stable} if, for
every element~$M$ of~$I^\dar$, $\langle I^\dar \setminus\{M\},I\rangle$
does not satisfy~$\Gamma$.




\end{document}